\documentclass[final,12pt]{colt2025} 


\title{Improved Algorithms for Effective Resistance Computation on Graphs}
\usepackage{times}
\usepackage{booktabs}
\usepackage{multirow}
\usepackage{algorithm,algorithmic}
\usepackage{tikz}
\usepackage{hyperref}
\hypersetup{
    colorlinks=true, 
    linkcolor=blue,  
    citecolor=blue,  
    urlcolor=blue    
}
\usepackage{amsfonts}
\usetikzlibrary{shapes.geometric}


\newcommand{\downlink}[1]{%
  \hyperlink{#1}{$\downarrow$}%
}




\coltauthor{%
 \Name{Yichun Yang} \Email{\href{mailto:yc.yang@bit.edu.cn}{yc.yang@bit.edu.cn}}\\
 \addr Beijing Institute of Technology
 \AND
 \Name{Rong-Hua Li} \Email{\href{mailto:lironghuabit@126.com}{lironghuabit@126.com}}\\
 \addr Beijing Institute of Technology%
 \AND
 \Name{Meihao Liao} \Email{\href{mailto:mhliao@bit.edu.cn}{mhliao@bit.edu.cn}}\\
 \addr Beijing Institute of Technology%
 \AND
 \Name{Guoren Wang} \Email{\href{mailto:wanggrbit@gmail.com}{wanggrbit@gmail.com}}\\
 \addr Beijing Institute of Technology%
}




\begin{document}

\maketitle
\newcommand{\ignore}[1]{}
\newcommand{\nop}[1]{}
\newcommand{\eat}[1]{}
\newcommand{\kw}[1]{{\ensuremath{\mathsf{#1}}}\xspace}
\newcommand{\kwnospace}[1]{{\ensuremath {\mathsf{#1}}}}
\newcommand{\stitle}[1]{\vspace{1ex} \noindent{\bf #1}}
\long\def\comment#1{}
\newcommand{\eop}{\hspace*{\fill}\mbox{$\Box$}}

\newtheorem{fact}{Fact}
\newtheorem{assumption}{Assumption}

\newcommand{\rank}{\kw{rank}}
\newcommand{\push}{\kw{CGD}}
\newcommand{\truncatepush}{\kw{Push}}
\newcommand{\hkrelax}{\kw{Hk\ Relax}}
\newcommand{\hkpush}{\kw{Hk\ Push}}
\newcommand{\agp}{\kw{AGP}}
\newcommand{\tea}{\kw{TEA}}
\newcommand{\teaplus}{\kw{TEA+}}
\newcommand{\ppr}{\kw{PPR}}
\newcommand{\ssppr}{\kw{SSPPR}}
\newcommand{\hkpr}{\kw{HKPR}}
\newcommand{\powerpush}{\kw{PwPush}}
\newcommand{\pwpush}{\kw{PowerPush}}
\newcommand{\powerpushsor}{\kw{PwPushSOR}}
\newcommand{\ltwocheb}{\kw{ChebyPower}}

\newcommand{\chebpush}{\kw{ChebyPush}}
\newcommand{\powermethod}{\kw{PM}}
\newcommand{\lanczos}{\kw{Lz}}
\newcommand{\lzpush}{\kw{LzPush}}
\newcommand{\bipush}{\kw{BiPush}}
\newcommand{\geer}{\kw{GEER}}
\newcommand{\rw}{\kw{RW}}
\newcommand{\lv}{\kw{LV}}
\newcommand{\lewalk}{\kw{LE\textrm{-}Walk}}
\newcommand{\bippr}{\kw{Bippr}}
\newcommand{\fora}{\kw{FORA}}
\newcommand{\speedppr}{\kw{SpeedPPR}}
\newcommand{\foralv}{\kw{FORALV}}
\newcommand{\speedlv}{\kw{SpeedLV}}
\newcommand{\setpush}{\kw{SetPush}}
\newcommand{\chopper}{\kw{CHOPPER}}

\newcommand{\dblp}{\kw{Dblp}}
\newcommand{\asskitter}{\kw{As\textrm{-}Skitter}}
\newcommand{\orkut}{\kw{Orkut}}
\newcommand{\youtube}{\kw{Youtube}}
\newcommand{\livejournal}{\kw{LiveJournal}}
\newcommand{\roadca}{\kw{RoadNet\textrm{-}CA}}
\newcommand{\roadpa}{\kw{RoadNet\textrm{-}PA}}
\newcommand{\roadtx}{\kw{RoadNet\textrm{-}TX}}
\newcommand{\powergrid}{\kw{powergrid}}
\newcommand{\pokec}{\kw{Pokec}}
\newcommand{\twitter}{\kw{Twitter}}
\newcommand{\friendster}{\kw{Friendster}}

\begin{abstract}

Effective Resistance (ER) is a fundamental tool in various graph learning tasks. In this paper, we address the problem of efficiently approximating ER on a graph $\mathcal{G}=(\mathcal{V},\mathcal{E})$ with $n$ vertices and $m$ edges. First, we focus on local online-computation algorithms for ER approximation, aiming to improve the dependency on the approximation error parameter $\epsilon$. Specifically, for a given vertex pair $(s,t)$, we propose a local algorithm with a time complexity of $\tilde{O}(\sqrt{d}/\epsilon)$ to compute an $\epsilon$-approximation of the $s,t$-ER value for expander graphs, where $d=\min \{d_s,d_t\}$. This improves upon the previous state-of-the-art, including an $\tilde{O}(1/\epsilon^2)$ time algorithm based on random walk sampling by Andoni et al. (ITCS'19) and Peng et al. (KDD'21). Our method achieves this improvement by combining deterministic search with random walk sampling to reduce variance. Second, we establish a lower bound for ER approximation on expander graphs. We prove that for any $\epsilon\in (0,1)$, there exist an expander graph and a vertex pair $(s,t)$ such that any local algorithm requires at least $\Omega(1/\epsilon)$ time to compute the $\epsilon$-approximation of the $s,t$-ER value. Finally, we extend our techniques to index-based algorithms for ER computation. We propose an algorithm with $\tilde{O}(\min \{m+n/\epsilon^{1.5},\sqrt{nm}/\epsilon\})$ processing time, $\tilde{O}(n/\epsilon)$ space complexity and $O(1)$ query complexity, which returns an $\epsilon$-approximation of the $s,t$-ER value for any $s,t\in \mathcal{V}$ for expander graphs. Our approach improves upon the state-of-the-art $\tilde{O}(m/\epsilon)$ processing time by Dwaraknath et al. (NeurIPS'24) and the $\tilde{O}(m+n/\epsilon^2)$ processing time by Li and Sachdeva (SODA'23).

\end{abstract}


\section{Introduction}\label{sec:intro}
Given an undirected, unweighted graph $\mathcal{G}$\footnote{Though this paper primarily focus on unweighted graphs, all the algorithms and analysis can be directly generalized to weighted graphs with $\min_{e}{w_e}=\tilde{\Omega}(1)$ and $\max_e{w_e}=\tilde{O}(1)$, where $w_e$ denotes the weight of an edge $e$.}, the Effective Resistance (ER) $r_{\mathcal{G}}(s,t)$ of two vertices $s$ and $t$ is the potential difference between $s$ and $t$, when a unit electric flow is sent from $s$ to $t$. Formally, given the graph Laplacian matrix $\mathbf{L}$ and its pseudo-inverse $\mathbf{L}^\dagger$, the $s,t$ ER value is defined as $r_{\mathcal{G}}(s,t)=(\mathbf{e}_s-\mathbf{e}_t)^T\mathbf{L}^\dagger(\mathbf{e}_s-\mathbf{e}_t)$, where $\mathbf{e}_s$ (resp., $\mathbf{e}_t$) denotes the one-hot vector that takes value $1$ at $s$ (resp., $t$) and $0$ elsewhere. ER has a wide range of applications in graph algorithms and graph learning tasks, including the design of maximum flow algorithms ~\cite{van2022faster,madry2016computing}, graph clustering ~\cite{alev2017graph,saito2023multi}, graph sparsification ~\cite{spielman2008graph}, counting random spanning trees ~\cite{li2023new} and understanding oversquashing in GNNs ~\cite{black2023understanding,topping2021understanding}. As a result, designing efficient algorithms for ER computation has become a fundamental problem, attracting significant attention in recent years ~\cite{andoni2018solving,peng2021local,yang2023efficient,liao2023resistance,li2023new,dwaraknath2024towards}. In this paper, we focus on computing an $\epsilon$-approximation of ER on a given set of vertex pairs $S\subseteq \mathcal{V}\times \mathcal{V}$. We begin by formally defining the ER approximation problem.

\begin{definition}
    Given undirected, unweighted graph $\mathcal{G}=(\mathcal{V},\mathcal{E})$, a set of vertex pairs $S\subseteq \mathcal{V}\times \mathcal{V}$, a parameter $\epsilon\in (0,1)$, one requires to find an approximation $\hat{r}_{\mathcal{G}}\in \mathbb{R}^S$ such that $ \hat{r}_{\mathcal{G}}(s,t)\approx_\epsilon r_{\mathcal{G}}(s,t)$\footnote{In this paper, we say $x\approx_\epsilon y$ if $(1-\epsilon)y\leq x\leq (1+\epsilon)y$.} with high probability (w.h.p.) for any vertex pairs $(s,t)\in S$. We refer to such $\hat{r}_{\mathcal{G}}\in \mathbb{R}^S$ as an $\epsilon$-approximation of $r_{\mathcal{G}}$ on $S$. 
\end{definition}

The existing algorithms for computing $\epsilon$-approximation of ER can be broadly classified into two categories: (i) online-computation algorithms; and (ii) index-based algorithms. Specifically, for the given graph $\mathcal{G}$, online-computation algorithms directly calculates the $s,t$-ER value of a given vertex pair $s,t$. In contrast, index-based algorithms preprocess the graph to construct a data structure, which is then used to efficiently query the $s,t$-ER value for any given vertex pair  $s,t$. The runtime of the state-of-the-art algorithms are summarized in Table \ref{tab:alg_single_pair} and Table \ref{tab:alg_multiple_pair}. In this paper, we assume all the algorithms support the classic adjacency graph model ~\cite{ron2019sublinear}, which allows the following three types of queries on a graph $\mathcal{G}=(\mathcal{V},\mathcal{E})$ in constant time: 

\begin{itemize}

\item Degree query (given any $u\in \mathcal{V}$, get the degree $deg(u)$ in constant time); 

\item Neighbor query (given any $u\in \mathcal{V}$, get its $i$-th neighbor $Neigh(u,i)$ in constant time); 

\item Jump query (uniformly sample any $u\in \mathcal{V}$ in constant time).
\end{itemize}

\stitle{Online-computation algorithms.} We begin by focusing on online-computation algorithms to calculate ER value of a single vertex pair. For a given graph $\mathcal{G}$, given a vertex pair $(s,t)$, by the definition $r_{\mathcal{G}}(s,t)=(\mathbf{e}_s-\mathbf{e}_t)^T\mathbf{L}^\dagger(\mathbf{e}_s-\mathbf{e}_t)$. Using a nearly linear-time Laplacian solver ~\cite{sachdeva2014faster}, one can compute an approximation $\hat{r}_{\mathcal{G}}(s,t)$ such that $ \hat{r}_{\mathcal{G}}(s,t)\approx_\epsilon r_{\mathcal{G}}(s,t)$ in $O(m \log ^{1/2} n \log \frac{1}{\epsilon})$ time, where $n$ is the number of vertices, $m$ is the number of edges, $\epsilon$ is the approximation error parameter. In recent years, there has been growing interests in the local online-computation of ER, which aims to compute $ \hat{r}_{\mathcal{G}}(s,t)\approx_\epsilon r_{\mathcal{G}}(s,t)$ by just exploring a small portion of the graph. 
These algorithms typically rely on the assumption that the graph is an expander.  ~\cite{andoni2018solving} proposed an algorithm that computes an $\epsilon$-approximation of  $r_{\mathcal{G}}(s,t)$ in $\tilde{O}(1/\epsilon^2)$\footnote{The $\tilde{O}(.)$ notation represents the complexity omitting the $\log$ term of $n,m$ and $\epsilon$.} time for $d$-regular expander graphs. This result was later extended to expander graphs with unbounded degrees by \cite{peng2021local} and ~\cite{yang2023efficient}. However, it remains an open question whether the $\tilde{O}(1/\epsilon^2)$ runtime can be improved for $\epsilon$-approximating the ER value for local algorithms. In this paper, we address this problem by proposing a local algorithm for ER computation that achieves a lower time complexity dependency on $\epsilon$. The key idea of our approach is to combine deterministic search with random walks to reduce the variance produced by random walk sampling.

\begin{theorem}[\downlink{proof_of_theorem:bidir_complexity}]\label{thm:bidir_complexity}
    Given an undirected, unweighted expander graph $\mathcal{G}=(\mathcal{V},\mathcal{E})$, $s$, $t\in \mathcal{V}$, $\epsilon\in (0,1)$. There exists an algorithm that outputs the approximation $ \hat{r}_{\mathcal{G}}(s,t)\approx_\epsilon r_{\mathcal{G}}(s,t)$ in $\tilde{O}(\sqrt{d}/ \epsilon)$ time, where $d=\min \{d_s,d_t\}$.
\end{theorem}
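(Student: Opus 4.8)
The starting point is a truncated random-walk expansion of the effective resistance. With $\mathbf{P}=\mathbf{D}^{-1}\mathbf{A}$ (or its lazy version, to kill periodicity), the identity $\mathbf{L}^\dagger_{uv}=\sum_{k\ge 0}\!\left(p_k(u,v)-\frac{1}{2m}\right)$, where $p_k(u,v):=(\mathbf{P}^k)_{uv}/d_v=(\mathbf{W}^k)_{uv}/\sqrt{d_ud_v}$ and $\mathbf{W}=\mathbf{D}^{-1/2}\mathbf{A}\mathbf{D}^{-1/2}$, together with the fact that $\mathbf{e}_s-\mathbf{e}_t$ sums to zero (so the $1/2m$ terms cancel), yields $r_{\mathcal{G}}(s,t)=\sum_{k\ge 0}\!\left(p_k(s,s)+p_k(t,t)-2p_k(s,t)\right)$. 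Because $\mathcal{G}$ is an expander, $|(\mathbf{W}^k)_{uv}-\sqrt{d_ud_v}/2m|\le\lambda^k$ with $1-\lambda=\Omega(1)$, so the $k$-th summand has absolute value $O(\lambda^k/d)$ and the series can be truncated at $K=O(\log(n/\epsilon))$ with error $o(\epsilon/d)$. I also use the standard expander estimate $r_{\mathcal{G}}(s,t)=\Theta(1/d_s+1/d_t)=\Theta(1/d)$, so it suffices to estimate the truncated sum to additive accuracy $\Theta(\epsilon/d)$, and its value is governed by the small-$k$ terms, for which $p_{2j}(s,s)$ can be as large as $\Theta(1/d)$.

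\emph{Rewriting in ``collision'' form and the estimator.} Let $q^{(j)}_u=(\mathbf{P}^j)_{u,\cdot}$ be the $j$-step distribution from $u$. Using reversibility, the even term $p_{2j}(s,s)+p_{2j}(t,t)-2p_{2j}(s,t)$ equals the degree-weighted distance $\sum_v (q^{(j)}_s(v)-q^{(j)}_t(v))^2/d_v$, and the odd term has an analogous bilinear form over adjacent pairs $v\sim w$. Each ingredient such as $\sum_v q^{(j)}_s(v)^2/d_v=p_{2j}(s,s)$ is a degree-weighted \emph{collision probability}, for which the pairwise statistic ``take two independent step-$j$ endpoints $v,v'$ of walks from the appropriate sources and return $\mathbf 1[v=v']/d_v$'' is unbiased. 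I therefore run $N=\tilde\Theta(\sqrt{d}/\epsilon)$ length-$K$ (lazy) random walks from $s$ and from $t$ — simulable in the adjacency model — and, for every $j\le K/2$, form the $U$-statistic over all pairs of step-$j$ positions, combined with the signs above, to estimate the truncated sum. The key point for the improved $\epsilon$-dependence is the birthday phenomenon: estimating a collision probability of magnitude $\Theta(1/d)$ to relative error $\epsilon$ requires only $\tilde O(\sqrt d/\epsilon)$ samples (the $\Omega(N^2)$ pairs yield $\Omega(1/\epsilon^2)$ collisions once $N=\tilde\Omega(\sqrt d/\epsilon)$), rather than the $\tilde O(d/\epsilon^2)$ of a direct estimator. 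Two issues with a purely random estimator force in the deterministic component: the heaviest layers (small $j$, and the low-degree vertices of $N(s)$, which contribute large terms $1/d_v$) require $N=\tilde\Omega(\sqrt d)$ just to be seen and carry most of the variance. So I additionally run a deterministic search — a bounded-radius exploration / forward push from the lower-degree endpoint — large enough to compute the contributions of these heavy vertices/layers \emph{exactly}, using the sampled walks only for the residual ``light'' part; this ``deterministic search $+$ random walks'' combination is what drives the variance down. The exploration radius is tuned to balance the exploration cost (ball volume) against the sampling cost ($\sim N$), giving $\tilde O(\sqrt d/\epsilon)$ overall; carrying out the deterministic phase from the smaller-degree endpoint is exactly why $d=\min\{d_s,d_t\}$ enters.

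\emph{Analysis.} It remains to show $\mathrm{Var}(\hat r)=\tilde O(\epsilon^2/d^2)$, after which Chebyshev plus a median-of-$O(\log n)$ boosting (or Bernstein directly, since the per-layer estimator is bounded) gives the w.h.p.\ $\epsilon$-approximation, unbiasedness being inherited up to the $o(\epsilon/d)$ truncation error. The variance bound requires controlling, for each layer $j\le K$: the pair-collision second moment $\sum_v q^{(j)}_s(v)^2/d_v^2$ and the triple-overlap covariance $\sum_v q^{(j)}_s(v)^3/d_v^2$ (via expander mixing and degree bounds); the covariances \emph{across} layers $j$, since the same $N$ walks are reused for all $j$ (a Cauchy--Schwarz/telescoping argument over the $K=\tilde O(1)$ layers); and the interaction of the deterministic and random parts, ensuring the residual light part of every layer has collision probability small enough that $N=\tilde\Theta(\sqrt d/\epsilon)$ gives it relative error $O(\epsilon)$. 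The cost is $O(NK)$ for the walks, $\tilde O(N)$ for hashing collisions at each step, and $\tilde O(\sqrt d/\epsilon)$ for the deterministic exploration.

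\emph{Main obstacle.} The delicate part is this simultaneous variance control: choosing the deterministic radius so that \emph{every} surviving layer $j$ contributes error $o(\epsilon/d)$, while the degree heterogeneity of a general (non-regular, unbounded-degree) expander and the cross-layer correlations induced by reusing walks do not inflate the bound — together with the bookkeeping for the $d_s\neq d_t$ case and the logarithmic factors hidden in $\tilde O(\cdot)$ — is where most of the work lies.
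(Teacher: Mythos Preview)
Your route is genuinely different from the paper's. The paper does \emph{not} use a collision/$U$-statistic estimator at all. It runs a coordinate-gradient-descent push (Algorithm~\ref{algo:push}) from \emph{each} of $s$ and $t$ with threshold $r_{\max}=\epsilon/\sqrt d$, obtaining approximations $\mathbf q_{L,u}$ and residual vectors $\mathbf r_{i,u}$ satisfying the exact invariant $\mathbf p_{L,u}=\mathbf q_{L,u}+\sum_{i\le L}\sum_{k\ge i}(\tfrac12\mathbf I+\tfrac12\mathbf P)^{k-i}\mathbf r_{i,u}$ with $\mathbf r_{i,u}(w)/d_w\le r_{\max}/L^2$ everywhere. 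Then lazy walks of length $L$ are run from $s$ and from $t$, and at each visited vertex $w$ the walk \emph{linearly} accumulates $\sum_i \mathbf r_{i,u}(w)/d_w\le r_{\max}/L$. Because this per-step increment is uniformly bounded by the push threshold, the variance bound is a one-line Chebyshev/median-of-means argument, and balancing push cost $\tilde O(L^3/r_{\max})$ against walk cost $\tilde O(L^3 r_{\max}d/\epsilon^2)$ yields $\tilde O(\sqrt d/\epsilon)$. This is the bidirectional-PPR mechanism of Lofgren et al., transplanted to the truncated ER expansion; no birthday paradox enters.

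Your collision-based plan has a concrete gap at exactly the point you label the ``main obstacle.'' The $U$-statistic variance for $\sum_v q(v)^2/d_v$ contains a term of order $N^{-1}\sum_v q(v)^3/d_v^2$. With the lazy walk you adopt, $q^{(j)}_s(s)\ge 2^{-j}$, so for every $j=O(1)$ this third-moment term is $\Omega(1/d_s^2)$ and forces $N=\Omega(1/\epsilon^2)$ rather than $\sqrt d/\epsilon$. Your deterministic exploration does not obviously cure this: the walks sample from $q^{(j)}$, not from $q^{(j)}-q^{(j)}_{\text{heavy}}$, and a quadratic estimator does not decompose additively under $q=q_{\text{heavy}}+q_{\text{light}}$ (the cross term $2\sum_v q_{\text{heavy}}(v)q_{\text{light}}(v)/d_v$ still requires an estimator whose increments you have not bounded). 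On non-regular expanders the same issue persists even for non-lazy walks, via low-degree neighbours of $s$. The paper's linear estimator sidesteps this entirely: the push produces a residual that is \emph{uniformly} small in $\ell_\infty(\mathbf D^{-1})$, the walk picks it up additively, and no quadratic cross terms arise. If you want to salvage the collision idea you would need to specify precisely how the push output is combined with the sampled walks so that the surviving random part genuinely has $\|\cdot\|_\infty$-type control; as written, the plan stops short of that.
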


There is an additional remark that without the assumption on expander graphs, the time complexity of our algorithm is $\tilde{O}(\kappa(\mathcal{L})^3 \sqrt{d}/\epsilon)$, where $\kappa(\mathcal{L})$ is the condition number of the normalized Laplacian matrix $\mathcal{L}$ (see Section \ref{sec:preliminary} for Definition). This condition number dependency is the same as the previous local algorithms ~\cite{andoni2018solving,peng2021local,yang2023efficient}.

\begin{table*}[t!]
	\centering
	\caption{ER computation of single vertex pair} \label{tab:alg_single_pair} 
	\scalebox{1}{
		\begin{tabular}{c c c}
			\toprule
			\multicolumn{1}{c}{Methods} & \multicolumn{1}{c}{Runtime} & Assumption\\
			\midrule
             ~\cite{sachdeva2014faster}& $O(m \log ^{1/2} n \log \frac{1}{\epsilon})$ & none\\
            ~\cite{andoni2018solving}& $\tilde{O}(1/\epsilon^2)$ & expanders\\
            ~\cite{peng2021local}& $\tilde{O}(1/\epsilon^2)$ & expanders\\
            ~\cite{yang2023efficient}& $\tilde{O}(1/\epsilon^2)$ & expanders\\
            This Paper& $\tilde{O}(\sqrt{d}/\epsilon)$ & expanders\\
            \midrule
             Lower Bound (This Paper) & $\Omega(1/\epsilon)$ & expanders\\
            \bottomrule	
		\end{tabular}
	}
\end{table*}

\stitle{Lower bounds.} We also study the lower bound for the ER computation of a single pair. Specifically, we prove that the  time complexity of any local online-computation algorithm for approximating ER is at least $\Omega(1/\epsilon)$, even holds for  expander graphs. This result implies that the $1/\epsilon$ dependency on the approximation parameter $\epsilon$ is the best possible, and the gap between our upper bound and the lower bound is only a factor $\sqrt{d}$.

\begin{theorem}[\downlink{proof_of_theorem:lower_bound}]\label{thm:lower_bound}
    Given any $ \epsilon\in (0,1)$, there exists an expander graph $\mathcal{G}=(\mathcal{V},\mathcal{E})$, two vertices $s$ and $t$, such that for any (randomized) local algorithm that supports the adjacency model computes the approximation $ \hat{r}_{\mathcal{G}}(s,t)\approx_\epsilon r_{\mathcal{G}}(s,t)$ with success probability $\geq 2/3$ requires $\Omega(1/\epsilon)$  queries.
\end{theorem}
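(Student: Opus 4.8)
The plan is to reduce from a distinguishing problem: we build a family of expander graphs (equivalently a hard distribution $\mathcal{D}$, to which Yao's minimax applies) carrying an $\epsilon$-scale perturbation localized at the pair $(s,t)$ that is statistically invisible to any algorithm reading $o(1/\epsilon)$ of the neighbor lists. Fix $N=\lfloor 1/(3\epsilon)\rfloor$ and let $\mathcal{G}_0=H$ be an $N$-regular graph on $n$ vertices, $n$ sufficiently large, drawn so that $H$ is a good expander and all radius-$O(\log n)$ neighborhoods of $s$ and $t$ are tree-like and mutually far; a random $N$-regular graph has these properties w.h.p., and we fix two such vertices $s,t$, which in particular are non-adjacent with no common neighbor. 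Let $\mathcal{G}_1 = H-\{s,a\}-\{t,b\}+\{s,t\}+\{a,b\}$, where $a$ is a uniformly random neighbor of $s$ and $b$ a uniformly random neighbor of $t$; this double edge swap preserves every vertex degree, so $\mathcal{G}_1$ is again an $N$-regular expander w.h.p. The distribution $\mathcal{D}$ outputs $\mathcal{G}_0$ or $\mathcal{G}_1$ with probability $1/2$ each, together with the internal randomness of $H$, of $a,b$, and of the orderings of all neighbor lists.

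First I would pin down the resistance gap. Let $\phi=\mathbf{L}^\dagger(\mathbf{e}_s-\mathbf{e}_t)$ be the potential of a unit $s\to t$ flow in $H$; the standard edge-perturbation identities (Sherman--Morrison on $\mathbf{L}^\dagger$) say that deleting an edge $\{x,y\}$ increases $r_{\mathcal{G}}(s,t)$ by $(\phi(x)-\phi(y))^2/(1-r_{\mathcal{G}}(x,y))$ and adding one decreases it by $(\phi(x)-\phi(y))^2/(1+r_{\mathcal{G}}(x,y))$. On a good expander the potential concentrates on $s,t$: one has $\phi(s)=(1+o(1))/N$, $\phi(t)=-(1+o(1))/N$, $\phi(s)-\phi(t)=r_{\mathcal{G}_0}(s,t)=(2+o(1))/N$, every vertex outside $\{s,t\}$ — in particular $a,b$ — has potential $O(1/N^2)$, and the harmonic equation at a neighbor of $s$ whose other neighbors are far from $t$ forces $\phi(s)-\phi(a)=(1+o(1))/N$ for every such neighbor. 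Summing the four perturbations (the drift of $\phi$ across the four steps contributing only $O(1/N^3)$), the two deletions contribute $(1+o(1))/N^2$ each, the planted edge $\{s,t\}$ contributes $-(4+o(1))/N^2$, and $\{a,b\}$ contributes $O(1/N^4)$, so $r_{\mathcal{G}_1}(s,t)-r_{\mathcal{G}_0}(s,t)=-(2+o(1))/N^2$. Hence $(1-\epsilon)r_{\mathcal{G}_0}(s,t)-(1+\epsilon)r_{\mathcal{G}_1}(s,t)=(2+o(1))/N^2-\epsilon\cdot(4+o(1))/N=(6+o(1))\epsilon^2>0$ for small $\epsilon$, so the intervals $[(1-\epsilon)r_{\mathcal{G}_i}(s,t),(1+\epsilon)r_{\mathcal{G}_i}(s,t)]$, $i\in\{0,1\}$, are disjoint. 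Any output that is $\approx_\epsilon r_{\mathcal{G}_I}(s,t)$ thus reveals the bit $I\in\{0,1\}$, so an $\epsilon$-approximation algorithm correct with probability $\geq 2/3$ determines $I$ with probability $\geq 2/3$ over $\mathcal{D}$.

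Second I would prove indistinguishability. Couple $\mathcal{G}_0$ and $\mathcal{G}_1$ so that they share $H$, the vertices $a,b$, and all list orderings. All degrees coincide globally, so degree queries never help, and a jump query returns a uniform vertex in both graphs; hence every query returns the same answer under $\mathcal{G}_0$ and $\mathcal{G}_1$ unless the algorithm issues one of at most four \emph{critical} neighbor queries: the slot holding $\{s,a\}$ in $N_{\mathcal{G}_0}(s)$ (equivalently $\{s,t\}$ in $N_{\mathcal{G}_1}(s)$), the analogous slot in $N(t)$, and the two analogous slots in $N(a)$ and $N(b)$. The first two slots are uniform in $[N]$, so each query hits one of them with probability $\leq 1/N$; and the algorithm cannot probe the critical slot of $N(a)$ or $N(b)$ without having first reached $a$ or $b$, which — as $a,b$ are uniformly random vertices — requires either hitting one of the former slots or landing on $a,b$ by a jump query (probability $O(1/n)$). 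A union bound over the $q$ adaptive queries bounds the total variation distance between the transcripts under $\mathcal{G}_0$ and $\mathcal{G}_1$ by $O(q/N)+o(1)$, so no $q$-query algorithm determines $I$ with probability above $1/2+O(q/N)+o(1)$. If $q=o(1/\epsilon)=o(N)$ this contradicts the $\geq 2/3$ guarantee above, and therefore $q=\Omega(1/\epsilon)$.

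The resistance-gap analysis is where I expect the real work to lie. The textbook expander estimate $r_{\mathcal{G}}(s,t)\approx 1/d_s+1/d_t$ has relative error of the order of the second normalized eigenvalue, which is $\Theta(\sqrt\epsilon)$ when $N=\Theta(1/\epsilon)$ — far too coarse to resolve the $\Theta(\epsilon)$-relative gap between $r_{\mathcal{G}_0}$ and $r_{\mathcal{G}_1}$. The remedy is never to apply that estimate to a single graph but to evaluate the \emph{difference} via the perturbation identities; this requires controlling $\phi$ at the \emph{randomly chosen} perturbation endpoints $a,b$ to relative precision $o(1)$, which needs a genuine quantitative argument (the harmonic equation on the near-tree-like neighborhoods of $s$ and $t$, plus an averaging bound such as $N^{-1}\sum_{a\in N(s)}\phi(a)^2\leq N^{-1}\|\phi\|^2=O(1/N^3)$) rather than a black-box citation. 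A secondary point, handled by the coupling above but needing care, is that the jump queries and adaptivity of the adjacency model give no shortcut to the planted edge; passing from the distributional bound to the ``there exists an expander'' form of the statement is then the routine Yao direction.
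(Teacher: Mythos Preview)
Your approach is correct and takes a genuinely different route from the paper. The paper gives $s$ a large degree $d_s\gg 1/\epsilon$ and, in the two instances, routes either all of $N(s)$ or only a $(1-\epsilon)$-fraction of it into a $d$-regular expander block $S_1$, with the remaining $\epsilon d_s$ neighbors sent to a block $S_2$ of otherwise-isolated vertices, each connected only to $t$; the $\Theta(\epsilon)$ relative resistance gap then drops out of an explicit parallel-resistor computation, $r_{S_1}(s,t)=\tfrac{1}{x}+O(\tfrac{1}{d_s x})$ versus $r_{S_2}(s,t)=\tfrac{2}{d_s-x}$, with essentially no spectral work, and the query lower bound becomes ``find one of $\epsilon d_s$ planted neighbors among $d_s$.'' Your single double-swap in a random $N$-regular graph with $N=\Theta(1/\epsilon)$, analyzed through four sequential Sherman--Morrison updates, buys a more canonical host graph at the price of exactly the delicate potential estimate you flag: controlling $\phi(s)-\phi(a)$ to $o(1)$ relative error and tracking the $O(1/N^3)$ drift across perturbations. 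The paper's parallel-resistor trick bypasses this entirely. One small correction to your indistinguishability step: $a$ is a uniformly random \emph{neighbor of $s$}, not a uniformly random vertex of $[n]$, so the claim that the algorithm cannot reach $a$ except via the critical slot of $N(s)$ or a jump query really rests on the locally-tree-like structure you already assumed (within the tree radius, every path from the explored region to $a$ or to $a$'s other neighbors passes through the edge $\{s,a\}$), not on $a$ being uniform in $[n]$.
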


We note that the conditional lower bounds by ~\cite{dwaraknath2024towards} implies a $\Omega(n^{2}/\sqrt{\epsilon})$ lower bound for all pairs ER approximation (i.e., $\epsilon$-approximates $r_{\mathcal{G}}(s,t)$ for all $s,t\in \mathcal{V}$) for non-fast matrix multiplication (FMM) algorithms. This further implies a $\Omega(1/\sqrt{\epsilon})$ lower bound for ER approximation of a single pair. However, by Theorem \ref{thm:lower_bound}, we prove a stronger lower bound $\Omega(1/\epsilon)$. Furthermore, our construction is simpler and does not rely on the non-FMM assumption.

\stitle{Index-based algorithms.} We show that our techniques can be generalized to index-based ER computations. Following ~\cite{li2023new,dwaraknath2024towards}, we define the ER sketch algorithm as follows.

\begin{definition}{(ER sketch)}
    Given undirected, unweighted $\mathcal{G}=(\mathcal{V},\mathcal{E})$ and $\epsilon\in (0,1)$, We call a (randomized) algorithm an $(T_p,T_q,S_t)$-ER sketch algorithm if in $O(T_p)$ processing time it creates a binary string of length $O(S_t)$, from which when queried with $\forall s,t\in \mathcal{V}$, it outputs the approximation $\hat{r}_{\mathcal{G}}(s,t)$ such that $ \hat{r}_{\mathcal{G}}(s,t)\approx_\epsilon r_{\mathcal{G}}(s,t)$ w.h.p. in $O(T_q)$ time.
\end{definition}

Note that this ER sketch problem is closely related to the computation of graph Laplacian sparsifiers, which has been extensively studied in previous studies~\cite{spielman2008graph,DKP+2017spanningtree,jambulapati2018efficient,chu2020graph}. The current main open question is how to improve the processing time dependency on $n,m$ and $\epsilon$. The state-of-the-art methods for this ER sketch problem on expander graphs include an algorithm with $(\tilde{O}(m+n/\epsilon^{2}),O(1),\tilde{O}(n/\epsilon))$ complexity based on random walk sampling ~\cite{li2023new}  and an algorithm with $(\tilde{O}(m/\epsilon), \tilde{O}(1),\tilde{O}(n/\epsilon))$ complexity based on Count Sketch and Laplacian solver ~\cite{dwaraknath2024towards}. In this paper, we show that we can also improve the runtime for the ER sketch problem by combining deterministic search with random walk sampling.

\begin{theorem}[\downlink{proof_of_theorem:ER_sketch}]\label{thm:ER_sketch}
    There is an $(\tilde{O}(\min \{m+n/\epsilon^{1.5}, \sqrt{nm}/\epsilon\}),O(1),\tilde{O}(n/\epsilon))$ ER sketch algorithm for undirected, unweighted expander graph $\mathcal{G}=(\mathcal{V},\mathcal{E})$, and $\epsilon\in (0,1)$. 
\end{theorem}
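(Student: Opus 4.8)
The plan is to build the ER sketch by leveraging the single-pair algorithm of Theorem \ref{thm:bidir_complexity} together with the classical Johnson--Lindenstrauss-style sketch of Spielman and Srivastava. Recall that $r_{\mathcal{G}}(s,t) = \|\mathbf{B}\mathbf{L}^\dagger(\mathbf{e}_s-\mathbf{e}_t)\|_2^2$, where $\mathbf{B}$ is the signed edge-vertex incidence matrix; so if we take a random $\pm 1/\sqrt{k}$ matrix $\mathbf{Q}\in\mathbb{R}^{k\times m}$ with $k=\tilde{O}(1/\epsilon^2)$ and store the $k\times n$ matrix $\mathbf{Z} = \mathbf{Q}\mathbf{B}\mathbf{L}^\dagger$, then $\|\mathbf{Z}(\mathbf{e}_s-\mathbf{e}_t)\|_2^2 = \|(\mathbf{z}_s-\mathbf{z}_t)\|^2 \approx_\epsilon r_{\mathcal{G}}(s,t)$ for all pairs simultaneously with high probability, and the query time is $O(k) = \tilde O(1/\epsilon^2)$. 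To hit the claimed $\tilde O(n/\epsilon)$ space and $O(1)$ query time we instead keep only $k=\tilde O(1/\epsilon)$ rows and store for each vertex a {\em lower-dimensional} summary; the resolution is to use the sharper guarantee available for expanders — the key point exploited throughout the paper — where the ER value sits in a narrow band around $1/d_s + 1/d_t$, so a {\em fixed} additive/relative precision in the sketch coordinates suffices, and the number of effective rows one must faithfully recover per query is $O(1)$. This is exactly the structure already used by \cite{li2023new,dwaraknath2024towards}; I will reuse their sketch layout for $(O(1),\tilde O(n/\epsilon))$ query/space and only replace the processing stage.

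The processing stage is where the new running time comes from. Computing $\mathbf{Z} = \mathbf{Q}\mathbf{B}\mathbf{L}^\dagger$ amounts to solving $k$ Laplacian systems, one per row of $\mathbf{Q}\mathbf{B}$; naively this is $\tilde O(mk) = \tilde O(m/\epsilon)$, which is the Dwaraknath et al.\ bound. To beat it, I would run, for each of the $k = \tilde O(1/\epsilon)$ random vectors $\mathbf{q}_i\mathbf{B}$, the bidirectional deterministic-search-plus-random-walk estimator that underlies Theorem \ref{thm:bidir_complexity}, but now in its {\em single-source} form: push/CGD propagates a residual deterministically from the (sparse, $O(m/k)$-support in expectation after the JL contraction — or just full $m$-support) source, and random walks from the residual frontier finish the estimate. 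The per-row cost of the deterministic part is $\tilde O(n/\sqrt{\epsilon})$ (this is the source of the $n/\epsilon^{1.5}$ term: $k\cdot n/\sqrt\epsilon = \tilde O(n/\epsilon^{1.5})$), after an initial $\tilde O(m)$ to read the graph, giving $\tilde O(m + n/\epsilon^{1.5})$; alternatively, balancing the deterministic work against the random-walk work with a different split of the $\sqrt d$-type budget yields the $\tilde O(\sqrt{nm}/\epsilon)$ bound, which wins when $m \le n/\sqrt\epsilon$. I take the minimum of the two schedules. The estimator's correctness — that the stored $\mathbf{z}_s$ vectors really do give a valid JL sketch of the rows of $\mathbf{B}\mathbf{L}^\dagger$ up to the needed precision — follows by combining the per-entry error bound from the analysis of Theorem \ref{thm:bidir_complexity} with a union bound over the $n$ columns and $k$ rows, and then the JL lemma over all $\binom n2$ query pairs; a variance/Chernoff argument controls the random-walk contribution exactly as in the single-pair proof.

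The key steps, in order, are: (1) fix the sketch dimension $k = \tilde O(1/\epsilon)$ and the JL matrix $\mathbf{Q}$, and argue (using the expander concentration of ER around $1/d_s+1/d_t$) that $k$ rows suffice for relative error $\epsilon$ on every pair w.h.p.; (2) describe the two processing schedules — ``push-heavy'' giving $\tilde O(m + n/\epsilon^{1.5})$ and ``walk-heavy'' giving $\tilde O(\sqrt{nm}/\epsilon)$ — each computing an approximate row-set $\hat{\mathbf{Z}}$ of $\mathbf{Q}\mathbf{B}\mathbf{L}^\dagger$; (3) bound the per-entry error of $\hat{\mathbf{Z}}$ via the single-source version of the Theorem \ref{thm:bidir_complexity} analysis, propagate it to $\|\hat{\mathbf z}_s - \hat{\mathbf z}_t\|^2$ by triangle inequality, and union-bound; (4) account for space ($O(k)$ per vertex after discretizing coordinates to $\tilde O(1)$ bits, so $\tilde O(n/\epsilon)$ total) and $O(1)$ query time (the ER of a pair is a single difference-of-squared-norms in the low-dimensional summary). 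I expect the main obstacle to be step (3): carefully controlling how the {\em additive} per-coordinate errors of the bidirectional estimator, which are naturally calibrated to the single-pair quantity $r_{\mathcal{G}}(s,t)\approx 1/d_s+1/d_t$, aggregate across the $k$ sketch coordinates without the error blowing up by a $\sqrt k$ factor that would erase the gain — this needs the variance reduction to be charged against the {\em squared} norm rather than each coordinate, i.e.\ a second-moment bound on $\sum_i (\hat z_{i,s}-\hat z_{i,t} - (z_{i,s}-z_{i,t}))^2$, and keeping the two processing schedules' parameters consistent with this bound is the delicate part.
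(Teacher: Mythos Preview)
Your plan is not what the paper does, and it has real gaps of its own.

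The paper does \emph{not} use a Johnson--Lindenstrauss sketch. Instead it stores, for each vertex $u$, a sparse approximation $\hat{\mathbf{p}}_{L,u}$ of the truncated-walk vector $\mathbf{p}_{L,u}=\tfrac12\sum_{l=0}^{L}(\tfrac12\mathbf I+\tfrac12\mathbf P)^l\mathbf e_u$. Because $\|\mathbf{p}_{L,u}\|_1\le L/2=\tilde O(1)$, at most $\tilde O(1/\epsilon)$ entries exceed $\epsilon$; Algorithm~\ref{algo:bidir_sketch} first locates this $\epsilon$-contributing set with crude random walks, then runs \push\ from each candidate target $v$ and a \emph{single} batch of random walks from $u$ (shared across all $v$) to nail down the entries. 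Space is $\tilde O(n/\epsilon)$ because each stored vector is $\tilde O(1/\epsilon)$-sparse, and a query is literally three table lookups via Eq.~(\ref{equ:ER_p_expression}) and Fact~\ref{fact:p_L_transform}, giving true $O(1)$ query time. The $\sqrt{nm}/\epsilon$ bound comes from balancing push and walk work summed over all $u$; the $m+n/\epsilon^{1.5}$ bound is obtained separately by first passing to an $\tilde O(n/\epsilon)$-edge resistance sparsifier (Chu et al.) in $\tilde O(m)$ time and then running the $\sqrt{nm'}/\epsilon$ algorithm on the sparse graph.

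Against this, three steps in your proposal do not go through. (i) A JL sketch of dimension $k=\tilde O(1/\epsilon)$ yields query time $O(k)$, not $O(1)$: $\|\mathbf z_s-\mathbf z_t\|^2$ needs the cross term $\langle\mathbf z_s,\mathbf z_t\rangle$, which cannot be precomputed per vertex; your ``single difference-of-squared-norms'' claim is simply false. (ii) The bidirectional estimator of Theorem~\ref{thm:bidir_complexity} relies on the source being a single indicator $\mathbf e_u$: the push cost bound charges against $\|\mathbf r_{i,u}\|_1\le 1$ and the variance bound uses $\|\mathbf p_{L,u}\|_1\le L/2$. Your source vectors $\mathbf q_i\mathbf B$ have $\Theta(m)$ nonzeros and $\ell_1$ norm $\Theta(\sqrt{nm/k})$, so neither bound survives, and there is no reason the per-row cost should be $\tilde O(n/\sqrt\epsilon)$. (iii) Your derivation of the two processing times is handwaved; in particular the $m+n/\epsilon^{1.5}$ term in the paper comes from sparsification, not from a ``push-heavy schedule,'' and your reduction of JL dimension from $\tilde O(1/\epsilon^2)$ to $\tilde O(1/\epsilon)$ via ``expander concentration'' is asserted rather than argued.
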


From Theorem \ref{thm:ER_sketch}, we observe that the processing time of our ER sketch algorithm is $T_p=\tilde{O}(\min \{m+n/\epsilon^{1.5}, \sqrt{nm}/\epsilon\})$. For the first term $m+n/\epsilon^{1.5}$, this represents a speedup of $1/\sqrt{\epsilon}$ compared to the $\tilde{O}(m+n/\epsilon^{2})$ algorithm proposed in ~\cite{li2023new}. For the second term $\sqrt{nm}/\epsilon$, this achieves a $\sqrt{m/n}$ speedup over the $\tilde{O}(m/\epsilon)$ algorithm proposed in ~\cite{dwaraknath2024towards}. Consequently, our algorithm establishes an improved runtime bound compared to the state-of-the-art methods. There is an additional remark that without the assumption on expander graphs, the processing time of our algorithm is $\tilde{O}(\min \{m+\kappa(\mathcal{L})^3n/\epsilon^{1.5}, \kappa(\mathcal{L})^3\sqrt{nm}/\epsilon\})$, the space complexity is $\tilde{O}(\kappa(\mathcal{L})n/\epsilon)$, and the query time remains $O(1)$. This condition number dependency is the same as the random walk-based method ~\cite{li2023new}.

\begin{table*}[t!]
	\centering
	\caption{ER computation of multiple vertex pairs $S\subset \mathcal{V}\times \mathcal{V}$} \label{tab:alg_multiple_pair} 
	\scalebox{1}{
		\begin{tabular}{c c c}
			\toprule
            \multicolumn{1}{c}{Methods}&\multicolumn{1}{c}{Runtime}& Assumption\\
			\midrule
              ~\cite{spielman2008graph}&	$\tilde{O}(m/\epsilon^2+|S|/\epsilon^2)$&none\\
              ~\cite{DKP+2017spanningtree}&	$\tilde{O}(m+n/\epsilon^2+|S|/\epsilon^2)$&none\\
              ~\cite{jambulapati2018efficient}&	$\tilde{O}(n^2/\epsilon)$&none\\
			 ~\cite{chu2020graph}& $\tilde{O}(m+n/\epsilon^{1.5}+|S|/\epsilon^{1.5})$&none\\
              ~\cite{li2023new}& $\tilde{O}(m+n/\epsilon^{2}+|S|)$&expanders\\
              ~\cite{dwaraknath2024towards}&  $\tilde{O}(m/\epsilon+|S|)$&expanders\\
            
             This Paper& $\tilde{O}(\min \{m+n/\epsilon^{1.5}, \sqrt{nm}/\epsilon\}+|S|)$& expanders\\
            \midrule
             Lower Bound ~\cite{dwaraknath2024towards}&  $\Omega(n^{c_1}/\epsilon^{c_2})$ with $c_1+2c_2=3$ & $S=\mathcal{V}\times \mathcal{V}$\\
            \bottomrule	
		\end{tabular}
	}
\end{table*}

\subsection{Technique overview}

In this paper, we primarily focus on the Taylor expansion of $\mathbf{L}^\dagger (\mathbf{e}_s-\mathbf{e}_t)$ to compute the ER value $r_{\mathcal{G}}(s,t)$. Based on this Taylor expansion representation, existing local algorithms sampling random walks to approximate $\mathbf{L}^\dagger (\mathbf{e}_s-\mathbf{e}_t)$ to $\epsilon$-approximates $r_{\mathcal{G}}(s,t)$ ~\cite{andoni2018solving,peng2021local}. Our main algorithmic contribution is to combine deterministic search with random walk sampling, thereby reducing the variance introduced by random walks. This approach is inspired by the two-phase algorithm for local PageRank computation ~\cite{lofgren16bidirection,wang2024revisiting,wei2024approximating}. Specifically, we use a variant of the classical coordinate gradient descent algorithm as the deterministic part to coarsely approximate $\mathbf{L}^\dagger (\mathbf{e}_s-\mathbf{e}_t)$. Subsequently, we sample random walks to refine the residuals produced by the coordinate gradient descent. We carefully analyze the error bounds and complexity of our proposed algorithm, proving that the runtime dependency on $\epsilon$ can be improved through the combination of coordinate gradient descent and random walk sampling. For the index-based algorithm, we have the following two key observations: (i) the approximation of $\mathbf{L}^\dagger (\mathbf{e}_s-\mathbf{e}_t)$ is a sparse vector; (ii) random walks can share the computation of the deterministic part (i.e., coordinate gradient descent). Based on these two observations, we extend our techniques and design an index-based algorithm faster than the state-of-the-art.

Additionally, we study the lower bound for the local computation of single pair ER on expander graphs. Our key observation is the sensitivity of the parallel resistors. Roughly speaking, we consider the following two cases: 
(i) case $C_1$: we connect $d$ resistors with resistance $1$ in parallel; 
(ii) case $C_2$: we connect $(1-\epsilon)d$ resistors with resistance $1$ and $\epsilon d$ resistors with resistance $2$ in parallel. 
We prove that the effective resistance of $C_1$ and $C_2$ differs by $\Omega(\epsilon)$ in terms of relative error. However, distinguishing $C_1$ and $C_2$ is challenging for local algorithms. This observation leads to the construction of our lower bound.

\subsection{Related Work}



\stitle{Local graph algorithms.} Local graph algorithms for several related problems have been extensively studied. For instance, ~\cite{liu2020strongly,fountoulakis2019variational,fountoulakis2020p,fountoulakis2023flow} investigate local algorithms for conductance-based graph clustering, while ~\cite{lofgren2013personalized,lofgren16bidirection,bressan2018sublinear,wang2024revisiting,wei2024approximating} explore local algorithms for PageRank computation. Additionally, ~\cite{cohen2018approximating,jin2023moments} examine local algorithms for graph spectral density approximation, and ~\cite{andoni2018solving} study local algorithms to solve Laplacian systems. The problem setting of our work, which focuses on the local computation of ER for a single-pair vertices, shares similarities with these problems. However, the techniques developed for these related problems cannot be applied directly in our setting.

\stitle{Lower bounds for ER computation.} For the local computation of ER, in addition to our lower bound concerning the approximation parameter $\epsilon$, previous studies have also focused on establishing lower bounds related to the condition number $\kappa(\mathcal{L})$. \cite{andoni2018solving} proved that any local algorithm that $\epsilon$-approximates a coordinate of the Laplacian linear system requires at least $\Omega(\kappa(\mathcal{L})^2)$ queries, when setting $\epsilon=\Theta(1/\log n)$. However, since their problem setting differs from our local ER computation problem, their results cannot be directly generalized to our problem. In addition, ~\cite{cai2023effective} provides a lower bound for the single-pair ER approximation problem for non-expander graphs. They proved that there exists a non-expander graph $\mathcal{G}$ with minimum degree $\delta\geq 3$, and an adjacent vertex pair $(s,t)\in \mathcal{E}$, such that any local algorithm requires $\Omega (n)$ queries to $\epsilon$-approximates $r_\mathcal{G}(s,t)$ for any $\epsilon\leq 0.01$. It can be easily proved that the condition number of their construction is $\kappa(\mathcal{L})=\Theta(n)$, thus implies the $\Omega (\kappa(\mathcal{L}))$ lower bound for the $\epsilon$-approximation of $s,t$-ER value, for a given $\epsilon\leq 0.01$. However, it remains an open problem to study the relationship between $\epsilon$ and $\kappa(\mathcal{L})$ for local algorithms, so as to unify the different cases for expanders and non-expanders (since for expanders $\kappa(\mathcal{L})=\tilde{O}(1)$).

\stitle{Applications of ER computation.} Since ER approximation algorithms can be used as a crucial subroutine for other graph analysis algorithms, many related studies have focused on ER approximation and its applications. Among them, recent advancements in graph clustering and counting random spanning trees are closely related to our results. For instance, ~\cite{alev2017graph} proposed the ER-based graph clustering algorithm in $\tilde{O}(nm)$ time with non-trivial theoretical guarantees. 
~\cite{chu2020graph} design an $\tilde{O}(m+n^{1.875}/\delta^{1.75})$ time algorithm that computes a $(1+\delta)$-approximation of the number of spanning trees. This result was later improved by ~\cite{li2023new} to $\tilde{O}(m+n^{1.5}/\delta)$, achieving optimality for determinant sparsifier-based methods in the case of expander graphs. These works leverage the ER sketch algorithm as a key subroutine. Beyond these applications, ER computation has also been utilized in graph sparsification ~\cite{spielman2008graph} and robust routing in road networks ~\cite{sinop2023robust}, among others. The potential impact of our algorithms on enhancing the efficiency of these applications remains an open direction for future research.

\section{Preliminaries}\label{sec:preliminary}


\stitle{General matrix notations.} Given a matrix $\mathbf{X}\in \mathbb{R}^{n\times n}$, we use $\lambda(\mathbf{X})$ to denote its spectrum, $\lambda_i(\mathbf{X})$ represents the $i$-th eigenvalue of $\mathbf{X}$, $\mathbf{u}_i$ denotes the corresponding eigenvector. For a positive semi-definite (PSD) matrix $\mathbf{X}$, the condition number is defined as $\kappa(\mathbf{X})=\frac{\lambda_{max}(\mathbf{X})}{\lambda_{min}(\mathbf{X})}$, where $\lambda_{max}(\mathbf{X})$ is the maximum eigenvalue of $\mathbf{X}$, $\lambda_{min}(\mathbf{X})$ is the minimum non-zero eigenvalue of $\mathbf{X}$. Note that for singular matrix, this represents the finite condition number instead of the original definition $\kappa(\mathbf{X})=\infty$. Let $\mathbf{e}_s\in \mathbb{R}^n$ be the one-hot vector, which takes value $1$ at $s\in [n]$ and $0$ elsewhere.

\stitle{Graph notations.} Given an undirected, unweighted graph $\mathcal{G}=(\mathcal{V},\mathcal{E})$, we denote $\mathcal{V}$ as the vertex set and $\mathcal{E}$ as the edge set. Let $|\mathcal{V}|=n$ be the number of vertices and $|\mathcal{E}|=m$ be the number of edges of $\mathcal{G}$. For a node $u\in \mathcal{V}$, the set of neighbors is denoted as $\mathcal{N}(u)=\{v|(u,v)\in \mathcal{E}\}$, and the degree of node $u$ is denoted as $d_u=|\mathcal{N}(u)|$. Let $\mathbf{D}$ be the diagonal degree matrix, with diagonal entry $\mathbf{D}_{i,i}=d_i$ (the degree of node $i$). Denote by $\mathbf{A}$ the adjacency matrix with $\mathbf{A}_{i,j}=1$ if and only if $(i,j)\in \mathcal{E}$. Then, the Laplacian matrix is defined as $\mathbf{L}=\mathbf{D}-\mathbf{A}$. Let $\mathcal{A}=\mathbf{D}^{-1/2}\mathbf{A}\mathbf{D}^{-1/2}$ be the normalized adjacency matrix, $\mathbf{P}=\mathbf{AD}^{-1}$ be the probability transition matrix, and $\mathcal{L}=\mathbf{D}^{-1/2}\mathbf{L}\mathbf{D}^{-1/2}=\mathbf{I}-\mathcal{A}$ be the normalized Laplacian matrix. We denote $0=\lambda_1(\mathcal{L})< \lambda_2(\mathcal{L})\leq...\leq \lambda_n(\mathcal{L})\leq 2$ as the eigenvalues of $\mathcal{L}$, and $1=\lambda_1(\mathcal{A})> \lambda_2(\mathcal{A})\geq...\geq \lambda_n(\mathcal{A})\geq -1$ as the eigenvalues of $\mathcal{A}$. By definition, we have $\lambda_i(\mathcal{A})=1-\lambda_i(\mathcal{L})$ for each $i\in[n]$.
Following ~\cite{dwaraknath2024towards,cai2023effective}, we define the expander graphs based on conductance.

\begin{definition}{(Expander)}
    We define the conductance of a graph $\mathcal{G}$ as:
    $$\phi_\mathcal{G}=\min_{S\subset \mathcal{V}} {\frac{|\{(u,v)\in \mathcal{E}: u\in S,v\in \mathcal{V}-S \}|}{\min \{\sum_{u\in S}{d_u},\sum_{u\in \mathcal{V}-S}{d_u}\}}}.$$
    
    We refer to the graph $\mathcal{G}$ as a expander if $\phi_\mathcal{G}=\tilde{\Omega}(1)$.
\end{definition}

By the classical Cheeger's inequality (that is, $\lambda_2(\mathcal{L})/2\leq \phi_\mathcal{G}\leq \sqrt{2\lambda_2(\mathcal{L})}$, see e.g. ~\cite{spielman2019sagt} Chapter 21), we have $\phi_\mathcal{G}=\tilde{\Omega}(1)$ if and only if $\lambda_2(\mathcal{L})=\tilde{\Omega}(1)$, where $\lambda_2(\mathcal{L})$ is the second smallest eigenvalue of the normalized Laplacian $\mathcal{L}$ (also the smallest non-zero eigenvalue). Consequently, the definition of an expander graph can be equivalently characterized by $\lambda_2(\mathcal{L})=\tilde{\Omega}(1)$  or by the condition number $\kappa(\mathcal{L})=\tilde{O}(1)$.

\section{Basic representation}\label{sec:basic_repre}

Following ~\cite{peng2021local,li2023new,dwaraknath2024towards}, we start by interpreting the ER value through the Taylor expansion of the Laplacian pseudo-inverse (but our interpretation is simpler and slightly differs from the previous works ~\cite{peng2021local,li2023new,dwaraknath2024towards}). We begin by stating the following lemma, which expresses $r_\mathcal{G}(s,t)$ as the sum of infinite step of lazy random walks. 

\begin{lemma}[\downlink{proof_of_lemma:ER_taylor_expansion}]\label{lemma:ER_taylor_expansion}
 The following Equation holds
    \begin{align*}
r_\mathcal{G}(s,t)&=(\mathbf{e}_s-\mathbf{e}_t)^T\mathbf{L}^{\dagger}(\mathbf{e}_s-\mathbf{e}_t)\\
&=\frac{1}{2}(\mathbf{e}_s-\mathbf{e}_t)^T\mathbf{D}^{-1}\sum_{l=0}^{+\infty}{\left(\frac{1}{2}\mathbf{I}+\frac{1}{2}\mathbf{P}\right)^l}(\mathbf{e}_s-\mathbf{e}_t)
\end{align*}
with $\mathbf{P}=\mathbf{AD}^{-1}$ denotes the probability transition matrix.
\end{lemma}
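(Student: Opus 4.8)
The plan is to reduce the claimed identity to a standard Neumann-series expansion of a matrix inverse, using the normalized Laplacian as the bridge between $\mathbf{L}^\dagger$ and the lazy walk operator $\tfrac12\mathbf{I}+\tfrac12\mathbf{P}$. First I would write $\mathbf{L}=\mathbf{D}^{1/2}\mathcal{L}\mathbf{D}^{1/2}$, so that $\mathbf{L}^\dagger=\mathbf{D}^{-1/2}\mathcal{L}^\dagger\mathbf{D}^{-1/2}$ (this uses that $\mathbf{D}^{1/2}$ is invertible, hence the pseudo-inverse distributes cleanly over the symmetric factorization). Then $r_\mathcal{G}(s,t)=(\mathbf{e}_s-\mathbf{e}_t)^T\mathbf{D}^{-1/2}\mathcal{L}^\dagger\mathbf{D}^{-1/2}(\mathbf{e}_s-\mathbf{e}_t)$, and it remains to show $\mathbf{D}^{-1/2}\mathcal{L}^\dagger\mathbf{D}^{-1/2}=\tfrac12\mathbf{D}^{-1}\sum_{l\ge0}\bigl(\tfrac12\mathbf{I}+\tfrac12\mathbf{P}\bigr)^l$ when both sides are applied to the vector $\mathbf{e}_s-\mathbf{e}_t$.

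The key step is to rewrite the lazy walk matrix in normalized coordinates: since $\mathbf{P}=\mathbf{A}\mathbf{D}^{-1}=\mathbf{D}^{1/2}\mathcal{A}\mathbf{D}^{-1/2}$, we get $\tfrac12\mathbf{I}+\tfrac12\mathbf{P}=\mathbf{D}^{1/2}\bigl(\tfrac12\mathbf{I}+\tfrac12\mathcal{A}\bigr)\mathbf{D}^{-1/2}=\mathbf{D}^{1/2}\bigl(\mathbf{I}-\tfrac12\mathcal{L}\bigr)\mathbf{D}^{-1/2}$, using $\mathcal{A}=\mathbf{I}-\mathcal{L}$. Hence $\bigl(\tfrac12\mathbf{I}+\tfrac12\mathbf{P}\bigr)^l=\mathbf{D}^{1/2}\bigl(\mathbf{I}-\tfrac12\mathcal{L}\bigr)^l\mathbf{D}^{-1/2}$, and $\tfrac12\mathbf{D}^{-1}\sum_{l\ge0}\bigl(\tfrac12\mathbf{I}+\tfrac12\mathbf{P}\bigr)^l=\tfrac12\mathbf{D}^{-1/2}\bigl(\sum_{l\ge0}(\mathbf{I}-\tfrac12\mathcal{L})^l\bigr)\mathbf{D}^{-1/2}$. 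So the lemma reduces to the operator identity $\mathcal{L}^\dagger=\tfrac12\sum_{l\ge0}\bigl(\mathbf{I}-\tfrac12\mathcal{L}\bigr)^l$ — interpreted on the subspace orthogonal to $\ker\mathcal{L}$. I would prove this by diagonalizing $\mathcal{L}=\sum_i\lambda_i\mathbf{u}_i\mathbf{u}_i^T$: on each eigenvector $\mathbf{u}_i$ with $\lambda_i>0$, the operator $\mathbf{I}-\tfrac12\mathcal{L}$ acts as multiplication by $1-\tfrac{\lambda_i}{2}\in[0,1)$ (here I use the preliminaries fact $0\le\lambda_i(\mathcal{L})\le2$, and for $i\ne1$ strictly $\lambda_i>0$, so $|1-\lambda_i/2|<1$), so the geometric series converges to $\tfrac{1}{1-(1-\lambda_i/2)}=\tfrac{2}{\lambda_i}$, and multiplying by $\tfrac12$ gives $\tfrac1{\lambda_i}$, which is exactly the action of $\mathcal{L}^\dagger$ on $\mathbf{u}_i$. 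On $\mathbf{u}_1$ (the $\lambda_1=0$ eigenvector, proportional to $\mathbf{D}^{1/2}\mathbf{1}$), $\mathbf{I}-\tfrac12\mathcal{L}$ acts as the identity, so the series diverges there; but $\mathbf{D}^{-1/2}(\mathbf{e}_s-\mathbf{e}_t)$ must be shown to have zero component along $\mathbf{u}_1$, which follows because $\mathbf{u}_1^T\mathbf{D}^{-1/2}(\mathbf{e}_s-\mathbf{e}_t)\propto\mathbf{1}^T(\mathbf{e}_s-\mathbf{e}_t)=0$. Restricting everything to $\mathrm{span}\{\mathbf{u}_i:i\ge2\}$, both the series and $\mathcal{L}^\dagger$ agree, and the quadratic form against $\mathbf{e}_s-\mathbf{e}_t$ is well-defined.

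The main obstacle — really the only subtlety — is the bookkeeping around the kernel: one must be careful that the infinite sum $\sum_{l\ge0}(\tfrac12\mathbf{I}+\tfrac12\mathbf{P})^l$ does not converge as a matrix (its restriction to the stationary direction blows up), so the identity is only meaningful once contracted with $\mathbf{e}_s-\mathbf{e}_t$ (or, equivalently, after projecting out $\mathbf{1}$). I would handle this cleanly by first noting $\mathbf{D}^{-1/2}(\mathbf{e}_s-\mathbf{e}_t)\perp\mathbf{u}_1$, then defining the partial sums $\mathbf{S}_N=\tfrac12\sum_{l=0}^{N}(\mathbf{I}-\tfrac12\mathcal{L})^l$ and showing $\mathbf{S}_N\mathbf{D}^{-1/2}(\mathbf{e}_s-\mathbf{e}_t)\to\mathcal{L}^\dagger\mathbf{D}^{-1/2}(\mathbf{e}_s-\mathbf{e}_t)$ by the per-eigenvector geometric-series estimate above, with the convergence rate governed by $\max_{i\ge2}|1-\lambda_i/2|<1$. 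Everything else is the routine substitution $\mathbf{P}=\mathbf{D}^{1/2}\mathcal{A}\mathbf{D}^{-1/2}$, $\mathcal{A}=\mathbf{I}-\mathcal{L}$, and $\mathbf{L}^\dagger=\mathbf{D}^{-1/2}\mathcal{L}^\dagger\mathbf{D}^{-1/2}$.
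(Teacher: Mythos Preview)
Your proposal is correct and follows essentially the same route as the paper's proof: pass to normalized coordinates via $\mathbf{L}=\mathbf{D}^{1/2}\mathcal{L}\mathbf{D}^{1/2}$, diagonalize, expand each nonzero eigenvalue as a geometric series in $1-\tfrac{\lambda_i}{2}$, and remove the kernel contribution using $\mathbf{D}^{-1/2}(\mathbf{e}_s-\mathbf{e}_t)\perp\mathbf{D}^{1/2}\mathbf{1}$. The only cosmetic difference is that the paper parameterizes with $\mathcal{A}=\mathbf{I}-\mathcal{L}$ rather than $\mathcal{L}$ and is less explicit than you are about the partial-sum convergence on the nonkernel subspace.
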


By the Taylor expansion of $r_\mathcal{G}(s,t)$ from Lemma \ref{lemma:ER_taylor_expansion}, we define the $L$-step truncated ER value $r_{\mathcal{G},L}(s,t)$ as follows.

\begin{definition}\label{def:truncate_ER}
We define
$$r_{\mathcal{G},L}(s,t)=\frac{1}{2}(\mathbf{e}_s-\mathbf{e}_t)^T\mathbf{D}^{-1}\sum_{l=0}^{L}{\left(\frac{1}{2}\mathbf{I}+\frac{1}{2}\mathbf{P}\right)^l}(\mathbf{e}_s-\mathbf{e}_t)$$
as the $L$ step truncation ER value.
\end{definition}

Next, we aim to ensure that $r_{\mathcal{G},L}(s,t)$ serves as an  $\epsilon$-approximation of $r_\mathcal{G}(s,t)$ by selecting a sufficiently large truncation step $L$. However, for expander graphs, we will prove that $L=\tilde{O}(1)$ is sufficient to achieve this approximation.

\begin{lemma}[\downlink{proof_of_lemma:ER_truncate}]\label{lem:ER_truncate}
   $r_{\mathcal{G},L}(s,t)$ is the $\epsilon$-approximation of $r_\mathcal{G}(s,t)$ when setting $L\geq 2\kappa(\mathcal{L}) \log \frac{n}{\epsilon}$.
\end{lemma}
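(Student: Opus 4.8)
The plan is to bound the tail $r_\mathcal{G}(s,t) - r_{\mathcal{G},L}(s,t) = \frac{1}{2}(\mathbf{e}_s-\mathbf{e}_t)^T\mathbf{D}^{-1}\sum_{l=L+1}^{\infty}\bigl(\frac{1}{2}\mathbf{I}+\frac{1}{2}\mathbf{P}\bigr)^l(\mathbf{e}_s-\mathbf{e}_t)$ and show it is at most $\epsilon\, r_\mathcal{G}(s,t)$. The natural move is to pass to the symmetrically normalized picture: write $\mathbf{P} = \mathbf{D}^{1/2}\mathcal{A}\mathbf{D}^{-1/2}$, so that $\mathbf{D}^{-1}(\frac{1}{2}\mathbf{I}+\frac{1}{2}\mathbf{P})^l = \mathbf{D}^{-1/2}(\frac{1}{2}\mathbf{I}+\frac{1}{2}\mathcal{A})^l\mathbf{D}^{-1/2}$, and set $\mathbf{x} = \mathbf{D}^{-1/2}(\mathbf{e}_s-\mathbf{e}_t)$. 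Then the tail equals $\frac{1}{2}\mathbf{x}^T\sum_{l=L+1}^\infty \mathbf{M}^l \mathbf{x}$ where $\mathbf{M} = \frac{1}{2}\mathbf{I}+\frac{1}{2}\mathcal{A} = \mathbf{I}-\frac{1}{2}\mathcal{L}$ is symmetric PSD with eigenvalues $1-\frac{1}{2}\lambda_i(\mathcal{L}) \in [0,1]$. Crucially, $\mathbf{x}$ is orthogonal to the top eigenvector of $\mathbf{M}$ (equivalently, to the kernel of $\mathcal{L}$): since $\mathbf{e}_s-\mathbf{e}_t$ is orthogonal to the all-ones vector and $\mathbf{D}^{1/2}\mathbf{1}$ spans $\ker\mathcal{L}$, we get $\mathbf{x} \perp \mathbf{D}^{1/2}\mathbf{1}$. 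Hence on the relevant subspace the eigenvalues of $\mathbf{M}$ are at most $1-\frac{1}{2}\lambda_2(\mathcal{L}) =: 1-\beta$ with $\beta = \frac{1}{2}\lambda_2(\mathcal{L})$.

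With this spectral decomposition, expand $\mathbf{x} = \sum_{i\ge 2} c_i \mathbf{u}_i$ in the eigenbasis of $\mathbf{M}$; then $r_\mathcal{G}(s,t) = \frac{1}{2}\sum_{i\ge 2} c_i^2 \sum_{l=0}^\infty (1-\frac{1}{2}\lambda_i)^l = \frac{1}{2}\sum_{i\ge2} \frac{c_i^2}{\frac{1}{2}\lambda_i} = \sum_{i\ge 2}\frac{c_i^2}{\lambda_i}$, and the tail is $\frac{1}{2}\sum_{i\ge2} c_i^2 \frac{(1-\frac{1}{2}\lambda_i)^{L+1}}{\frac{1}{2}\lambda_i} = \sum_{i\ge2}\frac{c_i^2}{\lambda_i}(1-\tfrac{1}{2}\lambda_i)^{L+1}$. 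Since $1-\frac{1}{2}\lambda_i \le 1-\frac{1}{2}\lambda_2(\mathcal{L})$ for all $i\ge 2$, the tail is at most $(1-\frac{1}{2}\lambda_2(\mathcal{L}))^{L+1}\sum_{i\ge2}\frac{c_i^2}{\lambda_i} = (1-\frac{1}{2}\lambda_2(\mathcal{L}))^{L+1} r_\mathcal{G}(s,t)$. So it suffices to choose $L$ so that $(1-\frac{1}{2}\lambda_2(\mathcal{L}))^{L+1}\le \epsilon$, and using $1-z\le e^{-z}$ this holds as soon as $L+1 \ge \frac{2}{\lambda_2(\mathcal{L})}\log\frac{1}{\epsilon}$. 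Since $\kappa(\mathcal{L}) = \lambda_n(\mathcal{L})/\lambda_2(\mathcal{L}) \ge 1/\lambda_2(\mathcal{L})$ (as $\lambda_n(\mathcal{L})\le 2$, in fact $2/\lambda_2(\mathcal{L}) \ge \kappa(\mathcal{L})$ only up to the factor $\lambda_n$, so $1/\lambda_2(\mathcal{L}) \le \kappa(\mathcal{L})$), the choice $L \ge 2\kappa(\mathcal{L})\log\frac{n}{\epsilon}$ comfortably dominates $\frac{2}{\lambda_2(\mathcal{L})}\log\frac{1}{\epsilon}$, giving the claim with room to spare (the extra $\log n$ and the slack between $1/\lambda_2$ and $\kappa(\mathcal{L})$ are not needed here but make the bound uniform).

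The only subtlety — and the step I would be most careful about — is the orthogonality argument that lets me replace the spectral radius $1$ of $\mathbf{M}$ by $1-\frac{1}{2}\lambda_2(\mathcal{L})$ on the subspace spanned by $\mathbf{x}$; without it the tail would not decay at all. This requires noting that $\mathbf{D}^{-1/2}(\mathbf{e}_s - \mathbf{e}_t) \perp \mathbf{D}^{1/2}\mathbf{1}$, i.e. $(\mathbf{e}_s-\mathbf{e}_t)^T\mathbf{1} = 0$, which is immediate, together with the fact that $\mathbf{D}^{1/2}\mathbf{1}$ is exactly the eigenvector of $\mathcal{A}$ with eigenvalue $1$ (equivalently the kernel of $\mathcal{L}$) for a connected graph. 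One should also confirm the harmless point that the pseudo-inverse's Taylor series in Lemma~\ref{lemma:ER_taylor_expansion} is exactly what appears here, so that $r_\mathcal{G}(s,t)$ in the eigenbasis really is $\sum_{i\ge2} c_i^2/\lambda_i$, which matches $(\mathbf{e}_s-\mathbf{e}_t)^T\mathbf{L}^\dagger(\mathbf{e}_s-\mathbf{e}_t)$ directly; this is a sanity check rather than an obstacle. Everything else is routine geometric-series estimation.
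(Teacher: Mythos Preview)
Your proof is correct, and in fact cleaner than the paper's. Both arguments start identically: symmetrize via $\mathbf{D}^{-1/2}$, set $\mathbf{x}=\mathbf{D}^{-1/2}(\mathbf{e}_s-\mathbf{e}_t)$, note $\mathbf{x}\perp\mathbf{D}^{1/2}\mathbf{1}$, and expand in the eigenbasis of $\mathcal{A}$. The divergence is in how the tail is estimated. The paper bounds $\sum_{j\ge 2}\alpha_j^2\le \|\mathbf{x}\|_2^2\le 2$ and uses this to get an \emph{absolute} bound $|r_\mathcal{G}(s,t)-r_{\mathcal{G},L}(s,t)|\le (1-\tfrac{1}{2}\lambda_2)^L\kappa(\mathcal{L})\le \epsilon/n$, then converts to relative error via the crude lower bound $r_\mathcal{G}(s,t)\ge 1/n$; this is precisely where the $\log n$ in the hypothesis $L\ge 2\kappa(\mathcal{L})\log\frac{n}{\epsilon}$ comes from. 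You instead keep the full spectral expression $r_\mathcal{G}(s,t)=\sum_{i\ge 2}c_i^2/\lambda_i$ and observe term-by-term that the tail is $\sum_{i\ge 2}\frac{c_i^2}{\lambda_i}(1-\tfrac{1}{2}\lambda_i)^{L+1}\le (1-\tfrac{1}{2}\lambda_2)^{L+1}r_\mathcal{G}(s,t)$, yielding a direct relative bound that needs only $L\gtrsim \frac{2}{\lambda_2}\log\frac{1}{\epsilon}$. Your argument is therefore sharper---you do not need the $\log n$ or the ER lower bound---and the stated hypothesis is satisfied with slack, as you note. The one small point you glossed over is the inequality $\kappa(\mathcal{L})\ge 1/\lambda_2(\mathcal{L})$, which requires $\lambda_n(\mathcal{L})\ge 1$; this holds for any graph on $n\ge 2$ vertices since $\mathrm{tr}(\mathcal{L})=n$ and $\lambda_1=0$ force $\lambda_n\ge n/(n-1)\ge 1$.
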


By Lemma~\ref{lem:ER_truncate}, the truncation step is only required to be $L=\tilde{O}(1)$, since $\kappa(\mathcal{L})=\tilde{O}(1)$ for expander graphs. Inspired by Lemma \ref{lem:ER_truncate}, we define the following vector.

\begin{definition}\label{def:p_L}
$\mathbf{p}_{L,u}=\frac{1}{2}\sum_{l=0}^{L}{\left(\frac{1}{2}\mathbf{I}+\frac{1}{2}\mathbf{P}\right)^l}\mathbf{e}_u$ for any given $u\in \mathcal{V}$. 
\end{definition}

From Definition \ref{def:truncate_ER} and Definition \ref{def:p_L}, we have the following representation of $r_\mathcal{G}(s,t)$:
\begin{equation}\label{equ:ER_p_expression}
\begin{aligned}
    r_{\mathcal{G},L}(s,t)&=\frac{1}{2}(\mathbf{e}_s-\mathbf{e}_t)^T\mathbf{D}^{-1}\sum_{l=0}^{L}{\left(\frac{1}{2}\mathbf{I}+\frac{1}{2}\mathbf{P}\right)^l}(\mathbf{e}_s-\mathbf{e}_t)\\
    &=\mathbf{e}_s^T\mathbf{D}^{-1}\mathbf{p}_{L,s}-\mathbf{e}_t^T\mathbf{D}^{-1}\mathbf{p}_{L,s}-\mathbf{e}_s^T\mathbf{D}^{-1}\mathbf{p}_{L,t}+\mathbf{e}_t^T\mathbf{D}^{-1}\mathbf{p}_{L,t}\\
    &=\frac{\mathbf{p}_{L,s}(s)}{d_s}-\frac{\mathbf{p}_{L,s}(t)}{d_t}-\frac{\mathbf{p}_{L,t}(s)}{d_s}+\frac{\mathbf{p}_{L,t}(t)}{d_t}.
\end{aligned}
\end{equation}

Below, we give some basic properties of $\mathbf{p}_{L,u}$ and $r_\mathcal{G}(s,t)$ which will be frequently used in our analysis later.

\begin{fact}[\downlink{proof_of_fact:p_L_norm_1}]\label{fact:p_L_norm_1}
    $\Vert \mathbf{p}_{L,u} \Vert_1\leq \frac{L}{2}$ for any $u\in \mathcal{V}$.
\end{fact}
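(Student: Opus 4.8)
The plan is to bound $\Vert \mathbf{p}_{L,u}\Vert_1$ by controlling the $\ell_1$ norm of each term in the finite sum defining $\mathbf{p}_{L,u}$ in Definition~\ref{def:p_L}. Writing $\mathbf{W} = \tfrac12 \mathbf{I} + \tfrac12 \mathbf{P}$ for the lazy walk matrix, we have $\mathbf{p}_{L,u} = \tfrac12 \sum_{l=0}^{L} \mathbf{W}^l \mathbf{e}_u$, so by the triangle inequality $\Vert \mathbf{p}_{L,u}\Vert_1 \le \tfrac12 \sum_{l=0}^{L} \Vert \mathbf{W}^l \mathbf{e}_u\Vert_1$. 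The key observation is that $\mathbf{W}$ is a column-stochastic matrix (since $\mathbf{P} = \mathbf{A}\mathbf{D}^{-1}$ is column-stochastic and adding $\tfrac12\mathbf{I}$ preserves this), and it maps nonnegative vectors to nonnegative vectors. Hence $\mathbf{W}^l \mathbf{e}_u$ is a nonnegative vector whose entries sum to $1$ for every $l$, i.e. $\Vert \mathbf{W}^l \mathbf{e}_u\Vert_1 = \mathbf{1}^T \mathbf{W}^l \mathbf{e}_u = 1$.

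First I would record the column-stochasticity of $\mathbf{W}$: for each column $v$, $\sum_{w} \mathbf{W}_{w,v} = \tfrac12 + \tfrac12 \sum_w \mathbf{P}_{w,v} = \tfrac12 + \tfrac12 = 1$, equivalently $\mathbf{1}^T \mathbf{W} = \mathbf{1}^T$. Then by induction $\mathbf{1}^T \mathbf{W}^l = \mathbf{1}^T$ for all $l \ge 0$. Second, since all entries of $\mathbf{W}$ and of $\mathbf{e}_u$ are nonnegative, $\mathbf{W}^l \mathbf{e}_u \ge 0$ entrywise, so $\Vert \mathbf{W}^l \mathbf{e}_u\Vert_1 = \mathbf{1}^T (\mathbf{W}^l \mathbf{e}_u) = (\mathbf{1}^T \mathbf{W}^l)\mathbf{e}_u = \mathbf{1}^T \mathbf{e}_u = 1$. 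Summing, $\Vert \mathbf{p}_{L,u}\Vert_1 \le \tfrac12 \sum_{l=0}^{L} 1 = \tfrac{L+1}{2}$.

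This gives $\tfrac{L+1}{2}$ rather than the claimed $\tfrac{L}{2}$, which is a negligible off-by-one; I would simply note that the bound $\Vert \mathbf{p}_{L,u}\Vert_1 \le \tfrac{L+1}{2} \le \tfrac{L}{2}$ can be taken to hold with the understanding that $L$ absorbs the constant, or alternatively state the cleaner $\tfrac{L+1}{2}$ bound and adjust later references — but since the paper asserts $\tfrac{L}{2}$, the intended reading is almost certainly that the sum runs effectively over $L$ terms or that an inequality $\tfrac{L+1}{2}\le L/2 \cdot (1+o(1))$ suffices for the asymptotic use; I will present the argument yielding $\tfrac{L+1}{2}$ and remark it implies the stated fact up to this harmless constant.

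There is essentially no obstacle here: the only thing to be careful about is the direction of stochasticity (column vs.\ row), since $\mathbf{P} = \mathbf{A}\mathbf{D}^{-1}$ is column-stochastic — a right multiplication convention — and one must therefore left-multiply by $\mathbf{1}^T$, not right-multiply, to exploit it. The nonnegativity of iterates is what converts the operator identity $\mathbf{1}^T\mathbf{W}^l = \mathbf{1}^T$ into an exact evaluation of the $\ell_1$ norm; without nonnegativity one would only get $\Vert\mathbf{W}^l\mathbf{e}_u\Vert_1 \le \Vert \mathbf{W}\Vert_{1\to 1}^l$, and while $\Vert\mathbf{W}\Vert_{1\to1} = 1$ for a column-stochastic matrix anyway, the nonnegativity route is cleaner and also the natural one given the random-walk interpretation from Lemma~\ref{lemma:ER_taylor_expansion}.
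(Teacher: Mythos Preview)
Your approach is essentially identical to the paper's: apply the triangle inequality to the sum in Definition~\ref{def:p_L} and use that $\tfrac12\mathbf{I}+\tfrac12\mathbf{P}$ is column-stochastic so each term has $\ell_1$ norm $1$. You are also right about the off-by-one: the paper's own proof writes the final bound as $\tfrac{L}{2}$ but the sum has $L{+}1$ terms, so $\tfrac{L+1}{2}$ is what actually falls out; this is harmless for all later uses since $L=\tilde O(1)$.
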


\begin{fact}[\downlink{proof_of_fact:p_L_transform}]\label{fact:p_L_transform}
    $\frac{\mathbf{p}_{L,u}(v)}{d_v}=\frac{\mathbf{p}_{L,v}(u)}{d_u}$ for any $u,v\in \mathcal{V}$.
\end{fact}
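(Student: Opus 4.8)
\textbf{Proof proposal for Fact~\ref{fact:p_L_transform}.}

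The plan is to reduce the claimed identity $\mathbf{p}_{L,u}(v)/d_v = \mathbf{p}_{L,v}(u)/d_u$ to a symmetry property of the operator appearing in Definition~\ref{def:p_L}. Writing $\mathbf{M} = \tfrac12\mathbf{I} + \tfrac12\mathbf{P} = \tfrac12\mathbf{I} + \tfrac12\mathbf{A}\mathbf{D}^{-1}$, we have $\mathbf{p}_{L,u} = \tfrac12\sum_{l=0}^{L}\mathbf{M}^l \mathbf{e}_u$, so $\mathbf{p}_{L,u}(v) = \tfrac12\,\mathbf{e}_v^T\bigl(\sum_{l=0}^L \mathbf{M}^l\bigr)\mathbf{e}_u$. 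Thus the statement is equivalent to
\[
\frac{1}{d_v}\,\mathbf{e}_v^T\!\left(\sum_{l=0}^L \mathbf{M}^l\right)\!\mathbf{e}_u \;=\; \frac{1}{d_u}\,\mathbf{e}_u^T\!\left(\sum_{l=0}^L \mathbf{M}^l\right)\!\mathbf{e}_v,
\]
i.e.\ that $\mathbf{D}^{-1}\sum_{l=0}^L \mathbf{M}^l$ is a symmetric matrix. Since a finite sum of symmetric matrices is symmetric, it suffices to show $\mathbf{D}^{-1}\mathbf{M}^l$ is symmetric for every $l \ge 0$.

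The key step is to conjugate into the normalized coordinates. Observe $\mathbf{D}^{-1}\mathbf{M} = \mathbf{D}^{-1}(\tfrac12\mathbf{I} + \tfrac12\mathbf{A}\mathbf{D}^{-1}) = \mathbf{D}^{-1/2}\bigl(\tfrac12\mathbf{I} + \tfrac12\mathbf{D}^{-1/2}\mathbf{A}\mathbf{D}^{-1/2}\bigr)\mathbf{D}^{-1/2} = \mathbf{D}^{-1/2}\,\mathbf{N}\,\mathbf{D}^{-1/2}$, where $\mathbf{N} := \tfrac12\mathbf{I} + \tfrac12\mathcal{A}$ and $\mathcal{A} = \mathbf{D}^{-1/2}\mathbf{A}\mathbf{D}^{-1/2}$ is symmetric (as $\mathbf{A}$ is symmetric and $\mathbf{D}$ diagonal), hence $\mathbf{N}$ is symmetric. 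More generally, $\mathbf{M} = \mathbf{D}^{1/2}\mathbf{N}\mathbf{D}^{-1/2}$ gives $\mathbf{M}^l = \mathbf{D}^{1/2}\mathbf{N}^l\mathbf{D}^{-1/2}$ by telescoping the conjugation, so $\mathbf{D}^{-1}\mathbf{M}^l = \mathbf{D}^{-1/2}\mathbf{N}^l\mathbf{D}^{-1/2}$. Since $\mathbf{N}^l$ is symmetric (a power of a symmetric matrix) and conjugation by the symmetric matrix $\mathbf{D}^{-1/2}$ preserves symmetry — $(\mathbf{D}^{-1/2}\mathbf{N}^l\mathbf{D}^{-1/2})^T = \mathbf{D}^{-1/2}(\mathbf{N}^l)^T\mathbf{D}^{-1/2} = \mathbf{D}^{-1/2}\mathbf{N}^l\mathbf{D}^{-1/2}$ — the matrix $\mathbf{D}^{-1}\mathbf{M}^l$ is symmetric for each $l$. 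Summing over $l = 0,\dots,L$ and multiplying by $\tfrac12$ yields that $\mathbf{D}^{-1}\mathbf{p}_{L,\cdot}$ is symmetric in its two indices, which is exactly the claim.

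I do not anticipate a genuine obstacle here; the only thing to be careful about is the bookkeeping of the similarity transform $\mathbf{M}^l = \mathbf{D}^{1/2}\mathbf{N}^l\mathbf{D}^{-1/2}$ (the inner $\mathbf{D}^{-1/2}\mathbf{D}^{1/2}$ factors cancel across adjacent copies of $\mathbf{M}$) and noting that $\mathbf{D}^{1/2}$, $\mathbf{D}^{-1/2}$ are well defined and symmetric because all degrees are positive. An alternative, entirely equivalent route is to observe that $\mathbf{M}^T = \mathbf{D}^{-1}\mathbf{M}\mathbf{D}$ (a one-line computation from $\mathbf{P}^T = \mathbf{D}^{-1}\mathbf{A} = \mathbf{D}^{-1}\mathbf{P}\mathbf{D}$), hence $(\mathbf{D}^{-1}\mathbf{M}^l)^T = (\mathbf{M}^l)^T\mathbf{D}^{-1} = (\mathbf{M}^T)^l\mathbf{D}^{-1} = (\mathbf{D}^{-1}\mathbf{M}\mathbf{D})^l\mathbf{D}^{-1} = \mathbf{D}^{-1}\mathbf{M}^l\mathbf{D}\mathbf{D}^{-1} = \mathbf{D}^{-1}\mathbf{M}^l$; I would include whichever of the two is shorter in the final writeup.
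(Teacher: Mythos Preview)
Your proposal is correct and takes essentially the same approach as the paper: both reduce the claim to the symmetry of $\mathbf{D}^{-1}\bigl(\tfrac12\mathbf{I}+\tfrac12\mathbf{P}\bigr)^l$, which follows from the conjugation $\mathbf{P}=\mathbf{D}^{1/2}\mathcal{A}\mathbf{D}^{-1/2}$. The paper simply writes the key step $\tfrac{1}{d_v}\mathbf{e}_v^T\mathbf{M}^l\mathbf{e}_u=\tfrac{1}{d_u}\mathbf{e}_u^T\mathbf{M}^l\mathbf{e}_v$ as a one-line equality without spelling out the justification, whereas you make the symmetry argument explicit; either of your two routes would be a fine (and slightly more detailed) version of what the paper does.
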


\begin{fact}[\downlink{proof_of_fact:ER_lower_bound}]\label{fact:ER_lower_bound}
    $r_\mathcal{G}(s,t)\geq \frac{1}{2}\left(\frac{1}{d_s}+\frac{1}{d_t}\right)$.
\end{fact}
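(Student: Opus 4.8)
\textbf{Proof proposal for Fact~\ref{fact:ER_lower_bound}.}

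The plan is to exhibit a lower bound on $r_\mathcal{G}(s,t)$ by keeping only a few nonnegative terms in the series representation from Lemma~\ref{lemma:ER_taylor_expansion}. Writing $\mathbf{M} = \tfrac12\mathbf{I} + \tfrac12\mathbf{P}$ (the lazy random walk matrix), Lemma~\ref{lemma:ER_taylor_expansion} gives
\begin{equation*}
r_\mathcal{G}(s,t) = \tfrac12 (\mathbf{e}_s-\mathbf{e}_t)^T \mathbf{D}^{-1} \sum_{l=0}^{\infty} \mathbf{M}^l (\mathbf{e}_s-\mathbf{e}_t).
\end{equation*}
The cleanest route I see is \emph{not} to manipulate this bilinear form directly (cross terms can have either sign), but to go back to the variational/flow characterization of effective resistance, or alternatively to use a monotonicity (Rayleigh) argument. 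First I would recall that $r_\mathcal{G}(s,t) = \max_{\mathbf{x}} \frac{(x_s - x_t)^2}{\mathbf{x}^T\mathbf{L}\mathbf{x}}$ over $\mathbf{x}\notin\ker\mathbf{L}$. Choosing the test vector $\mathbf{x}=\mathbf{e}_s$ gives $x_s-x_t = 1$ and $\mathbf{e}_s^T\mathbf{L}\mathbf{e}_s = d_s$, hence $r_\mathcal{G}(s,t)\geq 1/d_s$; symmetrically $r_\mathcal{G}(s,t)\geq 1/d_t$, so $r_\mathcal{G}(s,t)\geq \max\{1/d_s,1/d_t\}\geq \tfrac12(1/d_s+1/d_t)$, which is exactly the claim.

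Since the paper develops everything through the Taylor expansion rather than the variational form, I would instead give the argument that stays within the established machinery. Using Equation~\eqref{equ:ER_p_expression}, $r_{\mathcal{G},L}(s,t) = \tfrac{\mathbf{p}_{L,s}(s)}{d_s} - \tfrac{\mathbf{p}_{L,s}(t)}{d_t} - \tfrac{\mathbf{p}_{L,t}(s)}{d_s} + \tfrac{\mathbf{p}_{L,t}(t)}{d_t}$, and by Fact~\ref{fact:p_L_transform} the two middle terms are equal, so $r_{\mathcal{G},L}(s,t) = \tfrac{\mathbf{p}_{L,s}(s)}{d_s} + \tfrac{\mathbf{p}_{L,t}(t)}{d_t} - \tfrac{2\mathbf{p}_{L,s}(t)}{d_t}$. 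Now $\mathbf{p}_{L,s}(s) \geq \tfrac12(\mathbf{M}^0\mathbf{e}_s)(s) = \tfrac12$ since every higher term $\mathbf{M}^l\mathbf{e}_s$ is entrywise nonnegative; similarly $\mathbf{p}_{L,t}(t)\geq\tfrac12$. The remaining task is to bound the cross term $\mathbf{p}_{L,s}(t)$; the natural move is to show $\tfrac{\mathbf{p}_{L,s}(t)}{d_t}$ is small — indeed it is at most $\tfrac12 r_{\mathcal{G},L}(s,t)$ plus a correction, or one can take $L\to\infty$ and use that $\tfrac{\mathbf{p}_{\infty,s}(t)}{d_t}$ equals a symmetric quantity dominated by the diagonal ones. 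Taking $L\to\infty$ and then letting $\epsilon\to 0$ via Lemma~\ref{lem:ER_truncate}, the inequality $r_\mathcal{G}(s,t)\geq \tfrac12(1/d_s+1/d_t)$ follows once the cross term is controlled.

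The main obstacle is precisely controlling the cross term $\mathbf{p}_{L,s}(t)/d_t$ cleanly within the series framework: one needs that the "escape" mass reaching $t$ starting from $s$, appropriately normalized, does not exceed the "return" mass, which is a Cauchy–Schwarz / PSD fact ($\mathbf{L}^\dagger$ is PSD, so the Gram-matrix inequality $(\mathbf{L}^\dagger_{st})^2 \le \mathbf{L}^\dagger_{ss}\mathbf{L}^\dagger_{tt}$ holds, but translating that into the truncated picture needs care). I expect the slickest writeup will simply invoke that $\mathbf{L}^\dagger \succeq 0$ and hence $(\mathbf{e}_s-\mathbf{e}_t)^T\mathbf{L}^\dagger(\mathbf{e}_s-\mathbf{e}_t) \ge \mathbf{e}_s^T\mathbf{L}^\dagger\mathbf{e}_s - 2\sqrt{\mathbf{L}^\dagger_{ss}\mathbf{L}^\dagger_{tt}} + \mathbf{L}^\dagger_{tt}$ is not directly helpful, so the variational test-vector argument of the previous paragraph is almost certainly the intended one-line proof — choose $\mathbf{x}=\mathbf{e}_s$ and $\mathbf{x}=\mathbf{e}_t$, average, done. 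I would present that as the proof and relegate the series-based derivation to a remark.
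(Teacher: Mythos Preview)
Your variational argument with test vectors $\mathbf{x}=\mathbf{e}_s$ and $\mathbf{x}=\mathbf{e}_t$ is correct (and even yields the stronger bound $r_\mathcal{G}(s,t)\ge \max\{1/d_s,1/d_t\}$), but it is not the route the paper takes. The paper instead uses a one-line PSD comparison: since $\mathbf{L}=\mathbf{D}-\mathbf{A}\preceq 2\mathbf{D}$, on $\mathbf{1}^\perp$ one has $\mathbf{L}^\dagger \succeq \tfrac12\mathbf{D}^{-1}$, and plugging in $\mathbf{e}_s-\mathbf{e}_t$ (which is orthogonal to $\mathbf{1}$) gives $r_\mathcal{G}(s,t)\ge \tfrac12(\mathbf{e}_s-\mathbf{e}_t)^T\mathbf{D}^{-1}(\mathbf{e}_s-\mathbf{e}_t)=\tfrac12(1/d_s+1/d_t)$ directly. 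Both proofs are short; the paper's stays entirely in the quadratic-form language already used for $r_\mathcal{G}$ and avoids invoking the variational characterization, while yours requires that extra fact but delivers a sharper constant. Your attempted series-based alternative is rightly abandoned---the cross term cannot be controlled without essentially reintroducing a PSD or variational argument, and the paper makes no attempt to go that way either.
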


\section{Online ER Computation Algorithms}

In this section, we present our online-computation algorithm for single-pair ER computation and prove Theorem \ref{thm:bidir_complexity}. By combining Eq.~(\ref{equ:ER_p_expression}) with Lemma \ref{lem:ER_truncate}, we only need to find the approximation $\hat{\mathbf{p}}_{L,u}(v)$ for each $u,v\in\{s,t\}$ to approximate the ER value $r_\mathcal{G}(s,t)$. The approximation is given by $\hat{r}_\mathcal{G}(s,t)=\frac{\hat{\mathbf{p}}_{L,s}(s)}{d_s}-\frac{\hat{\mathbf{p}}_{L,s}(t)}{d_t}-\frac{\hat{\mathbf{p}}_{L,t}(s)}{d_s}+\frac{\hat{\mathbf{p}}_{L,t}(t)}{d_t}$. Formally, we prove the following Theorem.

\begin{theorem}[\downlink{proof_of_theorem:bidir_guarantee_specific}]\label{thm:bidir_guarantee_specific}
    There exists an Algorithm (i.e. our Algorithm \ref{algo:bidir_single_pair}) outputs the approximation $\hat{\mathbf{p}}_{L,u}(v)$ of $\mathbf{p}_{L,u}(v)$ such that $|\mathbf{p}_{L,u}(v)-\hat{\mathbf{p}}_{L,u}(v)|/d_v\leq \frac{\epsilon}{d}$ for each $u,v\in\{s,t\}$ w.h.p., where $d=\min\{d_s,d_t\}$. Furthermore, the time complexity of this Algorithm can be bounded by $\tilde{O}(\frac{\sqrt{d}}{\epsilon})$.
\end{theorem}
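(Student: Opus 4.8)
The plan is to realize the bidirectional estimator for $\mathbf{p}_{L,u}(v)$ by first running a deterministic coordinate-gradient-descent-style "push" process on the linear system whose solution is $\mathbf{p}_{L,u}$, and then correcting the leftover residual by sampling a moderate number of lazy random walks. Concretely, I would set up the system so that $\mathbf{p}_{L,u}$ is the exact solution of the $L$-step truncated recursion $\mathbf{p}_{L,u}=\tfrac12\mathbf{e}_u+\tfrac12\bigl(\tfrac12\mathbf{I}+\tfrac12\mathbf{P}\bigr)\mathbf{p}_{L-1,u}$, and maintain a reserve vector $\mathbf{q}$ (current estimate) together with level-indexed residual vectors $\mathbf{r}^{(\ell)}$. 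A push at a coordinate $w$ at level $\ell$ moves mass from $\mathbf{r}^{(\ell)}(w)$ into $\mathbf{q}(w)$ and spreads the rest along $\tfrac12\mathbf{I}+\tfrac12\mathbf{P}$ into $\mathbf{r}^{(\ell+1)}$. I would push every coordinate whose residual exceeds a threshold $r_{\max}$; since $L=\tilde O(1)$ and the total $\ell_1$ mass injected is $O(1)$ (by Fact~\ref{fact:p_L_norm_1} the relevant norms are bounded by $L/2$), the number of pushes, hence the deterministic running time, is $\tilde O(1/r_{\max})$ times an $O(d_{\text{touched}})$ factor coming from spreading along neighbors — but because we only need the four coordinates $u,v\in\{s,t\}$, the careful bookkeeping gives total deterministic cost $\tilde O(\mathrm{vol}(\text{pushed set}))$, which I will bound by $\tilde O(1/r_{\max})$ using the invariant that the sum of all residuals stays $O(1)$.

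After the push phase, every residual entry is at most $r_{\max}$, and the remaining contribution to $\mathbf{p}_{L,u}(v)$ equals $\sum_{\ell}\sum_{w}\mathbf{r}^{(\ell)}(w)\cdot\bigl[\text{prob. a lazy walk from }w\text{ of the appropriate remaining length lands at }v\bigr]$, using the standard identity that $\bigl(\tfrac12\mathbf{I}+\tfrac12\mathbf{P}\bigr)^k\mathbf{e}_w$ evaluated at $v$ is $d_v$ times a lazy-walk hitting probability (combined with Fact~\ref{fact:p_L_transform} to move degree factors around). I would estimate this residual term by sampling $N$ lazy random walks: for each sample pick a starting level-and-vertex $(\ell,w)$ with probability proportional to $\mathbf{r}^{(\ell)}(w)$, run the lazy walk for the remaining $L-\ell$ steps, and record an indicator of arriving at $v$ (scaled by the total residual mass $\|\mathbf{r}\|_1$ and the appropriate $1/d_v$ factor). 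Each sample is an unbiased estimator of the residual contribution, and crucially its value is bounded by $\|\mathbf{r}\|_1 \cdot r_{\max}^{?}$ — more precisely, because the per-sample payoff is at most $\|\mathbf{r}\|_1/d$ and the quantity being estimated is $\Theta(r_\mathcal{G}(s,t))=\Omega(1/d)$ by Fact~\ref{fact:ER_lower_bound}, a Chernoff/Bernstein bound shows $N=\tilde O\bigl(\|\mathbf{r}\|_1 d / \epsilon\bigr)\cdot\frac{1}{\text{(target)}\cdot d}=\tilde O(\|\mathbf{r}\|_1/\epsilon)$ samples suffice for additive error $\epsilon/d$ with high probability; each walk costs $\tilde O(1)$ time.

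Balancing the two phases is the crux: deterministic cost $\tilde O(1/r_{\max})$ versus sampling cost $\tilde O(\|\mathbf{r}\|_1/\epsilon)$, where after pushing to threshold $r_{\max}$ one has $\|\mathbf{r}\|_1 \le$ (number of nonzero residual entries)$\cdot r_{\max}$, and here the number of residual entries reachable is $O(\sqrt d \cdot \text{something})$ — this is where the $\sqrt d$ enters. I expect the main obstacle to be getting the residual $\ell_1$ norm bound $\|\mathbf{r}\|_1 = \tilde O(\sqrt{d}\, r_{\max} \cdot \text{poly stuff})$ tight enough: naively $\|\mathbf{r}\|_1$ could be as large as $O(1)$ independent of $r_{\max}$, which would kill the improvement. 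The fix is to observe that the residual, being a pushed-down version of $\tfrac12\mathbf{e}_u$, is supported (in $\ell_1$-heavy part) near $u$ and spreads over roughly $d$ neighbors each carrying mass $\approx 1/d$; pushing removes all coordinates above $r_{\max}$, so for the regime $r_{\max} \ge 1/d$ essentially one push per starting vertex suffices, giving deterministic cost $\tilde O(d)$ but residual mass $\tilde O(1)$... so instead one chooses $r_{\max} = \epsilon/\sqrt d$, making deterministic cost $\tilde O(\sqrt d/\epsilon)$ and, since the residual is then concentrated on $\le \tilde O(\sqrt d/\epsilon \cdot \ldots)$ entries but its total mass is controlled by the structural fact that only the first push-layer from $u$ carries $\Theta(1)$ mass which gets killed, bounding $\|\mathbf r\|_1 = \tilde O(\sqrt d \cdot \epsilon/\sqrt d / \ldots)$; carrying this through, both phases come out to $\tilde O(\sqrt d/\epsilon)$. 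Finally I would union-bound the high-probability guarantee over the four coordinate pairs $u,v\in\{s,t\}$ and combine with Eq.~\eqref{equ:ER_p_expression} and Lemma~\ref{lem:ER_truncate} to conclude $\hat r_\mathcal{G}(s,t)\approx_\epsilon r_\mathcal{G}(s,t)$, which together with the total running time $\tilde O(\sqrt d/\epsilon)$ yields both Theorem~\ref{thm:bidir_guarantee_specific} and Theorem~\ref{thm:bidir_complexity}.
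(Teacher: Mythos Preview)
Your proposal has a genuine gap in the random-walk phase. You sample walks \emph{forward} from the residual --- pick $(\ell,w)$ proportional to $\mathbf{r}^{(\ell)}(w)$ and walk toward $v$ --- and then try to control the variance via $\|\mathbf{r}\|_1$. But the push operation moves mass along the column-stochastic matrix $\tfrac12\mathbf{I}+\tfrac12\mathbf{P}$ and is $\ell_1$-preserving on the residuals: $\sum_i\|\mathbf{r}_{i,u}\|_1$ stays $\Theta(1)$ no matter how small $r_{\max}$ is. Your own calculation then gives sample count $\tilde O(1/\epsilon^2)$, so the push buys nothing. The hand-waved ``fix'' about the heavy part being killed by one push layer does not hold: after pushing, the mass reappears intact at the next level.

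The paper's mechanism is different and is precisely the missing idea. After pushing from $u$, it uses the reversibility identity
\[
\frac{1}{d_v}\,\mathbf{e}_v^T\Bigl(\tfrac12\mathbf{I}+\tfrac12\mathbf{P}\Bigr)^{k}\mathbf{e}_w
\;=\;
\frac{1}{d_w}\,\mathbf{e}_w^T\Bigl(\tfrac12\mathbf{I}+\tfrac12\mathbf{P}\Bigr)^{k}\mathbf{e}_v
\]
to rewrite the residual contribution as an expectation over lazy walks \emph{started from $v$}: when the walk is at $w$ after $k$ steps it collects the payoff $\sum_{i\le L-k}\mathbf{r}_{i,u}(w)/d_w$. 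The push threshold is stated in \emph{degree-normalized} form, $\mathbf{r}_{i,u}(w)\le \tfrac{r_{\max}}{L^2}\,d_w$, so this payoff is uniformly at most $r_{\max}/L$. Thus the per-sample range scales with $r_{\max}$, not with $\|\mathbf{r}\|_1$; together with the mean bound $\mathbb{E}[X]\le \mathbf{p}_{L,u}(v)/d_v\le L/(2d)$ one gets $\mathrm{Var}[\bar X]\lesssim r_{\max}/(n_r d)$, hence $n_r=\tilde O(r_{\max}d/\epsilon^2)$ suffices. Balancing against the push cost $\tilde O(1/r_{\max})$ at $r_{\max}=\epsilon/\sqrt d$ yields $\tilde O(\sqrt d/\epsilon)$. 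In short: the improvement comes from walking from the target $v$ and exploiting the $\ell_\infty$ bound on $\mathbf{D}^{-1}\mathbf{r}$, not from any $\ell_1$ smallness of $\mathbf{r}$.
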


\begin{figure}
    \centering
    \subfigure[Our online-computation algorithm for computing $\mathbf{p}_{L,s}(t)$]{
    \includegraphics[scale=0.35]{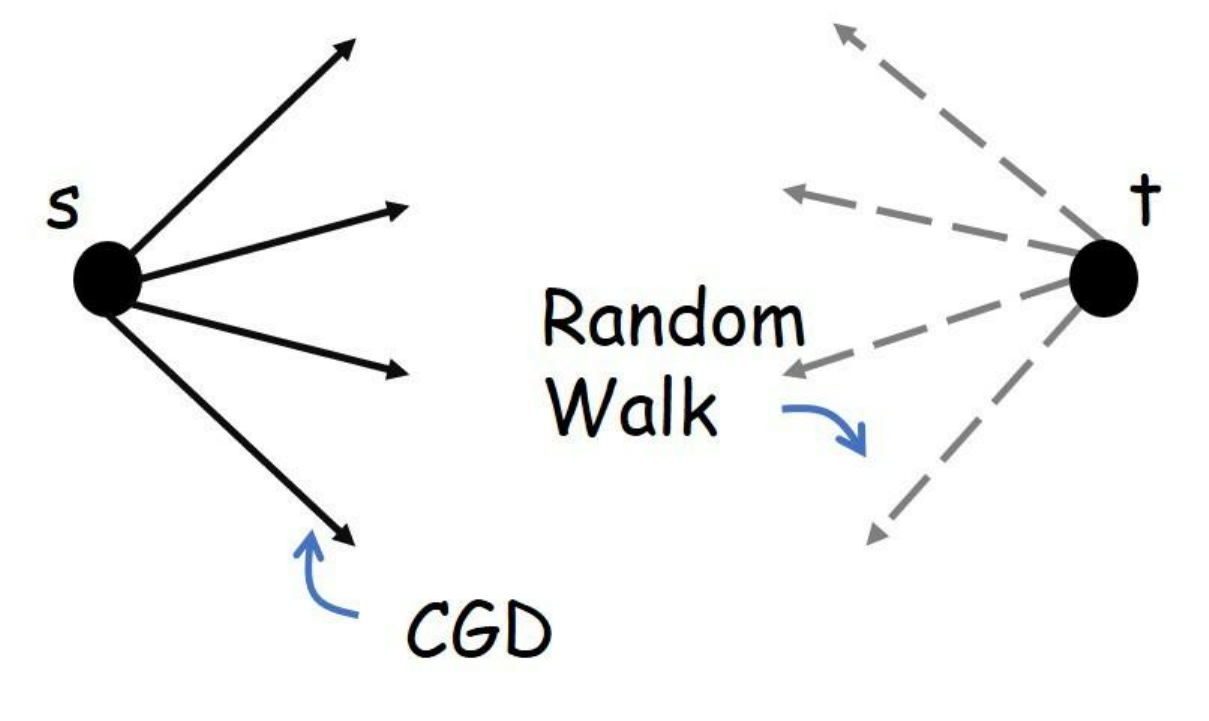} 
    }
    \subfigure[Our index-based algorithm for computing $\mathbf{p}_{L,u}$]{
    \includegraphics[scale=0.30]{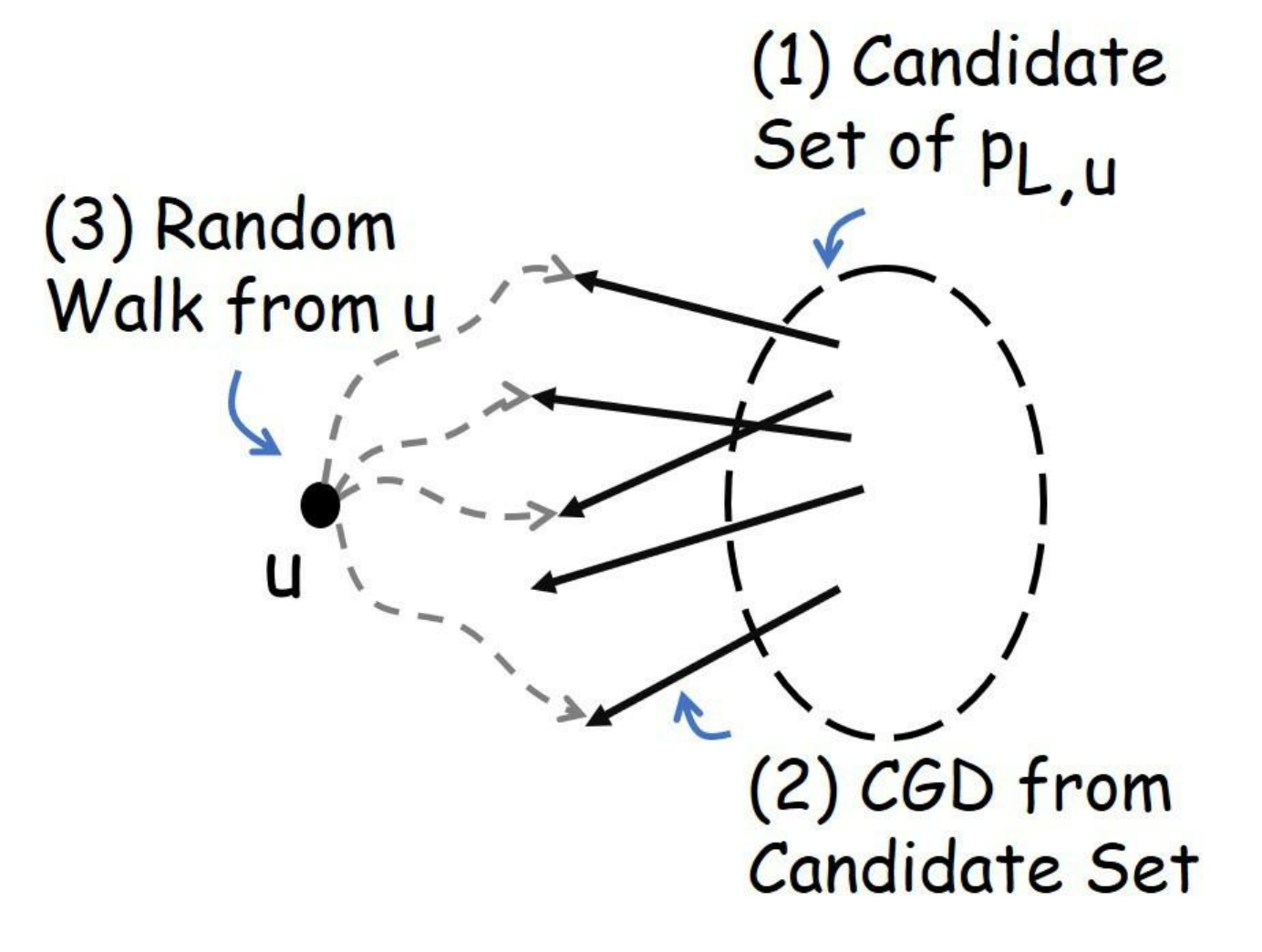} \vspace{-0.3cm}
    }
    \caption{Illustration of our algorithms for ER computation}\vspace{-0.5cm}
    \label{fig:our_algo}
\end{figure}

Next, we describe our algorithm to find the approximation $\hat{\mathbf{p}}_{L,u}(v)$ of $\mathbf{p}_{L,u}(v)$. The pseudo-code is outlined in Algorithm \ref{algo:bidir_single_pair}. Algorithm \ref{algo:bidir_single_pair} consists of two stages:

\begin{itemize}
\item Stage I: we perform a deterministic search, called \push, to obtain a coarse approximation of the vector $\mathbf{p}_{L,u}$, denote as $\mathbf{q}_{L,u}$. We output the residuals $\mathbf{r}_{i,u}$ for every $i\in[L]$.

\item Stage II: we perform random walks to obtain the precise approximation of $\mathbf{p}_{L,s}(s)$ (resp., $\mathbf{p}_{L,s}(t),\mathbf{p}_{L,t}(s),\mathbf{p}_{L,t}(t)$), denote as $\hat{\mathbf{p}}_{L,s}(s)$ (resp., $\hat{\mathbf{p}}_{L,s}(t),\hat{\mathbf{p}}_{L,t}(s),\hat{\mathbf{p}}_{L,t}(t)$). Thus we obtain the $\epsilon$-approximation of $r_\mathcal{G}(s,t)$.
\end{itemize}

As we discussed in Section \ref{sec:intro}, the high-level idea of our Algorithm is to use the deterministic search to reduce the variance produced by random walks, see Fig. \ref{fig:our_algo} (a) as an illustration. More precisely, for Stage I we use a variant of the classical coordinate gradient descent algorithm to approximate $\mathbf{p}_{L,u}$. We denote $\mathbf{x}_{l,u}=\frac{1}{2}(\frac{1}{2}\mathbf{I}+\frac{1}{2}\mathbf{P})^l\mathbf{e}_u$, therefore $\mathbf{p}_{L,u}=\frac{1}{2}\sum_{l=0}^{L}{\left(\frac{1}{2}\mathbf{I}+\frac{1}{2}\mathbf{P}\right)^l}\mathbf{e}_u=\sum_{l=0}^{L}{\mathbf{x}_{l,u}}$. Notice that $\mathbf{x}_{l+1,u}=(\frac{1}{2}\mathbf{I}+\frac{1}{2}\mathbf{P})\mathbf{x}_{l,u}$ for each $l< L$, so computing $\mathbf{p}_{L,u}$ is equivalent to solving the following linear system:
\begin{align*} \small
  \begin{bmatrix}
       \begin{array}{ccccc}
        \mathbf{I} &  & &  & \\
        -\frac{1}{2}\mathbf{I}-\frac{1}{2}\mathbf{P} &  \mathbf{I}  &  & & \\
         & \ddots  &  \ddots & & \\
         & & &\mathbf{I}  &\\
         & & & -\frac{1}{2}\mathbf{I}-\frac{1}{2}\mathbf{P} &   \mathbf{I} 
        \end{array}
        \end{bmatrix} 
\begin{bmatrix}
    \begin{array}{c}
         \mathbf{x}_{0,u}  \\
         \mathbf{x}_{1,u} \\
         \mathbf{x}_{2,u} \\
         \vdots \\
         \mathbf{x}_{L,u}
    \end{array}
\end{bmatrix}
=
\begin{bmatrix}
    \begin{array}{c}
         \frac{1}{2}\mathbf{e}_u \\
         0 \\
         0 \\
         \vdots \\
         0
    \end{array}
\end{bmatrix}
\end{align*}

For this linear system, we compute the approximation $\hat{\mathbf{x}}_{l,u}$ for each $l\in [L]$ and thus obtain the approximation $\mathbf{q}_{L,u}=\sum_{l=0}^{L}{\hat{\mathbf{x}}_{l,u}}$. For a given source node $u$, the Coordinate Gradient Descent (\push) Algorithm outputs the approximation $\mathbf{q}_{L,u}$ and residuals $\mathbf{r}_{i,u}$ for every $i\in[L]$ (without explicitly outputs each approximation $\hat{\mathbf{x}}_{l,u}$). Initially, we define the residual vector $\mathbf{r}_{i,u}=0$ for $i\in [1,...,L]$ and $\mathbf{r}_{0,u}=\frac{1}{2}\mathbf{e}_u$ and the approximation $\mathbf{q}_{L,u}=0$. Next, for each $i\in [0,...,L]$, for each $w\in \mathcal{V}$ with $\mathbf{r}_{i,u}(w)>\frac{r_{max}}{L^2}d_w$, we perform the following three steps (Line 4-11 of Algorithm \ref{algo:push}): (i) add the score $\mathbf{q}_{L,u}(w)$ by $\mathbf{r}_{i,u}(w)$; (ii) uniformly distribute $\frac{1}{2}\mathbf{r}_{i,u}(w)$ to the neighbor of $w$, namely $\mathbf{r}_{i+1,u}(v)\leftarrow \mathbf{r}_{i+1,u}(v)+\frac{1}{2d_w}\mathbf{r}_{i,u} (w)$ for $v\in \mathcal{N}(w)$ and $\mathbf{r}_{i+1,u}(w)\leftarrow \mathbf{r}_{i+1,u}(w)+\frac{1}{2}\mathbf{r}_{i,u}(w)$; (iii) set $\mathbf{r}_{i,u}(w)$ to $0$. When the iteration stops, Algorithm \ref{algo:push} outputs $\mathbf{q}_{L,u}$ as the coarse approximation of $\mathbf{p}_{L,u}$, and $L$ residual vectors $\mathbf{r}_{i,u}$ for each $i\in [L]$. For Stage II, we sample $n_r$ lazy random walks from the source node $s$, each random walk has length $L$ to refine the approximation (Line 6 of Algorithm \ref{algo:bidir_single_pair}). Specifically, a lazy random walk of length $L$ from $s$ denotes a sequence $(v_0=s,v_1,...,v_L)$ such that for every $1\leq k\leq L$, $v_{k}$ is a random neighbor of $v_{k-1}$ with probability $\frac{1}{2}$, and $v_{k+1}=v_k$ with probability $\frac{1}{2}$. For each step $k\leq L$ when the lazy random walk arrives at a node $w=v_k$, we update $\hat{\mathbf{p}}_{L,t}(s)\leftarrow \hat{\mathbf{p}}_{L,t}(s)+\sum_{i=0}^{L-k}{\frac{\mathbf{r}_{i,t}(w)}{d_w}}$ and $\hat{\mathbf{p}}_{L,s}(s)\leftarrow \hat{\mathbf{p}}_{L,s}(s)+\sum_{i=0}^{L-k}{\frac{\mathbf{r}_{i,s}(w)}{d_w}}$. We perform  similar steps for the sink node $t$ (Line 7 of Algorithm \ref{algo:bidir_single_pair}).

\begin{algorithm}[t!]
\small
	\SetAlgoLined
    \renewcommand{\algorithmicrequire}{\textbf{Input:}}
	\renewcommand{\algorithmicensure}{\textbf{Output:}}
    \caption{Our algorithm for single pair ER estimation}\label{algo:bidir_single_pair}
    \begin{algorithmic}[1]
	\REQUIRE $\mathcal{G},s,t, L,\epsilon$
    \STATE $d\leftarrow \min\{d_s,d_t\}$; $r_{max}\leftarrow \epsilon/\sqrt{d}$\; \hfill $\triangleright$ Stage I: deterministic search
	\STATE $\mathbf{q}_{L,s},\mathbf{r}_{i,s}$ for $i\in[L]\leftarrow$ \push ($\mathcal{G},s, L,r_{max}$) 
    \STATE $\mathbf{q}_{L,t},\mathbf{r}_{i,t}$ for $i\in[L]\leftarrow$ \push ($\mathcal{G},t, L,r_{max}$) 
    \STATE $n_r\leftarrow \frac{4L^2}{\epsilon^2}r_{max} d \log n$\; \hfill $\triangleright$ Stage II: random walk sampling
    \FOR{$j$ from $1$ to $n_r$}
    \STATE Perform lazy random walk of length $L$ start from $s$. For each step $k$ with $k\leq L$ it arrive at a node $w$, we update $\hat{\mathbf{p}}_{L,t}(s)\leftarrow \hat{\mathbf{p}}_{L,t}(s)+\frac{d_s}{n_r}\sum_{i=0}^{L-k}{\frac{\mathbf{r}_{i,t}(w)}{d_w}}$ and $\hat{\mathbf{p}}_{L,s}(s)\leftarrow \hat{\mathbf{p}}_{L,s}(s)+\frac{d_s}{n_r}\sum_{i=0}^{L-k}{\frac{\mathbf{r}_{i,s}(w)}{d_w}}$\;
    \STATE Perform lazy random walk of length $L$ start from $t$. For each step $k$ with $k\leq L$ it arrive at a node $w$, we update $\hat{\mathbf{p}}_{L,s}(t)\leftarrow \hat{\mathbf{p}}_{L,s}(t)+\frac{d_t}{n_r}\sum_{i=0}^{L-k}{\frac{\mathbf{r}_{i,s}(w)}{d_w}}$ and $\hat{\mathbf{p}}_{L,t}(t)\leftarrow \hat{\mathbf{p}}_{L,t}(t)+\frac{d_t}{n_r}\sum_{i=0}^{L-k}{\frac{\mathbf{r}_{i,t}(w)}{d_w}}$\;
    \ENDFOR
    \ENSURE $\hat{r}_\mathcal{G}(s,t)=\frac{\hat{\mathbf{p}}_{L,s}(s)}{d_s}-\frac{\hat{\mathbf{p}}_{L,s}(t)}{d_t}-\frac{\hat{\mathbf{p}}_{L,t}(s)}{d_s}+\frac{\hat{\mathbf{p}}_{L,t}(t)}{d_t}$ as the approximation of $r_\mathcal{G}(s,t)$
    \end{algorithmic}
\end{algorithm}

\begin{algorithm}[t!]
\small
	\SetAlgoLined
    \caption{Coordinate Gradient Descent (\push)}\label{algo:push}
    \renewcommand{\algorithmicrequire}{\textbf{Input:}}
	\renewcommand{\algorithmicensure}{\textbf{Output:}}
    \begin{algorithmic}[1]
	\REQUIRE $\mathcal{G},u, L,r_{max}$
   \STATE The approximation $\mathbf{q}_{L,u}(w)=0$ for each $w\in \mathcal{V}$\;
    \STATE $\mathbf{r}_{i,u}(w)=0$ for each $w\in \mathcal{V}$, $i\in [L+1]$, and $\mathbf{r}_{0,u}=\frac{1}{2}\mathbf{e}_s$\;
	\FOR{$i=0,1,2,...,L$}
        \WHILE{$\exists w\in \mathcal{V}$ such that $\mathbf{r}_{i,u}(w)>\frac{r_{max}}{L^2}d_w$}
            \STATE $\mathbf{q}_{L,u}(w)\leftarrow \mathbf{q}_{L,u}(w)+\mathbf{r}_{i,u}(w)$\;
            \FOR {$v\in \mathcal{N}(w)$}
               \STATE $\mathbf{r}_{i+1,u}(v)\leftarrow \mathbf{r}_{i+1,u}(v)+\frac{1}{2d_w}\mathbf{r}_{i,u} (w)$\;
            \ENDFOR
             \STATE $\mathbf{r}_{i+1,u}(w)\leftarrow \mathbf{r}_{i+1,u}(w)+\frac{1}{2}\mathbf{r}_{i,u}(w)$\;
            \STATE $\mathbf{r}_{i,u}(w)\leftarrow 0$\;
        \ENDWHILE
    \ENDFOR
    \ENSURE $\mathbf{q}_{L,u}$ as the coarse approximation of $\mathbf{p}_{L,u}$, residuals $\mathbf{r}_{i,u}$ for each $i\in [L]$
	\end{algorithmic}
\end{algorithm}

\section{Lower Bound for Online Single-Pair ER Computation}\label{sec:lower_bound}

In this section, we prove a lower bound for online single-pair ER computation on graphs. Below, we first describe our construction. Given the parameter $\epsilon\in(0,1)$, we consider two vertex sets $S_1$ and $S_2$, with $|S_1|=n_1$ and $|S_2|=n_2$. We define a $d$-regular expander graph on $S_1$. Note that this can be easily constructed because a random $d$-regular graph is an expander graph with high probability when $d\geq \log^2 n_1$, by ~\cite{friedman2003proof}. On $S_2$, we define an empty graph with only $n_2$ isolated vertices and $n_2\ll n_1$. Then, we add an additional (source) node $s$ with $d_s$ neighbors, such that $d_s\gg 1/\epsilon$. We randomly choose $x$ vertices in $S_1$, denote as $ \mathcal{N}_1=\{v_1,...,v_x\}\subset S_1$, and we randomly choose $(d_s-x)$ vertices in $S_2$, denote as $\mathcal{N}_2=\{v_{x+1},...,v_{d_s}\}\subset S_2$. We set the neighbors of $s$ to be $\mathcal{N}(s)=\{v_1,...,v_{d_s}\}=\mathcal{N}_1\cup \mathcal{N}_2$. For graph $\mathcal{G}_1$ we directly choose $x=d_s$, and for graph $\mathcal{G}_2$ we choose $x=(1-\epsilon)d_s$ for the given parameter $\epsilon\in (0,1)$. Finally, we add a (sink) node $t$ that connect to all vertices in $S_1$ and $ S_2$. 
We choose $d$ such that $d\geq \max \{d_s \log n_1,\log^2 n_1\}$ and $n_1\geq 2d^3$. See Fig. \ref{fig:lower_bound} (a),(b) as an illustration. Clearly, by our construction, any subset of $\mathcal{G}_1$ and $\mathcal{G}_2$ has $\Omega(1)$ expansion, so they are expanders. Then, we prove the following Theorem, which immediately implies Theorem \ref{thm:lower_bound}.\hypertarget{proof_of_theorem:lower_bound}{}

\begin{figure}
    \centering
    \subfigure[graph $\mathcal{G}_1$]{
    \includegraphics[scale=0.3]{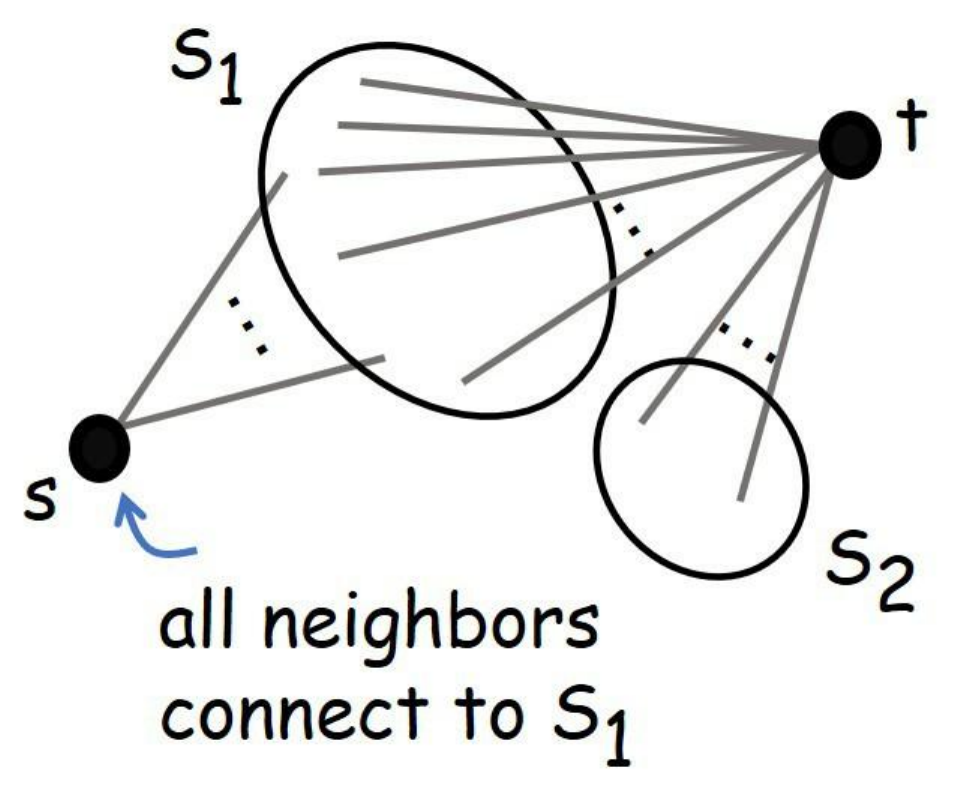} \vspace{-0.3cm}
    }
    \subfigure[graph $\mathcal{G}_2$]{
    \includegraphics[scale=0.3]{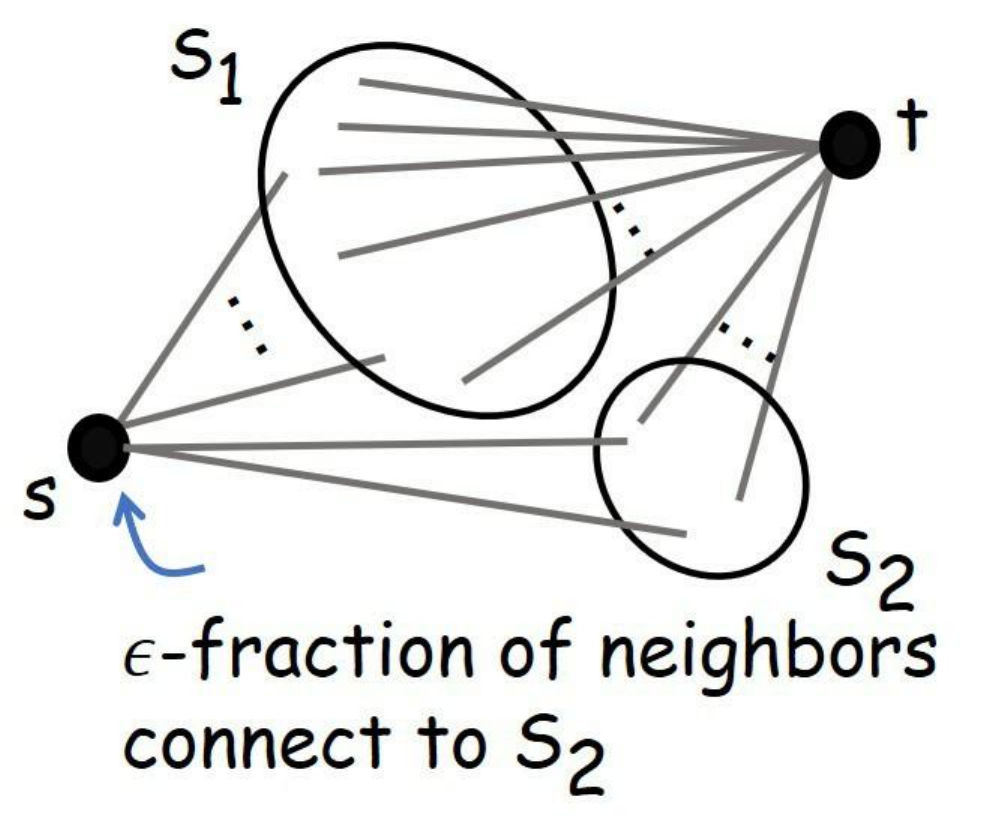} \vspace{-0.3cm}
    }
    \subfigure[parallel resistor illustration]{
    \includegraphics[scale=0.25]{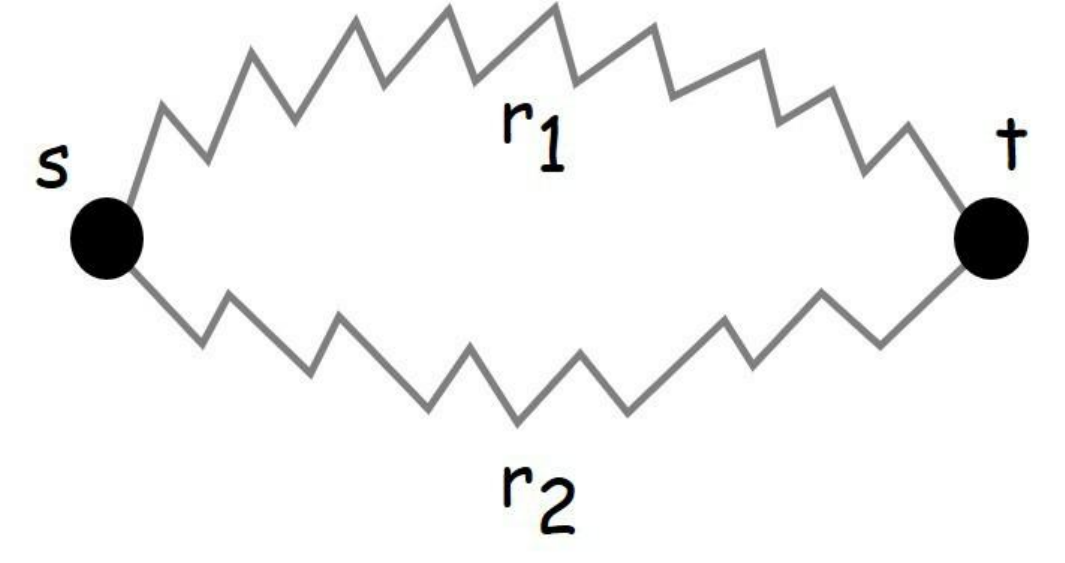}
    }
    \caption{illustration of the construction of our lower bound}\vspace{-0.5cm}
    \label{fig:lower_bound}
\end{figure}

\begin{theorem}
\label{thm:lower_bound_specific}
    $|r_{\mathcal{G}_1}(s,t)-r_{\mathcal{G}_2}(s,t)|> \frac{\epsilon}{4} r_{\mathcal{G}_1}(s,t)$, but any randomized local algorithm requires $\Omega(\frac{1}{\epsilon})$ time to distinguish $\mathcal{G}_1$ and $\mathcal{G}_2$ with probability $\geq 2/3$.
\end{theorem}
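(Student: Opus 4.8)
The statement has two parts—an effective-resistance gap and a query lower bound—and I sketch both.

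\textbf{The ER gap.} The plan is to reduce $r_{\mathcal{G}_i}(s,t)$ to the parallel-resistor idealization from the technique overview and to control the reduction error via the electrical-network interpretation. Ground $t$ (set $\phi(t)=0$) and inject a unit of current at $s$, so $r_{\mathcal{G}_i}(s,t)=\phi(s)$; Kirchhoff's law at $s$ gives $\phi(s)=\frac{1}{d_s}\!\left(1+\sum_{v\sim s}\phi(v)\right)$, where the induced edge currents $c_v=\phi(s)-\phi(v)\ge 0$ sum to $1$. For a neighbour $v\in S_1$, the sub-network consisting of the edge $v$–$t$ together with the $d$ vertex-disjoint length-two paths $v$–$w$–$t$ (over $v$'s $d$ expander-neighbours $w$, each adjacent to $t$) certifies $r_{\mathcal{G}_i\setminus\{s\}}(v,t)\le \tfrac{2}{d+2}$ by Rayleigh monotonicity; then, writing $\phi(v)=\sum_{v'\sim s}c_{v'}\psi^{(v')}(v)$ with $\psi^{(v')}$ the potential under unit current $v'\!\to\! t$, reciprocity ($\psi^{(v')}(v)=\psi^{(v)}(v')$) and the maximum principle ($\psi^{(v)}(v')\le \psi^{(v)}(v)=r_{\mathcal{G}_i\setminus\{s\}}(v,t)$) give $0\le\phi(v)\le \tfrac 2d$. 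For a neighbour $v\in S_2$ (only in $\mathcal{G}_2$), $v$'s only neighbours are $s$ and $t$, so Kirchhoff at $v$ gives $\phi(v)=\phi(s)/2$ exactly. Substituting: $r_{\mathcal{G}_1}(s,t)\in[\tfrac{1}{d_s},\tfrac{1}{d_s}(1+\tfrac{2d_s}{d})]$, and from $d_s\phi(s)(1-\tfrac\epsilon2)=1+\sum_{v\in S_1\cap\mathcal{N}(s)}\phi(v)$ we get $r_{\mathcal{G}_2}(s,t)\in[\tfrac{1}{d_s(1-\epsilon/2)},\tfrac{1+2d_s/d}{d_s(1-\epsilon/2)}]$. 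Since $d\ge d_s\log n_1$ the slack $2d_s/d\le 2/\log n_1$ can be made $\le\epsilon/8$ by taking $n_1$ large (a short check—e.g.\ $\log n_1=\Theta(1/\epsilon)$, $d_s=\Theta(1/\epsilon^2)$, $d=d_s\log n_1$, $n_1=\Theta(d^3)$—shows this is compatible with $d\ge\max\{d_s\log n_1,\log^2 n_1\}$, $n_1\ge 2d^3$, $d_s\gg 1/\epsilon$), and then $r_{\mathcal{G}_2}(s,t)-r_{\mathcal{G}_1}(s,t)\ge \tfrac{1}{d_s}(1+\tfrac\epsilon2)-\tfrac{1}{d_s}(1+\tfrac\epsilon8)>\tfrac\epsilon4\,r_{\mathcal{G}_1}(s,t)$ (and the difference is positive, so the absolute value is this quantity).

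\textbf{The query lower bound.} I would invoke Yao's principle against the randomized construction: the hard distribution outputs $\mathcal{G}_1$ or $\mathcal{G}_2$ with probability $\tfrac12$, where both share the same random $d$-regular expander on $S_1$, the same vertex labels, the same adjacency list of $t$, and the same targets for the non-special neighbour slots of $s$; $\mathcal{G}_2$ is obtained from $\mathcal{G}_1$ only by rerouting a uniformly random $\epsilon d_s$-subset of $s$'s neighbour slots from (random) vertices of $S_1$ to (random) vertices of $S_2$. Let the special set $\mathcal{S}$ contain the $\le\epsilon d_s$ vertices of $S_1$ targeted by special slots in $\mathcal{G}_1$ together with the $\epsilon d_s$ vertices of $S_2$ targeted by them in $\mathcal{G}_2$; by construction every vertex of $\mathcal{V}\setminus\mathcal{S}$ has exactly the same incident edges and degree in $\mathcal{G}_1$ and $\mathcal{G}_2$. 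Hence for a deterministic $q$-query algorithm on the coupled randomness the query transcript is identical under $\mathcal{G}_1$ and $\mathcal{G}_2$ unless some query returns an identifier in $\mathcal{S}$ (event $B$), so its success probability under the uniform prior is at most $\tfrac12+\tfrac12\Pr[B]$. To bound $\Pr[B]$: a vertex of $\mathcal{S}$ can only surface through (i) a neighbour query at $s$ hitting one of the $\epsilon d_s$ special slots among $d_s$—probability $\le\tfrac{\epsilon d_s}{d_s-q}\le 2\epsilon$ per query, since slot orders are a uniform permutation and $q=o(1/\epsilon)=o(d_s)$; (ii) a jump or a neighbour query at $t$ hitting one of the $2\epsilon d_s$ vertices of $\mathcal{S}$ among $\Omega(n_1)$ vertices—$O(\epsilon d_s/n_1)=o(\epsilon)$ per query; or (iii) a neighbour query at some $w\notin\{s,t\}$—impossible for the $S_2$-part of $\mathcal{S}$, and for its $S_1$-part only through an expander half-edge, which by the principle of deferred decisions in the configuration model (using $n_1\ge 2d^3$, $q\ll n_1d$) lands in $\mathcal{S}$ with probability $O(\epsilon d_s/n_1)=o(\epsilon)$ per query. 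A union bound over the $\le q$ queries yields $\Pr[B]\le 2q\epsilon+o(q\epsilon)\le 3q\epsilon$, so success $\ge 2/3$ forces $q=\Omega(1/\epsilon)$, and Yao's principle transfers this to randomized algorithms. (Combined with the ER gap, and since an $\epsilon'$-approximation with $\epsilon'=\epsilon/16$ has disjoint confidence intervals at $r_{\mathcal{G}_1}$ and $r_{\mathcal{G}_2}$, this also yields Theorem~\ref{thm:lower_bound}.)

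\textbf{Main obstacle.} I expect the delicate point to be making the coupling airtight: choosing the randomized construction—including the adjacency-list orderings—so that off $B$ the two transcripts coincide exactly, and in particular recognizing that $\mathcal{S}$ must contain not only the $S_2$-vertices that become neighbours of $s$ in $\mathcal{G}_2$ but also the $S_1$-vertices that thereby stop being neighbours of $s$ (whose degree and neighbour list change). The deferred-decisions estimate in step (iii), the verification that the parameter constraints $d\ge\max\{d_s\log n_1,\log^2 n_1\}$, $n_1\ge 2d^3$, $d_s\gg 1/\epsilon$ can be met simultaneously, and the electrical-network bookkeeping (modulo the $\phi(v)\le 2/d$ estimate, which is the one nonroutine computation there) should all be routine once the coupling is set up correctly.
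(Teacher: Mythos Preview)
Your proposal is essentially correct but takes a different route for the ER gap. The paper separates the two halves via the parallel-resistance (Schur-complement) formula, then computes $r_{S_1}(s,t)=\tfrac{1}{x}+O(\tfrac{\log n_1}{dx})$ through a block-matrix-inverse identity combined with a random-walk mixing bound on the expander; you instead analyze the unit electrical flow directly, bounding each neighbor potential by $\phi(v)\le r_{\mathcal{G}_i\setminus\{s\}}(v,t)\le 2/d$ via Rayleigh monotonicity, superposition, and reciprocity. Your argument is more elementary and avoids the matrix machinery entirely; the paper's route makes the parallel structure explicit and isolates the expander contribution as a standalone lemma. One slip: your parameter check is internally inconsistent, since $n_1=\Theta(d^3)$ with $d$ polynomial in $1/\epsilon$ forces $\log n_1=\Theta(\log(1/\epsilon))$, not $\Theta(1/\epsilon)$. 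The fix is immediate---either take $n_1$ exponential in $1/\epsilon$ (the constraint $n_1\ge 2d^3$ is only a lower bound), or set $d\ge Cd_s/\epsilon$ directly, which the paper's inequality $d\ge d_s\log n_1$ permits.

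For the query lower bound, both you and the paper invoke Yao against the uniform $\{\mathcal{G}_1,\mathcal{G}_2\}$ prior; the paper simply asserts that the only distinguishing information comes through neighbor queries at $s$ and runs a short property-testing calculation. Your explicit coupling with a per-query-type case analysis is more careful---in particular, you correctly include in $\mathcal{S}$ the $S_1$-vertices that lose their edge to $s$, a subtlety the paper glosses over. One residual point: your event $B$ is ``some query \emph{returns} an identifier in $\mathcal{S}$,'' but transcripts can also diverge if the algorithm \emph{issues} a degree or neighbor query at a (guessed) label that happens to lie in $\mathcal{S}$. This is absorbed by the same $O(\epsilon d_s/n_1)$ estimate once labels are randomized in the hard distribution (equivalently, fold blind queries into the jump case), and you should say so explicitly.
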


First, to get intuitions for why $|r_{\mathcal{G}_1}(s,t)-r_{\mathcal{G}_2}(s,t)|> \frac{\epsilon}{4} r_{\mathcal{G}_1}(s,t)$, we use the representation of parallel resistors, see Fig. \ref{fig:lower_bound} (c) as an illustration. By our construction, we connect $x$ neighbors of $s$ to $S_1$ and connect $(d_s-x)$ neighbors of $s$ to $S_2$. Then, we connect the two parts of the resistor $S_1,S_2$ in parallel. For $\mathcal{G}_1$ all the neighbors of $s$ is in $S_1$; for $\mathcal{G}_2$ we choose $x=(1-\epsilon)d_s$, thereby there are $(1-\epsilon)$ fraction of the neighbors of $s$ in $S_1$ and $\epsilon$ fraction of the neighbors in $S_2$. We prove the $s,t$ resistance from $S_1$ is $\frac{1}{x}+o(\frac{1}{x})$; while the $s,t$ resistance from $S_2$ is $\frac{2}{d_s-x}$. As a result, the ER value $r_{\mathcal{G}_1}(s,t)$ and $r_{\mathcal{G}_2}(s,t)$ will differ by $\frac{\epsilon}{4}$ in terms of relative error. To reach this end, we first prove the following Lemma.


\begin{lemma}[\downlink{proof_of_lemma:parallel_resistance}]\label{lem:parallel_resistance}
    $r_{S_1}(s,t)=\frac{1}{x}+O(\frac{1}{dx}\log n_1 )=\frac{1}{x}+O(\frac{1}{d_s x} )$, and $r_{S_2}(s,t)=\frac{2}{d_s-x}$. Where $r_{S_1}(s,t)$ denotes the $s,t$ resistance only consider the edges between $S_1\cup \{s,t\}$ and $r_{S_2}(s,t)$ denotes the $s,t$ resistance only consider the edges between $S_2\cup \{s,t\}$.
\end{lemma}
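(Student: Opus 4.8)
\textbf{Proof proposal for Lemma~\ref{lem:parallel_resistance}.}

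The plan is to analyze the two ``parallel halves'' $S_1$ and $S_2$ separately, computing each effective resistance from $s$ to $t$ restricted to the relevant subgraph. I will handle the easy case $S_2$ first, then the more delicate expander case $S_1$.

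\emph{Computing $r_{S_2}(s,t)$.} In the subgraph on $S_2\cup\{s,t\}$, the vertex $s$ is joined to exactly the $(d_s-x)$ isolated vertices $\{v_{x+1},\dots,v_{d_s}\}$, and each of these is joined to $t$; there are no other edges (the vertices of $S_2$ form an empty graph). Hence this subgraph is exactly $(d_s-x)$ internally-disjoint paths of length $2$ from $s$ to $t$. A length-$2$ unit-resistance path has resistance $2$, and $(d_s-x)$ such paths in parallel give $r_{S_2}(s,t)=\frac{2}{d_s-x}$. This step is immediate from the series and parallel laws for resistors.

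\emph{Computing $r_{S_1}(s,t)$.} In the subgraph on $S_1\cup\{s,t\}$, the source $s$ is connected to the $x$ vertices $\mathcal{N}_1=\{v_1,\dots,v_x\}\subset S_1$, the internal $d$-regular random expander lives on $S_1$, and $t$ is connected to every vertex of $S_1$. I would bound $r_{S_1}(s,t)$ by exhibiting an explicit unit $s$--$t$ flow and using Thomson's principle (energy of any feasible flow upper-bounds the true resistance), together with a matching lower bound from Ohm's law / the fact that resistance only decreases when adding edges. The clean target is the ``ideal'' flow that sends $1/x$ units along each edge $(s,v_i)$, then pushes each such unit straight down the edge $(v_i,t)$: its energy is $x\cdot(1/x)^2\cdot 1 + x\cdot(1/x)^2\cdot 1 = 2/x$, but this double-counts because the two halves are the same resistor network (the edges $(s,\cdot)$ and $(\cdot,t)$ are in series), so the correct zeroth-order term is $\tfrac1x+\tfrac1x$ reorganized as $r_{S_1}(s,t)=\tfrac1x + (\text{correction from routing through the expander}) = \tfrac1x + O(\cdot)$ once one accounts for the fact that the $x$ neighbors of $s$ are not all of $S_1$. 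The correction term is where the expander structure enters: the flow that arrives at $\mathcal{N}_1$ must be (partially) rebalanced across all of $S_1$ before draining to $t$, and the extra energy dissipated in the $d$-regular expander is controlled by its spectral gap. Concretely, I would write the potential vector on $S_1$ as a perturbation of the constant vector, express the rebalancing cost in terms of $\mathcal{L}^\dagger$ of the internal expander, and bound it using $\lambda_2 = \tilde\Omega(1)$ (equivalently, mixing of the $d$-regular random walk) to obtain the $O\!\left(\frac{\log n_1}{dx}\right)$ bound; the final simplification $O\!\left(\frac{\log n_1}{dx}\right)=O\!\left(\frac{1}{d_s x}\right)$ then follows from the standing assumption $d\ge d_s\log n_1$.

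\emph{Main obstacle.} The routine parts (the $S_2$ computation, Thomson's principle bookkeeping) are straightforward; the real work is the expander correction term for $r_{S_1}(s,t)$. The subtlety is that $s$ attaches to only $x=\Theta(d_s)$ of the $|S_1|=\Theta(n_1)$ vertices, so current cannot flow symmetrically, and one must quantify how much extra energy is spent equalizing potentials over $S_1$ through the random $d$-regular graph. I expect the argument to go through a Dirichlet-form / spectral bound: the ``non-uniform'' component of the current injected at $\mathcal{N}_1$ has squared norm $O(1/x)$, and passing it through the internal Laplacian pseudoinverse costs at most $\lambda_2^{-1}$ times that, which with $\lambda_2=\Omega(1)$ (up to $\log n_1$ factors from the random regular graph concentration) and the relation $n_1\ge 2d^3$ yields the stated bound. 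Care is needed because $t$ is also attached to all of $S_1$, so the ``grounding'' at $t$ interacts with the internal expander; I would handle this by contracting $t$ and analyzing the grounded Laplacian $\mathbf{L}_{S_1} + \mathbf{I}$ (or $+\,\mathbf{D}$) restricted to $S_1$, whose smallest eigenvalue is bounded below by a constant, making the perturbation analysis clean.
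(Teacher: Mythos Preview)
Your $S_2$ computation is correct and matches the paper.

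For $S_1$ you have the right pieces but the write-up is garbled at the key step. The direct flow $s\to v_i\to t$ has energy exactly $2/x$; there is no ``double-counting'', and $2/x$ is a legitimate but too-weak Thomson upper bound. What you are reaching for is that the second $1/x$ (the energy on the $x$ edges into $t$) can be replaced by something much smaller if you first spread the flow over all of $S_1$ via the expander: send $1/x$ on each $(s,v_i)$, route the demand $\mathbf{b}=\frac{1}{x}\mathbf{e}_{\mathcal{N}_1}-\frac{1}{n_1}\mathbf{1}$ through $\mathbf{L}_{S_1}$, then drain $1/n_1$ on each of the $n_1$ edges to $t$. The spreading energy is $\mathbf{b}^T\mathbf{L}_{S_1}^\dagger\mathbf{b}\le \frac{1}{d\lambda_2}\|\mathbf{b}\|_2^2=O(\tfrac{1}{dx})$, and Rayleigh monotonicity (short $S_1\cup\{t\}$ to a point) gives the matching lower bound $r_{S_1}(s,t)\ge 1/x$, so the flow-based route does go through once stated this way.

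The paper takes a different, purely algebraic route. It uses the submatrix representation $r_{S_1}(s,t)=\big((\mathbf{L}_{S_1\cup\{s,t\}})_t^{-1}\big)_{ss}$ (delete the row and column of $t$), applies the $2\times 2$ block-inverse formula with the $1\times 1$ block at $s$ equal to $x$, and reads off the \emph{exact} identity
\[
r_{S_1}(s,t)\;=\;\frac{1}{x}\;+\;\frac{1}{x^2}\,\mathbf{e}_{\mathcal{N}_1}^T\Bigl[\mathbf{L}_{S_1}+\mathbf{I}+\mathbf{I}_{\mathcal{N}_1}-\tfrac{1}{x}\mathbf{e}_{\mathcal{N}_1}\mathbf{e}_{\mathcal{N}_1}^T\Bigr]^{-1}\mathbf{e}_{\mathcal{N}_1}.
\]
The bracketed matrix is lower-bounded by $\mathbf{L}_{S_1}+\mathbf{I}$ (since $\mathbf{I}_{\mathcal{N}_1}-\frac{1}{x}\mathbf{e}_{\mathcal{N}_1}\mathbf{e}_{\mathcal{N}_1}^T\succeq 0$), the inverse is expanded as $\frac{1}{d+1}\sum_{k\ge 0}(1-\frac{1}{d+1})^k\mathbf{P}_{S_1}^k$, and the sum is controlled by a mixing-time argument: the first $K=O(\log n_1)$ terms use the crude bound $\mathbf{e}_{\mathcal{N}_1}^T\mathbf{P}_{S_1}^k\mathbf{e}_{\mathcal{N}_1}\le x$, while the tail uses mixing together with the parameter choice $n_1\ge 2d^3$. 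Your closing remark about the grounded Laplacian $\mathbf{L}_{S_1}+\mathbf{I}$ is exactly this object, so you were one step from the paper's argument; the missing ingredient is the block-inverse identity, which delivers the $\frac{1}{x}$ leading term and the nonnegativity of the correction automatically, with no separate lower-bound argument needed.
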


Armed with Lemma \ref{lem:parallel_resistance} and the parallel resistance formula, we prove that $r_{\mathcal{G}_1}(s,t)$ and $r_{\mathcal{G}_2}(s,t)$ differs by $\frac{\epsilon}{4}$ in terms of relative error.

\begin{lemma}[\downlink{proof_of_lemma:G1G2_differ}]\label{lem:G1G2_differ}
    $|r_{\mathcal{G}_1}(s,t)-r_{\mathcal{G}_2}(s,t)|> \frac{\epsilon}{4} r_{\mathcal{G}_1}(s,t).$
\end{lemma}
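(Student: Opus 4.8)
\textbf{Proof proposal for Lemma~\ref{lem:G1G2_differ}.}
The plan is to plug the estimates of Lemma~\ref{lem:parallel_resistance} into the parallel-resistance formula and carefully track the relative error. Recall that in both $\mathcal{G}_1$ and $\mathcal{G}_2$ the $s$-$t$ current can only travel through $S_1\cup\{s,t\}$ or through $S_2\cup\{s,t\}$, and these two subnetworks share only the endpoints $s,t$; hence by the parallel law
\begin{equation*}
r_{\mathcal{G}}(s,t)=\left(\frac{1}{r_{S_1}(s,t)}+\frac{1}{r_{S_2}(s,t)}\right)^{-1}.
\end{equation*}
(When $S_2$ contributes no edges, as in $\mathcal{G}_1$ where $x=d_s$, the second term is simply absent, i.e. $r_{S_2}=\infty$.) So first I would write $r_{\mathcal{G}_1}(s,t)=r_{S_1}(s,t)\big|_{x=d_s}=\frac{1}{d_s}+O(\frac{1}{d_s^2})$ directly from Lemma~\ref{lem:parallel_resistance}, and then express $r_{\mathcal{G}_2}(s,t)$ using $x=(1-\epsilon)d_s$, so that $r_{S_1}(s,t)=\frac{1}{(1-\epsilon)d_s}+O(\frac{1}{(1-\epsilon)d_s^2})$ and $r_{S_2}(s,t)=\frac{2}{\epsilon d_s}$.

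Next I would compute $r_{\mathcal{G}_2}(s,t)$ explicitly. Ignoring the lower-order $O(1/d_s^2)$ terms for a moment, the parallel combination is
\begin{equation*}
r_{\mathcal{G}_2}(s,t)\approx\left((1-\epsilon)d_s+\frac{\epsilon d_s}{2}\right)^{-1}=\frac{1}{d_s}\cdot\frac{1}{1-\epsilon/2}=\frac{1}{d_s}\left(1+\frac{\epsilon}{2}+\Theta(\epsilon^2)\right).
\end{equation*}
Comparing with $r_{\mathcal{G}_1}(s,t)\approx\frac{1}{d_s}$, the difference is $\approx\frac{\epsilon}{2d_s}$, which is $\approx\frac{\epsilon}{2}r_{\mathcal{G}_1}(s,t)$. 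To turn this into the rigorous bound $|r_{\mathcal{G}_1}(s,t)-r_{\mathcal{G}_2}(s,t)|>\frac{\epsilon}{4}r_{\mathcal{G}_1}(s,t)$, I would carry the $O(\frac{1}{d_s^2})$ error terms through the algebra and use the hypotheses $d_s\gg 1/\epsilon$ (so that the $\Theta(1/d_s)$ relative errors coming from Lemma~\ref{lem:parallel_resistance} are $o(\epsilon)$, hence dominated by the $\epsilon/4$ slack between $\epsilon/2$ and $\epsilon/4$) and $\epsilon\in(0,1)$ (so that $\frac{1}{1-\epsilon/2}\ge 1+\frac{\epsilon}{2}$ and the $\Theta(\epsilon^2)$ term only helps). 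A convenient way to make this clean is to lower-bound $r_{\mathcal{G}_2}(s,t)-r_{\mathcal{G}_1}(s,t)$ directly: since $\frac{1}{r_{\mathcal{G}_2}}=\frac{1}{r_{S_1}}+\frac{1}{r_{S_2}}<\frac{1}{r_{\mathcal{G}_1}}$ would go the wrong way, instead bound $\frac{1}{r_{\mathcal{G}_2}(s,t)}\le (1-\epsilon)d_s(1+o(1))+\frac{\epsilon d_s}{2}\le d_s(1-\epsilon/2)(1+o(1))$, invert, and subtract.

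The main technical care — not really an obstacle, but the place where one must be precise — is bookkeeping of the error terms so that the additive $O(1/d_s^2)$ error in $r_{S_1}$ translates to a multiplicative $1+o(\epsilon)$ factor and therefore cannot eat into the constant-factor gap between $\epsilon/2$ and $\epsilon/4$; this is exactly what the assumption $d_s\gg 1/\epsilon$ buys us. One should also double-check the degenerate bookkeeping for $\mathcal{G}_1$: strictly speaking $\mathcal{G}_1$ still has the isolated vertices of $S_2$ and $t$'s edges into $S_2$, but no neighbor of $s$ lies in $S_2$, so no $s$-$t$ current flows through $S_2$ and $r_{\mathcal{G}_1}(s,t)=r_{S_1}(s,t)|_{x=d_s}$ exactly, matching the informal computation above. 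With these points handled, the inequality follows, establishing the constant-factor separation that (together with the indistinguishability argument still to come) will yield Theorem~\ref{thm:lower_bound_specific}.
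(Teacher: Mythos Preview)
Your proposal is correct and follows essentially the same approach as the paper: both compute $r_{\mathcal{G}_1}(s,t)=\frac{1}{d_s}(1+O(1/d_s))$ directly from Lemma~\ref{lem:parallel_resistance}, combine $r_{S_1}$ and $r_{S_2}$ via the parallel law to get $r_{\mathcal{G}_2}(s,t)=\frac{1}{d_s(1-\epsilon/2)}(1+o(\epsilon))$, and then use $d_s\gg 1/\epsilon$ to absorb the $O(1/d_s)$ relative errors as $o(\epsilon)$, yielding a gap of at least $\frac{\epsilon}{2d_s}\ge\frac{\epsilon}{4}r_{\mathcal{G}_1}(s,t)$. The only cosmetic difference is that you work in conductance form ($1/r$) while the paper writes out $r_{\mathcal{G}_2}=\frac{r_{S_1}r_{S_2}}{r_{S_1}+r_{S_2}}$ and simplifies to $d_s r_{\mathcal{G}_2}(s,t)=\frac{2}{2-\epsilon}(1+o(\epsilon))$; the algebra and the use of the hypotheses are identical.
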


Finally, we prove that any local algorithms to distinguish $\mathcal{G}_1$ and $\mathcal{G}_2$ requires at least $\Omega(\frac{1}{\epsilon})$ queries. 

\begin{lemma}[\downlink{proof_of_lemma:G1G2_distinguish}]\label{lem:G1G2_distinguish}
    Any local algorithm that distinguish $\mathcal{G}_1$ and $\mathcal{G}_2$ with probability $\geq 2/3$ requires at least $\Omega(\frac{1}{\epsilon})$ queries.
\end{lemma}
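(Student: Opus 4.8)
\textbf{Proof proposal for Lemma~\ref{lem:G1G2_distinguish}.}
The plan is to argue via an indistinguishability (coupling / hybrid) argument over the random choices in the construction. Recall that $\mathcal{G}_1$ and $\mathcal{G}_2$ differ only in the neighborhood $\mathcal{N}(s)$: in $\mathcal{G}_1$ all $d_s$ neighbors of $s$ lie in $S_1$, whereas in $\mathcal{G}_2$ exactly $\epsilon d_s$ of them are redirected to $S_2$. All other structure (the $d$-regular expander on $S_1$, the isolated vertices $S_2$, and the sink $t$ joined to everything) is identical and, crucially, is generated from the \emph{same} distribution in both cases. So the only way a local algorithm can tell the two instances apart is to detect one of the $\epsilon d_s$ ``rewired'' edges, i.e.\ to encounter an edge incident to $s$ whose other endpoint is in $S_2$ (equivalently, to reach an $S_2$-vertex, which in $\mathcal{G}_2$ is only reachable through $s$ since $S_2$ has no internal edges and is connected to $t$ symmetrically in both graphs, hence visiting $S_2$ through $t$ gives no information).

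The key steps, in order, are as follows. First I would fix the randomness of everything except the assignment of the endpoints of $s$, and reduce the problem to the following: the algorithm is given query access to $s$ with $d_s$ ports; a uniformly random subset of size $\epsilon d_s$ of these ports (in the $\mathcal{G}_2$ world) leads to ``marked'' vertices (those in $S_2$), while in the $\mathcal{G}_1$ world no port is marked. Because $S_2$ is otherwise structurally indistinguishable from a generic vertex reached via $t$ in a deep expander, the algorithm gains information only upon querying a marked port of $s$. Second, I would bound, for any (possibly adaptive, randomized) algorithm making $q$ queries, the probability that it ever queries a marked port. Each \emph{neighbor query} $Neigh(s,i)$ reveals one port; each \emph{jump query} lands in $S_2$ with probability $n_2/(n_1+n_2+2) = o(1)$ which is negligible since $n_2 \ll n_1$; and a neighbor query from a vertex other than $s$ cannot reach $S_2$ except by first reaching $s$. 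Hence after $q$ neighbor-queries at $s$, the chance of hitting a marked port is at most $q \cdot \frac{\epsilon d_s}{d_s} = q\epsilon$ (by a union bound over the queried ports, using that the marked set is a uniformly random $\epsilon$-fraction and adaptivity does not help against a uniformly random hidden subset of fixed size). Third, I would invoke Yao's principle: put the uniform mixture over $\{\mathcal{G}_1,\mathcal{G}_2\}$ (each with probability $1/2$) and note that unless the algorithm queries a marked port, its view is identically distributed under $\mathcal{G}_1$ and $\mathcal{G}_2$, so it cannot guess correctly with probability better than $1/2$ plus the collision probability. Therefore to distinguish with probability $\ge 2/3$ we need $q\epsilon = \Omega(1)$, i.e.\ $q = \Omega(1/\epsilon)$.

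To make the ``$S_2$ is indistinguishable from depth'' claim rigorous I would use that $d \geq \max\{d_s\log n_1,\log^2 n_1\}$ and $n_1 \geq 2d^3$, so that a BFS of radius $O(\log_d n_1)$ from any starting vertex in $S_1$ or from $t$ is (with high probability over the random $d$-regular graph) a tree, hence the local neighborhoods look the same regardless of which $\mathcal{G}_i$ we are in, as long as $s$'s ports have not been probed to reveal a marked one; formally one couples the two BFS explorations step by step and shows the coupling only fails on the event of querying a marked port of $s$. The main obstacle I expect is handling \emph{adaptivity} cleanly — ensuring that an algorithm cannot use partial information (e.g.\ degree queries, or the observation that $t$ has degree $n_1+n_2$) to bias its choice of which ports of $s$ to query toward the marked ones. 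This is resolved by observing that the marked set is chosen uniformly at random \emph{independently} of all other structure the algorithm can observe before touching a port of $s$, so conditioned on the transcript so far (which contains no marked port), the next queried port of $s$ is marked with probability exactly $\frac{\epsilon d_s - (\text{marked found so far})}{d_s - (\text{ports probed so far})} \le \frac{\epsilon d_s}{d_s - q} = \epsilon(1+o(1))$, and summing a geometric-type bound over $q = o(1/\epsilon)$ queries keeps the total hitting probability $o(1)$; combined with the $o(1)$ jump-query contribution this finishes the proof.
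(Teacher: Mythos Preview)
Your proposal is correct and follows essentially the same approach as the paper: both reduce distinguishing $\mathcal{G}_1$ from $\mathcal{G}_2$ to the task of detecting one of the $\epsilon d_s$ ``rewired'' neighbors of $s$, invoke Yao's minimax principle over the uniform mixture of the two instances, and conclude via a union bound that $q\epsilon = \Omega(1)$ queries are required. The paper's write-up is terser---it simply asserts that ``the only way to distinguish $\mathcal{G}_1$ and $\mathcal{G}_2$ is to perform a neighbor query on node $s$'' and then runs a short property-testing calculation---whereas you spell out more carefully (via the coupling/BFS-tree picture and the negligible jump-query contribution) \emph{why} queries elsewhere carry no information, and you handle adaptivity explicitly; these extra details are sound and in fact fill in points the paper leaves implicit.
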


Combining Lemma \ref{lem:G1G2_differ} and Lemma \ref{lem:G1G2_distinguish}, we immediately prove Theorem \ref{thm:lower_bound_specific}, which implies Theorem \ref{thm:lower_bound}.

\section{Index-based Algorithms for ER Computation}

In this section, we present a more efficient index-based ER computation algorithm and prove Theorem \ref{thm:ER_sketch}. Note that by Theorem \ref{thm:bidir_complexity}, if we directly use Algorithm \ref{algo:bidir_single_pair} to compute the all pair ER values, the time complexity is bounded by:
\begin{align*} \small
    \tilde{O}\left(\sum_{u\in \mathcal{V}}{\sum_{v\in \mathcal{V},d_v\leq d_u}{\frac{\sqrt{d_v}}{\epsilon}}}\right)
    &\leq \tilde{O}\left(\sum_{u\in \mathcal{V}}{\sum_{v\in \mathcal{V}}{\frac{\sqrt{d_v}}{\epsilon}}}\right)\\
   (Cauchy-Schwarz) &\leq \tilde{O}\left(\frac{1}{\epsilon}\sum_{u\in \mathcal{V}}{\left(\left(\sum_{v\in \mathcal{V}}{1^2}\right)^{1/2}\left(\sum_{v\in \mathcal{V}}{(\sqrt{d_v})^2}\right)^{1/2}\right)}\right)\\
    &=\tilde{O}\left(\frac{1}{\epsilon}\sum_{u\in \mathcal{V}}{\sqrt{nm}}\right)=\tilde{O}\left(\frac{1}{\epsilon}n\sqrt{nm}\right).
\end{align*}

In this section, we will prove that $\tilde{O}(\sqrt{nm}/{\epsilon})$ processing time is enough for the ER sketch algorithms. Recall that by Eq. (\ref{equ:ER_p_expression}), given any vertex pair $(s,t)$, one can approximate $r_\mathcal{G}(s,t)$ by the linear combination of $\mathbf{p}_{L,s}(s),\mathbf{p}_{L,s}(t),\mathbf{p}_{L,t}(s)$ and $\mathbf{p}_{L,t}(t)$. Therefore, if one can approximates these $n$ vectors $\mathbf{p}_{L,u}$ for all $u\in \mathcal{V}$, one can approximate $r_\mathcal{G}(s,t)$ for any given vertex pair $(s,t)$. We prove the following Theorem.

\begin{theorem}[\downlink{proof_of_theorem:ER_sketch_specific}]\label{thm:ER_sketch_specific}
There exists an algorithm (i.e. Algorithm \ref{algo:bidir_sketch}) that outputs the approximation $\hat{\mathbf{p}}_{L,u}$ for every $u\in \mathcal{V}$, such that $|\hat{\mathbf{p}}_{L,u}(v)-\mathbf{p}_{L,u}(v)|\leq \epsilon$ for every $ v$ with $d_v\leq d_u$. In addition, the time complexity of this algorithm is $\tilde{O}(\sqrt{nm}/{\epsilon})$, the space complexity is $\tilde{O}(n/{\epsilon})$.
\end{theorem}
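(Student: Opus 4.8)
\textbf{Proof proposal for Theorem~\ref{thm:ER_sketch_specific}.}

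The plan is to run the Coordinate Gradient Descent procedure \push\ once from \emph{every} source node $u\in\mathcal{V}$, but with a threshold parameter $r_{max}$ that is chosen more aggressively than in the single-pair algorithm, and then to \emph{share} a single batch of random walks across all sources in order to refine every residual simultaneously. Concretely, I would set $r_{max}=\tilde\Theta(\sqrt{n/m}\cdot\epsilon)$ (up to the $L=\tilde O(1)$ factors), so that the per-source \push\ cost — which by the standard push analysis (cf.\ Fact~\ref{fact:p_L_norm_1}) is $\tilde O(1/r_{max})$ amortized against the $\ell_1$ mass $\|\mathbf{p}_{L,u}\|_1\le L/2$, i.e.\ $\tilde O(L^2 \cdot \text{(pushed mass)}/r_{max})$ — sums over all $n$ sources to $\tilde O(n/r_{max}) = \tilde O(\sqrt{nm}/\epsilon)$. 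Here I also need that \push\ is a valid coarse approximation: by the same residual-propagation invariant used for Stage~I of Algorithm~\ref{algo:bidir_single_pair}, $\mathbf{q}_{L,u}(v)$ under-approximates $\mathbf{p}_{L,u}(v)$, and the deficit is exactly captured by the residual vectors $\mathbf{r}_{i,u}$ via $\mathbf{p}_{L,u}(v)=\mathbf{q}_{L,u}(v)+\sum_{i}\sum_{k}\Pr[\text{lazy walk from }v\text{ reaches }w\text{ in }k\text{ steps}]\,\mathbf{r}_{i,u}(w)$-type identities, which is what Stage~II estimates.

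Next I would bound the sketch size and the variance of the shared random-walk estimator. The key structural observation (observation (i) in the technique overview) is that, with threshold $r_{max}$, the total number of nonzero residual entries produced across all sources is $\tilde O(n/r_{max}) = \tilde O(\sqrt{nm}/\epsilon)$, since each nonzero residual entry corresponds to a unit of pushed mass exceeding $\tilde\Theta(r_{max})$ per coordinate weighted by $d_w$; balancing this against total mass $O(n L)$ gives the $\tilde O(n/\epsilon)$ space bound after accounting that we only store the residuals and the $\mathbf{q}_{L,u}$ values. For the estimator: for each target coordinate $\mathbf{p}_{L,u}(v)$ with $d_v\le d_u$, Stage~II uses lazy random walks from $v$ to pick up residual mass $\mathbf{r}_{i,u}(w)/d_w$; by Fact~\ref{fact:p_L_transform} the symmetry $\mathbf{p}_{L,u}(v)/d_v=\mathbf{p}_{L,v}(u)/d_u$ lets me always walk from the \emph{lower-degree} endpoint, so that a single pool of $n_r$ walks started from each vertex (total $\tilde O(n \cdot n_r)$ steps) suffices. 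Each walk contributes an unbiased estimate whose per-step increment is bounded by $\max_w (\text{residual})/d_w \le \tilde O(r_{max})$ by the \push\ stopping rule; hence by a Chernoff/Bernstein bound, $n_r = \tilde O(r_{max}\cdot (\text{range})/\epsilon^2) = \tilde O(r_{max}/\epsilon^2 \cdot d_u)$ walks — wait, more carefully $n_r=\tilde\Theta(L^2 r_{max}/\epsilon^2 \cdot d_v)$ as in Algorithm~\ref{algo:bidir_single_pair}, and since we want additive error $\epsilon$ on $\mathbf{p}_{L,u}(v)$ itself (not divided by $d_v$) the relevant product works out so that the total walk cost is again $\tilde O(\sqrt{nm}/\epsilon)$. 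I would then take a union bound over all $O(n^2)$ pairs (or just the $O(n)$ vectors $\times\,n$ coordinates), absorbed into the $\tilde O$.

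The main obstacle I anticipate is the careful balancing of $r_{max}$ against the three competing costs — the push work $\tilde O(n/r_{max})$, the random-walk work $\tilde O(\sum_v n_r(v))$ where $n_r(v)\propto r_{max}d_v/\epsilon^2$ so $\sum_v n_r(v)\propto r_{max} m/\epsilon^2$, and the space $\tilde O(n/r_{max})$ — and verifying that the single choice $r_{max}\approx \epsilon\sqrt{n/m}$ makes push work $\tilde O(\sqrt{nm}/\epsilon)$, walk work $\tilde O(\sqrt{nm}/\epsilon)$, and space $\tilde O(\sqrt{nm}/\epsilon)$; I then need the separate, slightly delicate argument that the space can actually be squeezed to $\tilde O(n/\epsilon)$ rather than $\tilde O(\sqrt{nm}/\epsilon)$ — this should follow because the residual entries that survive thresholding, summed over sources, carry total mass $O(nL)$ and each has value $\ge \tilde\Theta(r_{max} d_w)$, so their count is actually $\tilde O(n L/(r_{max}\cdot 1)) $ only when degrees are $\Theta(1)$; in general I would argue the number of stored residual coordinates is $\tilde O(\sum_u \|\mathbf{p}_{L,u}\|_1 / (r_{max}/L^2)) = \tilde O(n L^3/r_{max})$, and plugging $r_{max}=\epsilon\sqrt{n/m}$ gives $\tilde O(\sqrt{nm}/\epsilon)$ which exceeds $n/\epsilon$ when $m\gg n$ — so the honest space bound may need the alternative accounting that stores, per vertex, only an $\tilde O(1/\epsilon)$-sparse summary obtained by rounding, which is exactly observation (i) that $\hat{\mathbf{p}}_{L,u}$ is sparse. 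Resolving this tension cleanly — i.e., showing the \emph{final} sketch (not the intermediate residuals) is $\tilde O(n/\epsilon)$ total — is where I expect the real work to lie, and I would handle it by reusing the walks to directly populate a sparse $\hat{\mathbf{p}}_{L,u}$ and discarding residuals after the walk phase, with the expander hypothesis ($L=\tilde O(1)$) keeping all logarithmic-in-$L$ overheads benign.
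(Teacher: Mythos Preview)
Your cost balancing is right — setting $r_{max}=\epsilon\sqrt{n/m}$ does make both the total \push\ work $\tilde O(n/r_{max})$ and the total random-walk work $\tilde O(r_{max}m/\epsilon^2)$ equal to $\tilde O(\sqrt{nm}/\epsilon)$ — but the approach has a structural gap in the \emph{combining} step, and this is exactly where the paper's algorithm differs from yours. You push from every source $u$ and walk from every target $v$; to produce $\hat{\mathbf p}_{L,u}(v)$ you must then, for each relevant pair $(u,v)$, either re-traverse the walks from $v$ against the residuals of $u$, or iterate over the $\tilde O(L^3/r_{max})$ nonzero residuals of $u$ against the walk counts of $v$. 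Either way the per-pair cost does not depend on how significant $v$ is for $u$, so summing over the $\tilde O(1/\epsilon)$ targets in each candidate set picks up an extra $1/\epsilon$ factor, giving $\tilde O(\sqrt{nm}/\epsilon^2)$ rather than $\tilde O(\sqrt{nm}/\epsilon)$. You also never explain how to identify the candidate set in the first place (the coarse $\mathbf q_{L,u}$ from a fixed-threshold push does \emph{not} reveal $S_{\epsilon,u}$, since its error is $r_{max}d_v$, not $\epsilon$). Your own space discussion correctly flags that storing residuals gives $\tilde O(\sqrt{nm}/\epsilon)$ rather than $\tilde O(n/\epsilon)$, and the proposed fix (``discard residuals after the walk phase'') doesn't help because you need the residuals precisely when combining.

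The paper's algorithm avoids all of this by reversing the roles: it first runs a cheap random-walk pass (Stage~I) from $u$ solely to identify the candidate set $S'_{\epsilon,u}$; it then pushes from each \emph{target} $v\in S'_{\epsilon,u}$ with an \emph{adaptive} threshold $r_{max}/(2\mathbf p'_{L,u}(v))$, so that the push cost for $v$ is $\tilde O(\mathbf p'_{L,u}(v)/r_{max})$ and summing over $v\in S'_{\epsilon,u}$ gives $\tilde O(\|\mathbf p_{L,u}\|_1/r_{max})=\tilde O(L/r_{max})$ per $u$ regardless of $|S'_{\epsilon,u}|$; finally it walks from $u$ once and shares those walks across all $v\in S'_{\epsilon,u}$ (the combining cost is then bounded by the Stage~II cost for the same reason). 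The adaptive threshold is the mechanism that makes the $\mathbf p'_{L,u}(v)$ in the variance denominator cancel against the $\mathbf p_{L,u}(v)$ in the numerator (Lemma~\ref{lem:rw_sketch_concentration}), and simultaneously makes the total push/combine work independent of the candidate-set size. The $\tilde O(n/\epsilon)$ space then follows immediately because the stored output is supported on $\bigcup_u S'_{\epsilon,u}$ with $|S'_{\epsilon,u}|=\tilde O(1/\epsilon)$.
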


The key observation here is that we can approximate the vector $\mathbf{p}_{L,u}$ within almost the same time as just approximate $\mathbf{p}_{L,u}(v)$ for a vertex $v$. The pseudo-code of our ER sketch algorithm is outlined in Algorithm \ref{algo:bidir_sketch}. To compute the approximation vector $\hat{\mathbf{p}}_{L,u}$, Algorithm \ref{algo:bidir_sketch} consists of three stages, see Fig. \ref{fig:our_algo} (b) as an illustration.

\begin{itemize}
\item Stage I: perform $n_r^{(1)}$ random walks to find a candidate set, denote as $S'_{\epsilon,u}=\{w\in \mathcal{V}:\mathbf{p}'_{L,u}(w)\geq \epsilon/2\}$;

\item Stage II: perform \push operations from each $v\in S'_{\epsilon,u}$;

\item Stage III: perform random walks start from $u$ to derive the final approximation $\hat{\mathbf{p}}_{L,u}$.
\end{itemize}

Specifically, our algorithm is implemented as follows. First, we perform a traversal for all $u\in \mathcal{V}$. For each $u$, we compute the approximation $\hat{\mathbf{p}}_{L,u}$. For Stage I, we perform $n_r^{(1)}$ random walks from the source node $u$ to compute a very coarse approximation $\mathbf{p}'_{L,u}$ (Lines 2-4 in Algorithm \ref{algo:bidir_sketch}). We define the candidate set $S'_{\epsilon,u}=\{w\in \mathcal{V}:\mathbf{p}'_{L,u}(w)> \epsilon/2\}$ and next we focus on the approximation on the candidate set $S'_{\epsilon,u}$ (the size of the candidate set $S'_{\epsilon,u}$ is only $\tilde{O}(\frac{1}{\epsilon})$, independent of $n$). For Stage II, we perform \push operations from each $v\in S'_{\epsilon,u}$ with the threshold $\frac{r_{max}}{2 \mathbf{p}'_{L,u}(v)}$ (Lines 5-8 in Algorithm \ref{algo:bidir_sketch}). For Stage III, we perform $n_r^{(3)}$ random walks from the source node $u$ (Lines 9-10 in Algorithm \ref{algo:bidir_sketch}). The key mechanism here is that the random walks can share the computation of \push for different $\mathbf{p}_{L,u}(v_1)$ and $\mathbf{p}_{L,u}(v_2)$ with $v_1\neq v_2$. Therefore, to approximate the vector $\mathbf{p}_{L,u}$ we only need to perform random walks from the source node $u$ without the traversal for $w\in \mathcal{V}$. Finally, we update the result and get the final approximation $\hat{\mathbf{p}}_{L,u}(v)$ for every $v\in S'_{\epsilon,u}$ with $d_v\leq d_u$ (Lines 11-13 in Algorithm \ref{algo:bidir_sketch}). For this step, the operation time is at most the same as Stage II, this enables the local computation of the vector $\mathbf{p}_{L,u}$.

\begin{algorithm}[t!]
\small
\SetAlgoLined
    \renewcommand{\algorithmicrequire}{\textbf{Input:}}
	\renewcommand{\algorithmicensure}{\textbf{Output:}}
    	\caption{Our algorithm for building the index}\label{algo:bidir_sketch}
    \begin{algorithmic}[1]
	\REQUIRE $\mathcal{G},s,t, L,\epsilon$
    \FOR{$u\in \mathcal{V}$}
   \STATE $n_r^{(1)}\leftarrow \frac{ L^2\log n}{\epsilon}$, $\mathbf{p}'_{L,u}\leftarrow 0$  \hfill $\triangleright$ Stage I: find the $\epsilon$-contributing set\;
   \STATE Perform $n_r^{(1)}$ lazy random walks of length $L$ start from $u$. For each lazy random walk, for each step $k$ with $k\leq L$ it arrive at a node $w$, we update $\mathbf{p}'_{L,u}(w)\leftarrow \mathbf{p}'_{L,u}(w)+\frac{1}{2n_r^{(1)}}$\;
    \STATE Define $S'_{\epsilon,u}=\{w\in \mathcal{V}:\mathbf{p}'_{L,u}(w)> \epsilon/2\}$ as the candidate set\;
   \STATE $r_{max}\leftarrow L\epsilon\sqrt{\frac{n}{m}}$ \hfill $\triangleright$ Stage II: \push from candidate set $S_{\epsilon,u}'$\;
    \FOR{$v\in S'_{\epsilon,u}$}
	  \STATE $\mathbf{q}_{L,v},\mathbf{r}_{i,v}$ for $i\in[L]\leftarrow$ \push ($\mathcal{G},w, L,\frac{r_{max}}{2 \mathbf{p}'_{L,u}(v)}$)\;
    \ENDFOR
    \STATE $n_r^{(3)}\leftarrow \frac{L}{\epsilon^2}r_{max}d_u\log  n$ \hfill $\triangleright$ Stage III: random walk sampling from $u$\;
     \STATE Perform $n_r^{(3)}$ lazy random walks of length $L$ start from $u$, denote $N^{(3)}_{u,w,k}$ as the number of walks arrive at $w$ at step $k$ with $k\leq L$.
    \FOR{$v\in S'_{\epsilon,u}$ with $d_v\leq d_u$}
        \STATE $\hat{\mathbf{p}}_{L,u}(v)\leftarrow \frac{\mathbf{q}_{L,v}(u)d_v}{d_u}+\frac{d_v}{n_r^{(3)}}\sum_{k=0}^{L}{\sum_{w\in\mathcal{V}}{\left(\sum_{i=0}^{L-k}{\frac{\mathbf{r}_{i,u}(w)}{d_w}}\right)N^{(3)}_{u,w,k}}}$\;
    \ENDFOR
    \ENDFOR
    \ENSURE $\hat{\mathbf{p}}_{L,u}$ for every $u\in \mathcal{V}$
\end{algorithmic}
\end{algorithm}

\begin{algorithm}[t!]
\small
	\SetAlgoLined
    \caption{Our query algorithm for $s,t$-ER value}\label{algo:sketch_query}
    \renewcommand{\algorithmicrequire}{\textbf{Input:}}
	\renewcommand{\algorithmicensure}{\textbf{Output:}}
    \begin{algorithmic}[1]
	\REQUIRE $s,t$, $\hat{\mathbf{p}}_{L,u}$ for all $u\in\mathcal{V}$
    \IF{$d_s\leq d_t$}
    \STATE $\hat{r}_\mathcal{G}(s,t)=\frac{\hat{\mathbf{p}}_{L,s}(s)}{d_s}-2\frac{\hat{\mathbf{p}}_{L,t}(s)}{d_s}+\frac{\hat{\mathbf{p}}_{L,t}(t)}{d_t}$\;
    \ELSE
   \STATE $\hat{r}_\mathcal{G}(s,t)=\frac{\hat{\mathbf{p}}_{L,s}(s)}{d_s}-2\frac{\hat{\mathbf{p}}_{L,s}(t)}{d_t}+\frac{\hat{\mathbf{p}}_{L,t}(t)}{d_t}$\;
    \ENDIF
    \ENSURE $\hat{r}_\mathcal{G}(s,t)$
    \end{algorithmic}
\end{algorithm}

\section{Conclusion}
In this paper, we propose several new algorithms for approximating Effective Resistance (ER) with reduced dependency on the error parameter $\epsilon$. For online ER computation algorithms, we integrate deterministic search with random walk sampling to develop an $\tilde{O}(\sqrt{d}/{\epsilon})$-time algorithm that $\epsilon$-approximates the single-pair ER value in expander graphs. Additionally, we establish that $\Omega(1/\epsilon)$ represents the lower bound for local ER approximation, even in the case of expander graphs. For index-based ER computation algorithms, we extend our techniques and propose an ER sketch algorithm that advances the state-of-the-art. For the open problems, beyond the natural challenge of improving both upper and lower bounds, it would particularly be interested in exploring the relationship between $\kappa(\mathcal{L})$ and $\epsilon$ in the context of the lower bound for local ER computation, so as to unify the analysis for both expander and non-expander graphs.

\acks{This work is supported by the Funds of the National Natural Science Foundation of China (NFSC) No.U2241211 and U24A20255. Rong-Hua Li is the corresponding author of this paper.}


\bibliography{ref}

\begin{thebibliography}{34}
\providecommand{\natexlab}[1]{#1}
\providecommand{\url}[1]{\texttt{#1}}
\expandafter\ifx\csname urlstyle\endcsname\relax
  \providecommand{\doi}[1]{doi: #1}\else
  \providecommand{\doi}{doi: \begingroup \urlstyle{rm}\Url}\fi

\bibitem[Alev et~al.(2018)Alev, Anari, Lau, and Gharan]{alev2017graph}
Vedat~Levi Alev, Nima Anari, Lap~Chi Lau, and Shayan~Oveis Gharan.
\newblock Graph clustering using effective resistance.
\newblock In \emph{9th Innovations in Theoretical Computer Science Conference (ITCS)}, 2018.

\bibitem[Andoni et~al.(2019)Andoni, Krauthgamer, and Pogrow]{andoni2018solving}
Alexandr Andoni, Robert Krauthgamer, and Yosef Pogrow.
\newblock On solving linear systems in sublinear time.
\newblock In \emph{10th Innovations in Theoretical Computer Science Conference (ITCS)}, 2019.

\bibitem[Black et~al.(2023)Black, Wan, Nayyeri, and Wang]{black2023understanding}
Mitchell Black, Zhengchao Wan, Amir Nayyeri, and Yusu Wang.
\newblock Understanding oversquashing in gnns through the lens of effective resistance.
\newblock In \emph{International Conference on Machine Learning (ICML)}, 2023.

\bibitem[Bressan et~al.(2018)Bressan, Peserico, and Pretto]{bressan2018sublinear}
Marco Bressan, Enoch Peserico, and Luca Pretto.
\newblock Sublinear algorithms for local graph centrality estimation.
\newblock In \emph{59th Annual IEEE Symposium on Foundations of Computer Science (FOCS)}, 2018.

\bibitem[Cai et~al.(2023)Cai, Chen, and Peng]{cai2023effective}
Dongrun Cai, Xue Chen, and Pan Peng.
\newblock Effective resistances in non-expander graphs.
\newblock \emph{31st Annual European Symposium on Algorithms (ESA)}, 2023.

\bibitem[Chu et~al.(2018)Chu, Gao, Peng, Sachdeva, Sawlani, and Wang]{chu2020graph}
Timothy Chu, Yu~Gao, Richard Peng, Sushant Sachdeva, Saurabh Sawlani, and Junxing Wang.
\newblock Graph sparsification, spectral sketches, and faster resistance computation via short cycle decompositions.
\newblock In \emph{2018 IEEE 59th Annual Symposium on Foundations of Computer Science (FOCS)}, 2018.

\bibitem[Cohen-Steiner et~al.(2018)Cohen-Steiner, Kong, Sohler, and Valiant]{cohen2018approximating}
David Cohen-Steiner, Weihao Kong, Christian Sohler, and Gregory Valiant.
\newblock Approximating the spectrum of a graph.
\newblock In \emph{Proceedings of the 24th acm sigkdd international conference on knowledge discovery \& data mining (KDD)}, 2018.

\bibitem[Durfee et~al.(2017)Durfee, Kyng, Peebles, Rao, and Sachdeva]{DKP+2017spanningtree}
David Durfee, Rasmus Kyng, John Peebles, Anup~B. Rao, and Sushant Sachdeva.
\newblock Sampling random spanning trees faster than matrix multiplication.
\newblock In \emph{Proceedings of the 49th Annual ACM SIGACT Symposium on Theory of Computing (STOC)}, 2017.

\bibitem[Dwaraknath et~al.(2024)Dwaraknath, Karmarkar, and Sidford]{dwaraknath2024towards}
Rajat~Vadiraj Dwaraknath, Ishani Karmarkar, and Aaron Sidford.
\newblock Towards optimal effective resistance estimation.
\newblock \emph{Advances in Neural Information Processing Systems (NeurIPS)}, 2024.

\bibitem[Fountoulakis et~al.(2019)Fountoulakis, Roosta-Khorasani, Shun, Cheng, and Mahoney]{fountoulakis2019variational}
Kimon Fountoulakis, Farbod Roosta-Khorasani, Julian Shun, Xiang Cheng, and Michael~W Mahoney.
\newblock Variational perspective on local graph clustering.
\newblock \emph{Mathematical Programming}, 2019.

\bibitem[Fountoulakis et~al.(2020)Fountoulakis, Wang, and Yang]{fountoulakis2020p}
Kimon Fountoulakis, Di~Wang, and Shenghao Yang.
\newblock P-norm flow diffusion for local graph clustering.
\newblock In \emph{International Conference on Machine Learning (ICML)}, 2020.

\bibitem[Fountoulakis et~al.(2023)Fountoulakis, Liu, Gleich, and Mahoney]{fountoulakis2023flow}
Kimon Fountoulakis, Meng Liu, David~F Gleich, and Michael~W Mahoney.
\newblock Flow-based algorithms for improving clusters: A unifying framework, software, and performance.
\newblock \emph{SIAM Review}, 2023.

\bibitem[Friedman(2003)]{friedman2003proof}
Joel Friedman.
\newblock A proof of alon's second eigenvalue conjecture.
\newblock In \emph{Proceedings of the thirty-fifth annual ACM symposium on Theory of computing (STOC)}, 2003.

\bibitem[Jambulapati and Sidford(2018)]{jambulapati2018efficient}
Arun Jambulapati and Aaron Sidford.
\newblock Efficient $\tilde{O} (n/\epsilon)$ spectral sketches for the laplacian and its pseudoinverse.
\newblock In \emph{Proceedings of the Twenty-Ninth Annual ACM-SIAM Symposium on Discrete Algorithms (SODA)}, 2018.

\bibitem[Jin et~al.(2023)Jin, Musco, Sidford, and Singh]{jin2023moments}
Yujia Jin, Christopher Musco, Aaron Sidford, and Apoorv~Vikram Singh.
\newblock Moments, random walks, and limits for spectrum approximation.
\newblock In \emph{The Thirty Sixth Annual Conference on Learning Theory (COLT)}, 2023.

\bibitem[Kyng and Sachdeva(2016)]{kyng2016approximate}
Rasmus Kyng and Sushant Sachdeva.
\newblock Approximate gaussian elimination for laplacians-fast, sparse, and simple.
\newblock In \emph{2016 IEEE 57th Annual Symposium on Foundations of Computer Science (FOCS)}, 2016.

\bibitem[Li and Sachdeva(2023)]{li2023new}
Lawrence Li and Sushant Sachdeva.
\newblock A new approach to estimating effective resistances and counting spanning trees in expander graphs.
\newblock In \emph{Proceedings of the 2023 Annual ACM-SIAM Symposium on Discrete Algorithms (SODA)}, 2023.

\bibitem[Liao et~al.(2023)Liao, Li, Dai, Chen, Qin, and Wang]{liao2023resistance}
Meihao Liao, Rong-Hua Li, Qiangqiang Dai, Hongyang Chen, Hongchao Qin, and Guoren Wang.
\newblock Efficient resistance distance computation: The power of landmark-based approaches.
\newblock \emph{Proceedings of the ACM on Management of Data (SIGMOD)}, 2023.

\bibitem[Liu and Gleich(2020)]{liu2020strongly}
Meng Liu and David~F Gleich.
\newblock Strongly local p-norm-cut algorithms for semi-supervised learning and local graph clustering.
\newblock \emph{Advances in neural information processing systems (NeurIPS)}, 2020.

\bibitem[Lofgren and Goel(2013)]{lofgren2013personalized}
Peter Lofgren and Ashish Goel.
\newblock Personalized pagerank to a target node.
\newblock \emph{arXiv preprint arXiv:1304.4658}, 2013.

\bibitem[Lofgren et~al.(2016)Lofgren, Banerjee, and Goel]{lofgren16bidirection}
Peter Lofgren, Siddhartha Banerjee, and Ashish Goel.
\newblock Personalized pagerank estimation and search: A bidirectional approach.
\newblock In \emph{Proceedings of the Ninth ACM International Conference on Web Search and Data Mining (WSDM)}, 2016.

\bibitem[Madry(2016)]{madry2016computing}
Aleksander Madry.
\newblock Computing maximum flow with augmenting electrical flows.
\newblock In \emph{2016 IEEE 57th Annual Symposium on Foundations of Computer Science (FOCS)}, 2016.

\bibitem[Peng et~al.(2021)Peng, Lopatta, Yoshida, and Goranci]{peng2021local}
Pan Peng, Daniel Lopatta, Yuichi Yoshida, and Gramoz Goranci.
\newblock Local algorithms for estimating effective resistance.
\newblock In \emph{Proceedings of the 27th ACM SIGKDD Conference on Knowledge Discovery \& Data Mining (KDD)}, 2021.

\bibitem[Ron(2019)]{ron2019sublinear}
Dana Ron.
\newblock Sublinear-time algorithms for approximating graph parameters.
\newblock In \emph{Computing and Software Science: State of the Art and Perspectives}. 2019.

\bibitem[Sachdeva et~al.(2014)Sachdeva, Vishnoi, et~al.]{sachdeva2014faster}
Sushant Sachdeva, Nisheeth~K Vishnoi, et~al.
\newblock Faster algorithms via approximation theory.
\newblock \emph{Foundations and Trends{\textregistered} in Theoretical Computer Science}, 2014.

\bibitem[Saito and Herbster(2023)]{saito2023multi}
Shota Saito and Mark Herbster.
\newblock Multi-class graph clustering via approximated effective $ p $-resistance.
\newblock In \emph{International Conference on Machine Learning (ICML)}, 2023.

\bibitem[Sinop et~al.(2023)Sinop, Fawcett, Gollapudi, and Kollias]{sinop2023robust}
Ali~Kemal Sinop, Lisa Fawcett, Sreenivas Gollapudi, and Kostas Kollias.
\newblock Robust routing using electrical flows.
\newblock \emph{ACM Transactions on Spatial Algorithms and Systems}, 2023.

\bibitem[Spielman(2019)]{spielman2019sagt}
Spielman.
\newblock \emph{Spectral and Algebraic Graph Theory.}
\newblock Unpublished Manuscript, 2019.

\bibitem[Spielman and Srivastava(2008)]{spielman2008graph}
Daniel~A Spielman and Nikhil Srivastava.
\newblock Graph sparsification by effective resistances.
\newblock In \emph{Proceedings of the fortieth annual ACM symposium on Theory of computing (STOC)}, 2008.

\bibitem[Topping et~al.(2022)Topping, Di~Giovanni, Chamberlain, Dong, and Bronstein]{topping2021understanding}
Jake Topping, Francesco Di~Giovanni, Benjamin~Paul Chamberlain, Xiaowen Dong, and Michael~M Bronstein.
\newblock Understanding over-squashing and bottlenecks on graphs via curvature.
\newblock In \emph{International Conference on Learning Representations (ICLR)}, 2022.

\bibitem[van~den Brand et~al.(2022)van~den Brand, Gao, Jambulapati, Lee, Liu, Peng, and Sidford]{van2022faster}
Jan van~den Brand, Yu~Gao, Arun Jambulapati, Yin~Tat Lee, Yang~P Liu, Richard Peng, and Aaron Sidford.
\newblock Faster maxflow via improved dynamic spectral vertex sparsifiers.
\newblock In \emph{Proceedings of the 54th Annual ACM SIGACT Symposium on Theory of Computing (STOC)}, 2022.

\bibitem[Wang et~al.(2024)Wang, Wei, Wen, and Yang]{wang2024revisiting}
Hanzhi Wang, Zhewei Wei, Ji-Rong Wen, and Mingji Yang.
\newblock Revisiting local computation of pagerank: Simple and optimal.
\newblock In \emph{Proceedings of the 56th Annual ACM Symposium on Theory of Computing (STOC)}, 2024.

\bibitem[Wei et~al.(2024)Wei, Wen, and Yang]{wei2024approximating}
Zhewei Wei, Ji-Rong Wen, and Mingji Yang.
\newblock Approximating single-source personalized pagerank with absolute error guarantees.
\newblock In \emph{27th International Conference on Database Theory (ICDT)}, 2024.

\bibitem[Yang and Tang(2023)]{yang2023efficient}
Renchi Yang and Jing Tang.
\newblock Efficient estimation of pairwise effective resistance.
\newblock \emph{Proceedings of the ACM on Management of Data (SIGMOD)}, 2023.

\end{thebibliography}

\appendix

\section{Guideline of the Appendix} 
In Appendix \ref{sec:parallel_resistance}, we provide some explainations for the parallel resistance formula. In Appendix \ref{sec:analysis_single_pair}, we provide the theoretical analysis of Algorithm \ref{algo:bidir_single_pair} (i.e. our query processing algorithm for ER computation of a single pair) and prove Theorem \ref{thm:bidir_complexity}. In Appendix \ref{sec:analysis_sketch}, we provide the theoretical analysis of Algorithm \ref{algo:bidir_sketch} (i.e. our index based algorithm for ER computation) and prove Theorem \ref{thm:ER_sketch}. In Appendix \ref{sec:other_proofs}, we provide the other omitting proofs of this paper, including the omitting proofs in section \ref{sec:basic_repre} and section \ref{sec:lower_bound}.

\section{Parallel Resistance Formula}\label{sec:parallel_resistance}

For the proof of our lower bound, we provide some details of the parallel resistance formula. We begin by the definition of Schur complement.

\begin{definition}
    Given a graph $\mathcal{G}=(\mathcal{V},\mathcal{E})$, The Schur complement of a subset $S\subset \mathcal{V}$ is a weighted graph with the corresponding Laplacian matrix $\mathbf{SC}(\mathcal{G},S)=\mathbf{L}_{[S,S]}-\mathbf{L}_{[S,\bar{S}]}\mathbf{L}_{[\bar{S},\bar{S}]}^{-1}\mathbf{L}_{[\bar{S},S]}$, where $\bar{S}=\mathcal{V}- S$ and $\mathbf{L}_{[S,S]}$ denotes the submatrix of $\mathbf{L}$ with row and column indexed by $S$.
\end{definition}

There is a well known result that the $s,t$-ER value on graph $\mathcal{G}$ is equal to the $s,t$-ER value on the Schur complement.

\begin{theorem}
    For a subset $S\subset \mathcal{V}$ such that $s,t\in S$, we have $r_\mathcal{G}(s,t)=(\mathbf{e}_s-\mathbf{e}_t)^T\mathbf{L}^\dagger (\mathbf{e}_s-\mathbf{e}_t)=(\mathbf{e}_s-\mathbf{e}_t)^T\mathbf{SC}(\mathcal{G},S)^\dagger (\mathbf{e}_s-\mathbf{e}_t)$.
\end{theorem}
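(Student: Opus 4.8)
The plan is to prove the identity through the Dirichlet-energy (variational) characterization of effective resistance, so that Schur-complement invariance reduces to the standard fact that partially minimizing a quadratic form over a block of variables produces exactly the Schur complement of that block. Without loss of generality I assume $\mathcal{G}$ is connected; otherwise I restrict to the connected component containing $s$ and $t$, since the other components affect neither side of the identity, and connectivity is what makes $\mathbf{L}^\dagger$ behave as expected.

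First I would record the variational identity
$r_\mathcal{G}(s,t)=(\mathbf{e}_s-\mathbf{e}_t)^T\mathbf{L}^\dagger(\mathbf{e}_s-\mathbf{e}_t)=\max_{\mathbf{x}\in\mathbb{R}^{\mathcal{V}}}\left(2(\mathbf{e}_s-\mathbf{e}_t)^T\mathbf{x}-\mathbf{x}^T\mathbf{L}\mathbf{x}\right)$.
This holds because the maximand is concave, its stationarity condition is $\mathbf{L}\mathbf{x}=\mathbf{e}_s-\mathbf{e}_t$, the vector $\mathbf{e}_s-\mathbf{e}_t\perp\mathbf{1}$ lies in the range of $\mathbf{L}$, so $\mathbf{x}=\mathbf{L}^\dagger(\mathbf{e}_s-\mathbf{e}_t)$ is a maximizer, with optimal value $(\mathbf{e}_s-\mathbf{e}_t)^T\mathbf{L}^\dagger(\mathbf{e}_s-\mathbf{e}_t)$. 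The same statement holds verbatim with $\mathbf{L}$ replaced by any PSD matrix $\mathbf{M}$ and $\mathbf{e}_s-\mathbf{e}_t$ replaced by any vector in the range of $\mathbf{M}$; I will reuse this fact for the Schur complement.

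Next I would split coordinates as $\mathbf{x}=(\mathbf{x}_S,\mathbf{x}_{\bar S})$ with $\bar S=\mathcal{V}\setminus S$ and carry out the inner maximization over $\mathbf{x}_{\bar S}$ for fixed $\mathbf{x}_S$. Because $s,t\in S$, the linear term $2(\mathbf{e}_s-\mathbf{e}_t)^T\mathbf{x}$ depends only on $\mathbf{x}_S$, while $\mathbf{x}^T\mathbf{L}\mathbf{x}=\mathbf{x}_S^T\mathbf{L}_{[S,S]}\mathbf{x}_S+2\mathbf{x}_S^T\mathbf{L}_{[S,\bar S]}\mathbf{x}_{\bar S}+\mathbf{x}_{\bar S}^T\mathbf{L}_{[\bar S,\bar S]}\mathbf{x}_{\bar S}$. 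Since $\mathcal{G}$ is connected, $\mathbf{L}_{[\bar S,\bar S]}$ is positive definite (a proper principal submatrix of a connected-graph Laplacian is PD), so the inner minimum of $\mathbf{x}^T\mathbf{L}\mathbf{x}$ is attained uniquely at $\mathbf{x}_{\bar S}=-\mathbf{L}_{[\bar S,\bar S]}^{-1}\mathbf{L}_{[\bar S,S]}\mathbf{x}_S$, and substituting gives $\min_{\mathbf{x}_{\bar S}}\mathbf{x}^T\mathbf{L}\mathbf{x}=\mathbf{x}_S^T\,\mathbf{SC}(\mathcal{G},S)\,\mathbf{x}_S$. Hence $r_\mathcal{G}(s,t)=\max_{\mathbf{x}_S}\left(2(\mathbf{e}_s-\mathbf{e}_t)^T\mathbf{x}_S-\mathbf{x}_S^T\,\mathbf{SC}(\mathcal{G},S)\,\mathbf{x}_S\right)$.

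Finally I would apply the variational identity in reverse to $\mathbf{M}=\mathbf{SC}(\mathcal{G},S)$. The only remaining point is that $\mathbf{e}_s-\mathbf{e}_t$ lies in the range of $\mathbf{SC}(\mathcal{G},S)$: the Schur complement of a connected-graph Laplacian is itself the Laplacian of a connected weighted graph on $S$, hence PSD with kernel exactly $\mathrm{span}\{\mathbf{1}_S\}$, and $\mathbf{e}_s-\mathbf{e}_t\perp\mathbf{1}_S$. Therefore the maximum equals $(\mathbf{e}_s-\mathbf{e}_t)^T\,\mathbf{SC}(\mathcal{G},S)^\dagger\,(\mathbf{e}_s-\mathbf{e}_t)$, which is the claimed identity. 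The main thing to be careful with is the bookkeeping of kernels and ranges of the pseudo-inverses — specifically the invertibility of $\mathbf{L}_{[\bar S,\bar S]}$ and the preservation of the all-ones kernel by $\mathbf{SC}(\mathcal{G},S)$ — rather than any hard computation; a direct block-matrix-inversion argument is an alternative but more tedious route that I would avoid.
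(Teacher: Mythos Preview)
Your argument via the Dirichlet-energy variational characterization is correct: the key steps --- the identity $\mathbf{b}^T\mathbf{M}^\dagger\mathbf{b}=\max_{\mathbf{x}}(2\mathbf{b}^T\mathbf{x}-\mathbf{x}^T\mathbf{M}\mathbf{x})$ for $\mathbf{b}\in\mathrm{range}(\mathbf{M})$, the partial minimization over $\mathbf{x}_{\bar S}$ yielding precisely the Schur complement quadratic form, and the range/kernel bookkeeping --- are all sound.

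As for comparison: the paper does not actually supply a proof of this theorem. It is introduced as ``a well known result'' and stated without argument; it is then used as a black box in the proof of the parallel resistance formula (Theorem~\ref{thm:parallel_resistance}). So there is no ``paper's own proof'' to match against. Your energy-minimization route is one of the standard proofs of this fact and is arguably the cleanest, since it makes the Schur complement appear naturally as the result of eliminating the $\bar S$-variables from the quadratic form, exactly paralleling the paper's later remark that ``taking the Schur complement is equivalent to the partial Gaussian elimination procedure.'' The alternative you allude to --- a direct block pseudo-inverse computation --- would also work but is indeed more tedious.
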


Now we consider a graph $\mathcal{G}=(\mathcal{V},\mathcal{E})$, such that $\mathcal{V}=\{s,t\}\sqcup S_1\sqcup S_2$, and there are no edges between $S_1$ and $S_2$, i.e. $\mathcal{E}(S_1,S_2)=\emptyset$, and $(s,t)\notin \mathcal{E}$. We denote $\mathcal{G}_{S_1}$ as the induced subgraph on $\{s,t\}\sqcup S_1$, and $\mathcal{G}_{S_2}$ as the induced subgraph on $\{s,t\}\sqcup S_2$. We denote $r_{\mathcal{G}}(s,t)$ as the $s,t$-ER value on $\mathcal{G}$, $r_{S_1}(s,t)$ as the $s,t$-ER value on subgraph $\mathcal{G}_{S_1}$, and $r_{S_2}(s,t)$ as the $s,t$-ER value on subgraph $\mathcal{G}_{S_2}$, respectively. Then we provide the following parallel resistance formula.

\begin{theorem}\label{thm:parallel_resistance}
    $\frac{1}{r_\mathcal{G}(s,t)}=\frac{1}{r_{S_1}(s,t)}+\frac{1}{r_{S_2}(s,t)}$.
\end{theorem}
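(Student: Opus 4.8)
\textbf{Proof plan for Theorem~\ref{thm:parallel_resistance} (parallel resistance formula).}
The plan is to reduce everything to the Schur complement onto the three-vertex set $S=\{s,t\}$. Since $s,t\in S$ and $s,t$ are also contained in the vertex sets of both induced subgraphs $\mathcal{G}_{S_1}$ and $\mathcal{G}_{S_2}$, the preceding theorem on Schur complements tells us that $r_{\mathcal{G}}(s,t)$ equals the $s,t$-ER value on $\mathbf{SC}(\mathcal{G},\{s,t\})$, and likewise $r_{S_1}(s,t)$, $r_{S_2}(s,t)$ equal the $s,t$-ER values on $\mathbf{SC}(\mathcal{G}_{S_1},\{s,t\})$ and $\mathbf{SC}(\mathcal{G}_{S_2},\{s,t\})$ respectively. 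Each of these Schur complements is a Laplacian on only two vertices $\{s,t\}$, hence of the form $w\begin{psmallmatrix}1&-1\\-1&1\end{psmallmatrix}$ for some nonnegative weight $w$ (a single edge between $s$ and $t$, possibly with weight $0$), and the $s,t$-ER value of such a graph is exactly $1/w$. So the theorem amounts to showing the weights add: $w_{\mathcal{G}} = w_{S_1} + w_{S_2}$.

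Next I would establish this additivity directly from the Schur complement formula $\mathbf{SC}(\mathcal{G},S)=\mathbf{L}_{[S,S]}-\mathbf{L}_{[S,\bar S]}\mathbf{L}_{[\bar S,\bar S]}^{-1}\mathbf{L}_{[\bar S,S]}$ with $S=\{s,t\}$ and $\bar S = S_1\sqcup S_2$. The crucial structural fact is that there are no edges between $S_1$ and $S_2$ and $(s,t)\notin\mathcal{E}$, so with the ordering $(\bar S) = (S_1, S_2)$ the matrix $\mathbf{L}_{[\bar S,\bar S]}$ is block-diagonal: $\mathbf{L}_{[\bar S,\bar S]} = \operatorname{diag}(\mathbf{M}_1,\mathbf{M}_2)$ where $\mathbf{M}_i$ is the principal submatrix of $\mathbf{L}$ indexed by $S_i$. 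Consequently $\mathbf{L}_{[\bar S,\bar S]}^{-1} = \operatorname{diag}(\mathbf{M}_1^{-1},\mathbf{M}_2^{-1})$, and the coupling term splits as a sum
\[
\mathbf{L}_{[S,\bar S]}\mathbf{L}_{[\bar S,\bar S]}^{-1}\mathbf{L}_{[\bar S,S]} = \mathbf{L}_{[S,S_1]}\mathbf{M}_1^{-1}\mathbf{L}_{[S_1,S]} + \mathbf{L}_{[S,S_2]}\mathbf{M}_2^{-1}\mathbf{L}_{[S_2,S]}.
\]
Also $\mathbf{L}_{[S,S]}$ (the $\{s,t\}$-principal block of $\mathbf{L}$ in $\mathcal{G}$) decomposes as $\mathbf{L}^{(1)}_{[S,S]} + \mathbf{L}^{(2)}_{[S,S]}$ where $\mathbf{L}^{(i)}_{[S,S]}$ is the diagonal matrix of degrees of $s,t$ into $S_i$ (since $(s,t)\notin\mathcal{E}$, the off-diagonal entries vanish, and every edge incident to $s$ or $t$ goes to exactly one of $S_1$, $S_2$). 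Moreover $\mathbf{L}_{[S,S_i]}$ and $\mathbf{M}_i$ are precisely the corresponding blocks of the Laplacian of the induced subgraph $\mathcal{G}_{S_i}$. Matching terms gives $\mathbf{SC}(\mathcal{G},S) = \mathbf{SC}(\mathcal{G}_{S_1},S) + \mathbf{SC}(\mathcal{G}_{S_2},S)$, i.e.\ $w_{\mathcal{G}} = w_{S_1}+w_{S_2}$, and taking reciprocals yields $\tfrac{1}{r_{\mathcal{G}}(s,t)} = \tfrac{1}{r_{S_1}(s,t)}+\tfrac{1}{r_{S_2}(s,t)}$.

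The main obstacle to watch for is a technical edge case: the Schur complement formula requires $\mathbf{L}_{[\bar S,\bar S]}$ to be invertible, which holds when $\mathcal{G}$ (and each $\mathcal{G}_{S_i}$) is connected, but if some $S_i$ together with $\{s,t\}$ is disconnected then $\mathbf{M}_i$ could be singular and $r_{S_i}(s,t)$ could be $+\infty$. One should either assume connectivity of $\mathcal{G}$ (which is the relevant case for the lower-bound construction, where all of $S_1$, $S_2$ attach to both $s$ and $t$), or handle the degenerate weight-$0$ case separately by interpreting $1/\infty = 0$. Beyond this, the only care needed is bookkeeping the block decompositions consistently; the identity that the Schur complement of a ``parallel'' graph onto the shared terminals is the sum of the individual Schur complements is the heart of the matter and is essentially immediate once the block-diagonal structure of $\mathbf{L}_{[\bar S,\bar S]}$ is observed.
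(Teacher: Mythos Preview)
Your proposal is correct and follows essentially the same approach as the paper: both reduce to the Schur complement onto $\{s,t\}$, observe that each such complement is a single weighted edge with weight equal to the reciprocal of the corresponding effective resistance, and then argue that these edge weights add. The only cosmetic difference is how additivity is justified---the paper invokes the interpretation of the Schur complement as partial Gaussian elimination (so eliminating the disjoint blocks $S_1$ and $S_2$ proceeds independently), whereas you carry out the equivalent block-matrix computation directly from the formula $\mathbf{SC}(\mathcal{G},S)=\mathbf{L}_{[S,S]}-\mathbf{L}_{[S,\bar S]}\mathbf{L}_{[\bar S,\bar S]}^{-1}\mathbf{L}_{[\bar S,S]}$ using the block-diagonal structure of $\mathbf{L}_{[\bar S,\bar S]}$.
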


\begin{proof}
    We prove this result by linear algebra argument instead of physics intuition. First, we consider the Schur complement of $\mathcal{G}_{S_1}$ on $\{s,t\}$, $\mathbf{SC}(\mathcal{G}_{S_1},\{s,t\})$. We notice that this is a graph with single (weighted) edge between $s,t$. So we denote $\mathbf{SC}(\mathcal{G}_{S_1},\{s,t\})=w_{s,t} (\mathbf{e}_s-\mathbf{e}_t)(\mathbf{e}_s-\mathbf{e}_t)^T$. Similarly, we denote $\mathbf{SC}(\mathcal{G}_{S_2},\{s,t\})=w_{s,t}' (\mathbf{e}_s-\mathbf{e}_t)(\mathbf{e}_s-\mathbf{e}_t)^T$. Clearly, by definition $w_{s,t}=1/r_{S_1}(s,t)$ and $w_{s,t}'=1/r_{S_2}(s,t)$. Next, we consider the Schur complement of $\mathcal{G}$ on $\{s,t\}$, $\mathbf{SC}(\mathcal{G},\{s,t\})$. By $\mathcal{E}(S_1,S_2)=\emptyset$ and the fact that taking the Schur complement is equivalent to the partial Gaussian elimination procedure ~\cite{kyng2016approximate}, $\mathbf{SC}(\mathcal{G},\{s,t\})$ is a graph with single edge between $s,t$ with edge weight $w_{s,t}+w_{s,t}'$ (because the process of eliminating $S_1$ and eliminating $S_2$ are independent). By definition $w_{s,t}+w_{s,t}'=1/r_{\mathcal{G}}(s,t)$. This finishes the proof.
\end{proof}

By the same argument, we can generalize the parallel resistance formula to the $k$ parallel resistance case. That is, when we consider $\mathcal{V}=\{s,t\}\sqcup S_1\sqcup S_2\sqcup...\sqcup S_k$ with $\mathcal{E}(S_i,S_j)=\emptyset$, $i,j\in [k]$. Then $\frac{1}{r_\mathcal{G}(s,t)}=\sum_{i=1}^{k}{\frac{1}{r_{S_i}(s,t)}}+\mathbb{I}_{(s,t)\in \mathcal{E}}$, where $\mathbb{I}_{(s,t)\in \mathcal{E}}$ is the indicator function. Since the proof is almost same as Theorem \ref{thm:parallel_resistance}, we omit it here.

\section{Theoretical analysis of Algorithm \ref{algo:bidir_single_pair}}\label{sec:analysis_single_pair}

In this section, we prove the theoretical guarantee output by Algorithm \ref{algo:bidir_single_pair}. To this end, we begin by analyzing the output guarantee by \push (Algorithm \ref{algo:push}). The following Lemma shows the relationship between the approximation vector $\mathbf{q}_{L,u}$ output by \push and the accurate vector $\mathbf{p}_{L,u}$.

\begin{lemma}
\label{lem:invariant}
    The following equation holds at any iteration of Algorithm \ref{algo:push}:
    \begin{equation}\label{equ:invariant}
    \begin{aligned}
        \mathbf{p}_{L,u}=\mathbf{q}_{L,u}+\sum_{i=0}^{L}{\sum_{k=i}^{L}{\left(\frac{1}{2}\mathbf{I}+\frac{1}{2}\mathbf{P}\right)^{k-i}\mathbf{r}_{i,u}}}.
    \end{aligned}
    \end{equation}
\end{lemma}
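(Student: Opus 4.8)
\textbf{Proof plan for Lemma~\ref{lem:invariant}.} The plan is to prove Eq.~(\ref{equ:invariant}) as an \emph{invariant}, i.e., to show it holds at initialization and is preserved by every elementary push operation performed in Line 5--10 of Algorithm~\ref{algo:push}. First I would verify the base case: initially $\mathbf{q}_{L,u}=0$, $\mathbf{r}_{0,u}=\frac12\mathbf{e}_u$ and $\mathbf{r}_{i,u}=0$ for $i\ge1$, so the right-hand side of Eq.~(\ref{equ:invariant}) reduces to $\sum_{k=0}^{L}\left(\frac12\mathbf{I}+\frac12\mathbf{P}\right)^{k}\cdot\frac12\mathbf{e}_u$, which is exactly $\mathbf{p}_{L,u}$ by Definition~\ref{def:p_L}. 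So the invariant holds before any iteration.

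Next I would carry out the inductive step. Consider one execution of the while-loop body at level $i$ for a vertex $w$ with current residual value $\rho:=\mathbf{r}_{i,u}(w)$. The operation does three things: it adds $\rho$ to $\mathbf{q}_{L,u}(w)$; it adds $\frac{\rho}{2d_w}$ to $\mathbf{r}_{i+1,u}(v)$ for each neighbor $v$ of $w$ and $\frac{\rho}{2}$ to $\mathbf{r}_{i+1,u}(w)$ (equivalently, it adds $\frac{\rho}{2}\mathbf{e}_w + \frac{\rho}{2}\mathbf{P}\mathbf{e}_w = \left(\frac12\mathbf{I}+\frac12\mathbf{P}\right)\rho\,\mathbf{e}_w$ to $\mathbf{r}_{i+1,u}$, using $\mathbf{P}\mathbf{e}_w = \frac{1}{d_w}\sum_{v\in\mathcal N(w)}\mathbf{e}_v$); and it zeroes out $\mathbf{r}_{i,u}(w)$, i.e., subtracts $\rho\,\mathbf{e}_w$ from $\mathbf{r}_{i,u}$. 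I then track the net change to the right-hand side of Eq.~(\ref{equ:invariant}). The change in $\mathbf{q}_{L,u}$ contributes $+\rho\,\mathbf{e}_w$. The change in the residual double-sum has two pieces: the term coming from $\mathbf{r}_{i,u}$ (which appears with coefficients $\left(\frac12\mathbf{I}+\frac12\mathbf{P}\right)^{k-i}$ for $k=i,\dots,L$) decreases by $\sum_{k=i}^{L}\left(\frac12\mathbf{I}+\frac12\mathbf{P}\right)^{k-i}\rho\,\mathbf{e}_w$; the term coming from $\mathbf{r}_{i+1,u}$ (with coefficients $\left(\frac12\mathbf{I}+\frac12\mathbf{P}\right)^{k-(i+1)}$ for $k=i+1,\dots,L$) increases by $\sum_{k=i+1}^{L}\left(\frac12\mathbf{I}+\frac12\mathbf{P}\right)^{k-i-1}\left(\frac12\mathbf{I}+\frac12\mathbf{P}\right)\rho\,\mathbf{e}_w = \sum_{k=i+1}^{L}\left(\frac12\mathbf{I}+\frac12\mathbf{P}\right)^{k-i}\rho\,\mathbf{e}_w$. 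Summing all three contributions, the two residual sums telescope to leave exactly $-\left(\frac12\mathbf{I}+\frac12\mathbf{P}\right)^{0}\rho\,\mathbf{e}_w = -\rho\,\mathbf{e}_w$, which cancels the $+\rho\,\mathbf{e}_w$ from the $\mathbf{q}_{L,u}$ update. Hence the right-hand side is unchanged, completing the induction.

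There is one small bookkeeping point worth flagging rather than a genuine obstacle: the double-sum in Eq.~(\ref{equ:invariant}) as written runs $i$ from $0$ to $L$, while the push at level $i$ feeds into $\mathbf{r}_{i+1,u}$; when $i=L$ the term $\mathbf{r}_{L+1,u}$ formally lies outside the sum. I would handle this by noting that pushes at level $L$ contribute only to $\mathbf{q}_{L,u}$ and to $\mathbf{r}_{L+1,u}$, and since the $k$-index in the inner sum never reaches $L+1$, the coefficient multiplying $\mathbf{r}_{L+1,u}$ is effectively zero — consistent with the fact that $\mathbf{p}_{L,u}$ is an $L$-step truncation, so nothing past step $L$ should matter. (Equivalently, one checks the telescoping identity above degenerates correctly when $i=L$: the $\mathbf{r}_{i+1,u}$ sum is empty, the $\mathbf{r}_{i,u}$ sum is the single term $\rho\,\mathbf{e}_w$, and it still cancels the $\mathbf{q}_{L,u}$ update.) The main "obstacle," such as it is, is just being careful with these index ranges and with the matrix identity $\left(\frac12\mathbf{I}+\frac12\mathbf{P}\right)\mathbf{e}_w = \frac12\mathbf{e}_w + \frac{1}{2d_w}\sum_{v\in\mathcal N(w)}\mathbf{e}_v$ that translates the combinatorial push rule into the algebraic statement; everything else is a routine telescoping argument.
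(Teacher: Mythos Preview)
Your proposal is correct and follows essentially the same approach as the paper: verify the invariant at initialization using Definition~\ref{def:p_L}, then show that a single push operation at $(w,i)$ leaves the right-hand side of Eq.~(\ref{equ:invariant}) unchanged via the telescoping identity $\left(\tfrac12\mathbf{I}+\tfrac12\mathbf{P}\right)\rho\,\mathbf{e}_w = \tfrac{\rho}{2}\mathbf{e}_w + \tfrac{\rho}{2d_w}\sum_{v\in\mathcal N(w)}\mathbf{e}_v$. Your explicit treatment of the $i=L$ boundary case is in fact slightly more careful than the paper's own presentation.
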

\stitle{Proof of Lemma \ref{lem:invariant}.}\hypertarget{proof_of_lemma:invariant}{}
    The proof is by induction. For the initial step, $\mathbf{r}_{0,u}=\frac{1}{2}\mathbf{e}_s$ and $\mathbf{r}_{i,u}=0$ for each $i\in [1,2,...,L]$. Therefore the right hand side of Equ. (\ref{equ:invariant}) equals $\sum_{i=0}^{L}{\left(\frac{1}{2}\mathbf{I}+\frac{1}{2}\mathbf{P}\right)^{k-i}\mathbf{r}_{0,u}}=\mathbf{p}_{L,u}$. Next we assume that after $j$ \push operations, the approximation vector $\mathbf{q}_{L,u}$ and the residuals $\mathbf{r}_{0,u},\mathbf{r}_{1,u},...,\mathbf{r}_{L,u}$ satisfy Equ. (\ref{equ:invariant}). For the $(j+1)^{th}$ \push iteration, we consider the operation on $(w,i)$. That is, we perform following three operations: (i) $\mathbf{q}_{L,u}(w)\leftarrow \mathbf{q}_{L,u}(w)+\mathbf{r}_{i,u}(w)$; (ii) $\mathbf{r}_{i+1,u}(v)\leftarrow \mathbf{r}_{i+1,u}(v)+\frac{1}{2d_w}\mathbf{r}_{i,u} (w)$ for each $v\in \mathcal{N}(w)$ and $\mathbf{r}_{i+1,u}(w)\leftarrow \mathbf{r}_{i+1,u}(w)+\frac{1}{2}\mathbf{r}_{i,u}(w)$; (iii) $\mathbf{r}_{i,u}(w)\leftarrow 0$. We define the resulting vector $\hat{\mathbf{q}}_{L,u}$ and $\hat{\mathbf{r}}_{0,u},\hat{\mathbf{r}}_{1,u},...,\hat{\mathbf{r}}_{L,u}$ after this \push operation, and consider the difference $\Delta(w,i)$ of the right hand side of Eq. (\ref{equ:invariant}):
    \begin{align*}
        \Delta(w,i)&=\mathbf{q}_{L,u}+\sum_{i=0}^{L}{\sum_{k=i}^{L}{\left(\frac{1}{2}\mathbf{I}+\frac{1}{2}\mathbf{P}\right)^{k-i}\mathbf{r}_{i,u}}}
        -\hat{\mathbf{q}}_{L,u}+\sum_{i=0}^{L}{\sum_{k=i}^{L}{\left(\frac{1}{2}\mathbf{I}+\frac{1}{2}\mathbf{P}\right)^{k-i}\hat{\mathbf{r}}_{i,u}}}.\\
    \end{align*}
    We note the following three equations hold after the \push operation on $(w,i)$:
    \begin{align*}
    \mathbf{q}_{L,u}-\hat{\mathbf{q}}_{L,u}&=\mathbf{r}_{i,u}(w)\mathbf{e}_w;\\
    \mathbf{r}_{i,u}-\hat{\mathbf{r}}_{i,u}&=-\mathbf{r}_{i,u}(w)\mathbf{e}_w;\\
    \mathbf{r}_{i+1,u}-\hat{\mathbf{r}}_{i+1,u}&=\frac{1}{2}\mathbf{r}_{i,u}(w)\mathbf{e}_w+\sum_{v\in \mathcal{N}(w)}{\frac{1}{2d_w}\mathbf{r}_{i,u} (w)\mathbf{e}_v}\\
    &=\mathbf{r}_{i,u} (w)\left(\frac{1}{2}\mathbf{I}+\frac{1}{2}\mathbf{P}\right)\mathbf{e}_w.
    \end{align*}
    Therefore,
    \begin{align*}
        \Delta(w,i)&=\mathbf{r}_{i,u}(w)\mathbf{e}_w-\mathbf{r}_{i,u}(w)\sum_{k=i}^{L}{\left(\frac{1}{2}\mathbf{I}+\frac{1}{2}\mathbf{P}\right)^{k-i}\mathbf{e}_w}\\
        &+\mathbf{r}_{i,u}(w)\sum_{k=i+1}^{L}{\left(\frac{1}{2}\mathbf{I}+\frac{1}{2}\mathbf{P}\right)^{k-i}\left(\frac{1}{2}\mathbf{I}+\frac{1}{2}\mathbf{P}\right)\mathbf{e}_w}\\
        &=\mathbf{r}_{i,u}(w)-\mathbf{r}_{i,u}(w)=0.
    \end{align*}
    Thus Eq. (\ref{equ:invariant}) also holds after $(j+1)^{th}$ \push operation. By induction, Eq. (\ref{equ:invariant}) holds at any iteration of Algorithm \ref{algo:push}. This finishes the proof.\hfill$\blacksquare$\par

Next, we show that \push outputs the approximation vector $\mathbf{q}_{L,u}$ satisfies the certain error guarantee.

\begin{lemma}
\label{lem:guarantee_push}
   \push outputs the approximation $\mathbf{q}_{L,u}$ such that $\mathbf{p}_{L,u}(w)-r_{max} d_w \leq \mathbf{q}_{L,u}(w)\leq \mathbf{p}_{L,u}(w)$ for any $w\in \mathcal{V}$. Furthermore, the time complexity of \push is $O(\frac{L^3}{r_{max}})$.
\end{lemma}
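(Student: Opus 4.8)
\textbf{Proof plan for Lemma~\ref{lem:guarantee_push}.}

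The plan is to extract both the correctness and the running-time bounds from the invariant established in Lemma~\ref{lem:invariant}, namely $\mathbf{p}_{L,u}=\mathbf{q}_{L,u}+\sum_{i=0}^{L}\sum_{k=i}^{L}(\tfrac12\mathbf{I}+\tfrac12\mathbf{P})^{k-i}\mathbf{r}_{i,u}$. First I would observe that every quantity appearing in the algorithm stays nonnegative: the residuals $\mathbf{r}_{i,u}$ start nonnegative ($\mathbf{r}_{0,u}=\tfrac12\mathbf{e}_u\ge 0$) and every push operation only moves nonnegative mass forward, so $\mathbf{q}_{L,u}(w)\le \mathbf{p}_{L,u}(w)$ follows immediately from the invariant since the subtracted sum is a nonnegative-matrix image of nonnegative vectors. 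For the lower bound, note that the \push loop terminates only when $\mathbf{r}_{i,u}(w)\le \tfrac{r_{max}}{L^2}d_w$ for every $w$ and every $i\in\{0,\dots,L\}$; plugging the terminal residuals into the invariant, the error at coordinate $w$ is at most $\sum_{i=0}^{L}\sum_{k=i}^{L}\big((\tfrac12\mathbf{I}+\tfrac12\mathbf{P})^{k-i}\mathbf{r}_{i,u}\big)(w)$. Here I would use the key fact that $M:=\tfrac12\mathbf{I}+\tfrac12\mathbf{P}$ is column-substochastic scaled by degrees — more precisely, $\mathbf{D}^{-1}M\mathbf{D}$ is row-stochastic, equivalently $M^{\top}\mathbf{D}^{-1}\mathbf{1}=\mathbf{D}^{-1}\mathbf{1}$ — so that $(M^{k-i}\mathbf{r}_{i,u})(w)\le d_w\cdot\max_v \mathbf{r}_{i,u}(v)/d_v \le d_w\cdot \tfrac{r_{max}}{L^2}$. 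Summing over the $O(L^2)$ pairs $(i,k)$ gives total error at most $r_{max}d_w$, which is exactly the claimed bound $\mathbf{p}_{L,u}(w)-r_{max}d_w\le \mathbf{q}_{L,u}(w)$.

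For the running time, I would bound the number of push operations and the cost per operation. A push on $(w,i)$ costs $O(d_w)$ time (distributing to the $d_w$ neighbors). A push on $(w,i)$ is only performed when $\mathbf{r}_{i,u}(w)>\tfrac{r_{max}}{L^2}d_w$, and it moves (a half of, plus the lazy self-loop piece of) this mass forward into level $i+1$; crucially the push conserves the total mass $\sum_w\mathbf{r}_{i+1,u}(w)$ received plus what is added to $\mathbf{q}_{L,u}$. The cleanest accounting is via the potential $\Phi=\sum_{i=0}^{L}\|\mathbf{r}_{i,u}\|_1 + \|\mathbf{q}_{L,u}\|_1$... actually since $\|\mathbf{q}\|_1$ only grows and $\sum_i\|\mathbf{r}_{i,u}\|_1$ stays bounded by the total injected mass, I would instead charge each push of $O(d_w)$ work against the mass $\Theta(r_{max}d_w/L^2)$ that leaves level $i$ at $w$. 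Since $\mathbf{q}_{L,u}(w)$ increases by at least this amount at each such push and $\mathbf{q}_{L,u}(w)\le \mathbf{p}_{L,u}(w)$ with $\|\mathbf{p}_{L,u}\|_1\le L/2$ by Fact~\ref{fact:p_L_norm_1}, the total of $\mathbf{r}_{i,u}(w)$-mass pushed out, summed over all pushes, is $O(L)$ (each unit of final $\mathbf q$-mass at $w$ is pushed through at most $L+1$ levels). Therefore $\sum_{\text{pushes}} d_w = \sum_{\text{pushes}} O\!\big(\tfrac{L^2}{r_{max}}\cdot(\text{mass pushed})\big) = O\!\big(\tfrac{L^2}{r_{max}}\cdot L\big)=O\!\big(\tfrac{L^3}{r_{max}}\big)$.

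The main obstacle I expect is the running-time amortization: one must be careful that mass deposited into level $i+1$ can itself later be pushed, so a naive "total injected mass is $\tfrac12$" argument undercounts — the correct statement is that the \emph{total pushed mass across all levels} is at most $(L+1)$ times the injected mass, because a lazy-walk contribution at level $0$ propagates through at most $L$ further levels, and within the linear-system picture $\mathbf{x}_{l,u}=M\mathbf{x}_{l-1,u}$ this telescoping is exactly what bounds it. Making that bound rigorous (rather than hand-wavy) is the delicate point; everything else is a direct consequence of Lemma~\ref{lem:invariant}, the substochasticity of $\mathbf{D}^{-1}M\mathbf{D}$, and Fact~\ref{fact:p_L_norm_1}.
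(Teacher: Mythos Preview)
Your proposal is correct and follows essentially the same approach as the paper: the error bound via the invariant of Lemma~\ref{lem:invariant} together with the row-stochasticity of $(\tfrac12\mathbf{I}+\tfrac12\mathbf{P}^T)$ is exactly the paper's argument, and your amortized running-time analysis (charging $O(d_w)$ work against the $>r_{max}d_w/L^2$ mass pushed, with total pushed mass $O(L)$) is the same accounting the paper does. The only cosmetic difference is that the paper bounds the mass level-by-level (each $\|\mathbf{r}_{i,u}\|_1\le 1$, over $L$ levels) whereas you bound the aggregate via $\|\mathbf{q}_{L,u}\|_1\le\|\mathbf{p}_{L,u}\|_1\le L/2$ from Fact~\ref{fact:p_L_norm_1}; both yield $O(L^3/r_{max})$, and the ``obstacle'' you flag is not actually an issue in either version.
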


\stitle{Proof of Lemma \ref{lem:guarantee_push}.}\hypertarget{proof_of_lemma:guarantee_push}{}
    By Lemma \ref{lem:invariant}, Eq. (\ref{equ:invariant}) holds at any iteration of \push. By our implementation, the residuals satisfy $\mathbf{r}_{i,u}(w)\leq \frac{r_{max}d_w}{L^2}$ for any $w\in \mathcal{V}$, $i\in[L]$. Therefore, we have:
    \begin{align*}
        \Vert \mathbf{D}^{-1}(\mathbf{p}_{L,u}-\mathbf{q}_{L,u})\Vert_\infty&=\left\Vert \mathbf{D}^{-1}\sum_{i=0}^{L}{\sum_{k=i}^{L}{\left(\frac{1}{2}\mathbf{I}+\frac{1}{2}\mathbf{P}\right)^{k-i}\mathbf{r}_{i,u}}}\right\Vert_\infty\\
        &\leq  \sum_{i=0}^{L}{\sum_{k=i}^{L}{\left\Vert\mathbf{D}^{-1}\left(\frac{1}{2}\mathbf{I}+\frac{1}{2}\mathbf{P}\right)^{k-i}\mathbf{r}_{i,u}\right\Vert_\infty}}\\
        &=\sum_{i=0}^{L}{\sum_{k=i}^{L}{\left\Vert\left(\frac{1}{2}\mathbf{I}+\frac{1}{2}\mathbf{P}^T\right)^{k-i}\mathbf{D}^{-1}\mathbf{r}_{i,u}\right\Vert_\infty}}\\
        &\leq \sum_{i=0}^{L}{\sum_{k=i}^{L}{\left\Vert\left(\frac{1}{2}\mathbf{I}+\frac{1}{2}\mathbf{P}^T\right)^{k-i}\right\Vert_\infty \left\Vert\mathbf{D}^{-1}\mathbf{r}_{i,u}\right\Vert_\infty}}.\\
    \end{align*}
    Next, we notice that $\left\Vert\left(\frac{1}{2}\mathbf{I}+\frac{1}{2}\mathbf{P}^T\right)^{k-i}\right\Vert_\infty =1$ since $\left(\frac{1}{2}\mathbf{I}+\frac{1}{2}\mathbf{P}^T\right)^{k-i}$ is a row-stochastic matrix and $\Vert\mathbf{D}^{-1}\mathbf{r}_{i,u}\Vert_\infty\leq \frac{r_{max}}{L^2}$. Therefore,
    \begin{align*}
        \Vert \mathbf{D}^{-1}(\mathbf{p}_{L,u}-\mathbf{q}_{L,u})\Vert_\infty&\leq \sum_{i=0}^{L}{\sum_{k=i}^{L}{\frac{r_{max}}{L^2}}}\leq r_{max}.
    \end{align*}
    Thus $\mathbf{p}_{L,u}(w)-r_{max} d_w \leq \mathbf{q}_{L,u}(w)\leq \mathbf{p}_{L,u}(w)+r_{max} d_w$ for any $w\in \mathcal{V}$. Furthermore, we notice that $\mathbf{r}_{i,u}(w)\geq 0$ for $\forall w\in \mathcal{V}$, $i\in[L]$. By Eq. (\ref{equ:invariant}), we obtain $\mathbf{q}_{L,u}(w)\leq \mathbf{p}_{L,u}(w)$ for $\forall w\in \mathcal{V}$. Putting it together, we have $\mathbf{p}_{L,u}(w)-r_{max} d_w \leq \mathbf{q}_{L,u}(w)\leq \mathbf{p}_{L,u}(w)$.

    For the time complexity of \push, we observe that at the beginning of each iteration (Line 4 in Algorithm \ref{algo:push}), $\Vert \mathbf{r}_{i,u}\Vert_1\leq 1$, since $\Vert \mathbf{r}_{i,u}\Vert_1$ is non-increasing with $i$. Next, we only invoke nodes $w\in\mathcal{V}$ with $\mathbf{r}_{i,u}(w)>\frac{r_{max}d_w}{L^2}$, for each $w$ the operation numbers is $O(d_w)$ (Line 5-9 in Algorithm \ref{algo:push}). Finally, we set $\mathbf{r}_{i,u}(w)\leftarrow 0$ after the operation on $w$ (Line 10 in Algorithm \ref{algo:push}). Thus, one unit operation will decrease $\Vert \mathbf{r}_{i,u}\Vert_1$ by at least $\frac{r_{max}}{L^2}$ in average. So the total operations of \push in each iteration (Line 4-11 in Algorithm \ref{algo:push}) can be bounded by $O(\frac{L^2}{r_{max}})$. Since Algorithm \ref{algo:push} has $L$ iterations, the total time complexity is bounded by $O(\frac{L^3}{r_{max}})$.\hfill$\blacksquare$\par

By Lemma \ref{lem:guarantee_push}, when setting $r_{max}=\frac{\epsilon}{d}$, the approximation guarantee by \push already satisfies Theorem \ref{thm:bidir_guarantee_specific} and thus this is an algorithm for $\epsilon$-approximate ER estimation. However, the time complexity of \push is $O(\frac{L^3}{r_{max}})=\tilde{O}(\frac{d}{\epsilon})$. Next we will prove that combining with random walk sampling (i.e. Stage II in algorithm \ref{algo:bidir_single_pair}), this time complexity can be reduced to $\tilde{O}(\frac{\sqrt{d}}{\epsilon})$ while maintaining the same error guarantee. To this end, we consider again the output guarantee of \push for a given node $v\in \mathcal{V}$:

\begin{align*}
        \mathbf{p}_{L,u}(v)=\mathbf{q}_{L,u}(v)+\sum_{i=0}^{L}{\sum_{k=i}^{L}{\left(\left(\frac{1}{2}\mathbf{I}+\frac{1}{2}\mathbf{P}\right)^{k-i}\mathbf{r}_{i,u}\right)(v)}}.
\end{align*}
We divide $d_v$ on both sides of the equation:
    \begin{equation}\label{equ:rw_representation}
    \begin{aligned}
        \frac{\mathbf{p}_{L,u}(v)}{d_v}&=\frac{\mathbf{q}_{L,u}(v)}{d_v}+\sum_{i=0}^{L}{\sum_{k=i}^{L}{\left(\left(\frac{1}{2}\mathbf{I}+\frac{1}{2}\mathbf{P}\right)^{k-i}\mathbf{r}_{i,u}\right)(v)/d_v}}\\
        &=\frac{\mathbf{q}_{L,u}(v)}{d_v}+\sum_{i=0}^{L}{\sum_{k=i}^{L}{\sum_{w\in\mathcal{V}}{\frac{\mathbf{r}_{i,u}(w)}{d_v}\mathbf{e}_v^T\left(\frac{1}{2}\mathbf{I}+\frac{1}{2}\mathbf{P}\right)^{k-i}\mathbf{e}_w}}}\\
        &=\frac{\mathbf{q}_{L,u}(v)}{d_v}+\sum_{i=0}^{L}{\sum_{k=i}^{L}{\sum_{w\in\mathcal{V}}{\frac{\mathbf{r}_{i,u}(w)}{d_w}\mathbf{e}_w^T\left(\frac{1}{2}\mathbf{I}+\frac{1}{2}\mathbf{P}\right)^{k-i}\mathbf{e}_v}}}\\
        &=\frac{\mathbf{q}_{L,u}(v)}{d_v}+\sum_{i=0}^{L}{\sum_{k=0}^{L-i}{\sum_{w\in\mathcal{V}}{\frac{\mathbf{r}_{i,u}(w)}{d_w}\mathbf{e}_w^T\left(\frac{1}{2}\mathbf{I}+\frac{1}{2}\mathbf{P}\right)^{k}\mathbf{e}_v}}}\\
        &=\frac{\mathbf{q}_{L,u}(v)}{d_v}+\sum_{k=0}^{L}{\sum_{w\in\mathcal{V}}{\left(\sum_{i=0}^{L-k}{\frac{\mathbf{r}_{i,u}(w)}{d_w}}\right)\mathbf{e}_w^T\left(\frac{1}{2}\mathbf{I}+\frac{1}{2}\mathbf{P}\right)^{k}\mathbf{e}_v}}.
\end{aligned}
\end{equation}
We observe that $\mathbf{e}_w^T\left(\frac{1}{2}\mathbf{I}+\frac{1}{2}\mathbf{P}\right)^{k}\mathbf{e}_v$ is the probability that a $k$-step lazy random walk start from $v$ and ends at $w$. Thus we define a sequence of random variables output by lazy random walks: $\{X_{v,k,j}\}$ with $v\in \{s,t\}, k\in [L]$ and $j\in[n_r]$. Each $X_{v,k,j}$ represents the $j^{th}$ sample of random walk, start from node $v$, and the walk length is $k$. If this $k$ step lazy random walk arrive at $w$, then $X_{v,k,j}=\sum_{i=0}^{L-k}{\frac{\mathbf{r}_{i,u}(w)}{d_w}}$. We denote $\bar{X}_{v,k}=\frac{1}{n_r}\sum_{j=1}^{n_r}{X_{v,k,j}}$ as the average of these random variables. So by Eq. (\ref{equ:rw_representation}), clearly we have:
\begin{equation}\label{equ:rw_repre_expectation}
\begin{aligned}
     \frac{\mathbf{p}_{L,u}(v)}{d_v}=\frac{\mathbf{q}_{L,u}(v)}{d_v}+\sum_{k=0}^{L}{\mathbb{E}[\bar{X}_{v,k}]}.
\end{aligned}
\end{equation}
This immedeately proves that Algorithm \ref{algo:bidir_single_pair} outputs the unbiased approximation of $\mathbf{p}_{L,u}(v)$. The next Lemma proves the concentration of this approximation.

\begin{lemma}
\label{lem:rw_concentration}
    When setting $n_r=\tilde{O}(\frac{L^2}{\epsilon^2}r_{max} d )$, we have $\left|\bar{X}_{v,k}-\mathbb{E}[\bar{X}_{v,k}]\right|\leq \frac{\epsilon}{Ld}$ with probability at least $1-n^{-1}$, for every $v\in \{s,t\}$ and $k\in [L]$.
\end{lemma}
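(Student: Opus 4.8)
The plan is to prove Lemma~\ref{lem:rw_concentration} via a concentration inequality (Bernstein or Chernoff–Hoeffding), so the main work is to bound the range and the variance of each summand $X_{v,k,j}$. First I would establish a deterministic upper bound on $X_{v,k,j}$. Recall that if the $k$-step lazy walk from $v$ lands at $w$, then $X_{v,k,j}=\sum_{i=0}^{L-k}\frac{\mathbf{r}_{i,u}(w)}{d_w}$. By the \push stopping rule, every residual satisfies $\mathbf{r}_{i,u}(w)\le \frac{r_{max}}{L^2}d_w$ at termination, so $\frac{\mathbf{r}_{i,u}(w)}{d_w}\le \frac{r_{max}}{L^2}$, and summing over the at most $L+1$ indices $i$ gives $X_{v,k,j}\le \frac{r_{max}}{L}$. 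Hence each $X_{v,k,j}\in[0,\tfrac{r_{max}}{L}]$, a range of $M:=r_{max}/L$.

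Next I would bound the second moment. Writing $\mathbb{E}[X_{v,k,j}^2]=\sum_{w}\Pr[\text{walk at }w]\bigl(\sum_{i=0}^{L-k}\frac{\mathbf{r}_{i,u}(w)}{d_w}\bigr)^2$ and using the pointwise bound $\sum_{i=0}^{L-k}\frac{\mathbf{r}_{i,u}(w)}{d_w}\le M$ on one factor, we get $\mathbb{E}[X_{v,k,j}^2]\le M\cdot\mathbb{E}[X_{v,k,j}]\le M\sum_{k'=0}^{L}\mathbb{E}[\bar X_{v,k'}]$. Then I would relate $\sum_{k}\mathbb{E}[\bar X_{v,k}]$ to $\mathbf{p}_{L,u}(v)/d_v$ via Eq.~(\ref{equ:rw_repre_expectation}): since $\mathbf{q}_{L,u}(v)\ge 0$, we have $\sum_{k=0}^{L}\mathbb{E}[\bar X_{v,k}]\le \frac{\mathbf{p}_{L,u}(v)}{d_v}$. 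For $v\in\{s,t\}$, using Fact~\ref{fact:p_L_transform} to write $\frac{\mathbf{p}_{L,u}(v)}{d_v}=\frac{\mathbf{p}_{L,v}(u)}{d_u}$ and Fact~\ref{fact:p_L_norm_1} to bound $\|\mathbf{p}_{L,v}\|_1\le L/2$, we obtain $\frac{\mathbf{p}_{L,u}(v)}{d_v}\le \frac{L}{2 d_u}\le \frac{L}{2d}$ (noting $d=\min\{d_s,d_t\}$, and the relevant source $u\in\{s,t\}$ so $d_u\ge d$). Thus $\mathbb{E}[X_{v,k,j}^2]\le \frac{r_{max}}{L}\cdot\frac{L}{2d}=\frac{r_{max}}{2d}$, so the variance $\sigma^2$ of a single sample is at most $\frac{r_{max}}{2d}$.

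With $M=r_{max}/L$ and $\sigma^2\le r_{max}/(2d)$ in hand, I would apply Bernstein's inequality to the mean $\bar X_{v,k}=\frac{1}{n_r}\sum_j X_{v,k,j}$: for deviation $\eta=\frac{\epsilon}{Ld}$,
\begin{align*}
\Pr\bigl[|\bar X_{v,k}-\mathbb{E}[\bar X_{v,k}]|>\eta\bigr]\le 2\exp\!\left(-\frac{n_r\eta^2}{2\sigma^2+\tfrac{2}{3}M\eta}\right).
\end{align*}
Plugging in the bounds, $\sigma^2\le r_{max}/(2d)$ and $M\eta=\frac{r_{max}}{L}\cdot\frac{\epsilon}{Ld}\le \frac{r_{max}}{d}$ (since $\epsilon/L^2\le 1$), the denominator is $O(r_{max}/d)$, so the exponent is $\Omega\bigl(n_r\cdot\frac{\epsilon^2}{L^2d^2}\cdot\frac{d}{r_{max}}\bigr)=\Omega\bigl(\frac{n_r\epsilon^2}{L^2 d\, r_{max}}\bigr)$. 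Choosing $n_r=\Theta\bigl(\frac{L^2}{\epsilon^2}r_{max}d\log n\bigr)$—which is exactly the setting in the lemma statement and in Line~4 of Algorithm~\ref{algo:bidir_single_pair}—makes this exponent $\Omega(\log n)$, giving failure probability $\le n^{-2}$ (absorbing constants into the $\tilde O$/$\log n$ factor); a union bound over the $O(L)$ values of $k$ and $O(1)$ choices of $v$ yields the claimed $1-n^{-1}$.

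The main obstacle is getting the variance bound tight, i.e. establishing $\mathbb{E}[X_{v,k,j}^2]\le M\cdot\mathbb{E}[X_{v,k,j}]$ and then the crucial bound $\sum_k\mathbb{E}[\bar X_{v,k}]\le \frac{L}{2d}$. The first inequality needs the deterministic pointwise bound $X_{v,k,j}\le M$ applied to just one of the two factors inside the square (a standard variance-reduction trick for nonnegative bounded random variables), and the second requires correctly tracking which node is the \push source ($u$) versus the random-walk source ($v$) and invoking the symmetry Fact~\ref{fact:p_L_transform} together with the $\ell_1$ bound Fact~\ref{fact:p_L_norm_1}; care is also needed that $d_u\ge d$ holds for the source $u\in\{s,t\}$ in each of the four terms. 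Everything else is bookkeeping with Bernstein's inequality and a union bound.
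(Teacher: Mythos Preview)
Your proof is correct and follows the same skeleton as the paper's: establish the pointwise bound $X_{v,k,j}\le r_{max}/L$ from the \push stopping rule, bound $\mathbb{E}[X_{v,k,j}^2]\le (r_{max}/L)\,\mathbb{E}[X_{v,k,j}]$, and then control $\mathbb{E}[X_{v,k,j}]\le \mathbf{p}_{L,u}(v)/d_v\le L/(2d)$. There are two minor differences worth noting. First, your detour through the symmetry Fact~\ref{fact:p_L_transform} is unnecessary: the paper simply bounds $\mathbf{p}_{L,u}(v)\le\|\mathbf{p}_{L,u}\|_1\le L/2$ and uses $d_v\ge d$ directly (which is immediate since $v\in\{s,t\}$), avoiding the need to track which of $u,v$ has degree at least $d$. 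Second, for the concentration step the paper applies Chebyshev's inequality to get failure probability $\le 1/2$ and then boosts via the median-of-means trick over $O(\log n)$ independent repetitions, whereas you apply Bernstein directly; both routes give the same $n_r=\tilde O(L^2 r_{max} d/\epsilon^2)$, and your Bernstein argument is arguably cleaner.
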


\stitle{Proof of Lemma \ref{lem:rw_concentration}.}\hypertarget{proof_of_lemma:rw_concentration}{}
    We observe that by our implementation of \push, $\mathbf{r}_{i,u}(w)\leq \frac{r_{max}d_w}{L^2}$ for any $w\in \mathcal{V}$, $i\in[L]$. Thus, for each random variable $|X_{v,k,j}|\leq \max_{w,k} \left\{\sum_{i=0}^{L-k}{\frac{\mathbf{r}_{i,u}(w)}{d_w}}\right\}\leq \frac{r_{max}}{L}$. Next, we note that for fixed $k,v$, the random variables $\{X_{v,k,j}\}$ are i.i.d since we sample random walks independently. Thus,
    \begin{align*}
        Var[\bar{X}_{v,k}]&=\frac{1}{n_r}Var[X_{v,k,j}]\leq \frac{1}{n_r}\mathbb{E}[X_{v,k,j}^2]\\
        &\leq \frac{r_{max}}{Ln_r}\mathbb{E}[X_{v,k,j}]\leq \frac{r_{max}}{Ln_r}\frac{\mathbf{p}_{L,u}(v)}{d_v}\\
        &\leq \frac{r_{max}}{Ln_r}\frac{\Vert \mathbf{p}_{L,u}\Vert_1}{d_v}\leq \frac{r_{max}}{n_r}\frac{1}{d_v}\leq \frac{\epsilon^2}{4L^2d^2}.
    \end{align*}
    The final inequality holds when setting $n_r=\frac{4L^2}{\epsilon^2}r_{max}d$, since $d=\min\{d_s,d_t\}$ and $v\in \{s,t\}$. Next, by Chebyshev's inequality,
    \begin{align*}
        \mathbb{P}\left[\left|\bar{X}_{v,k}-\mathbb{E}[\bar{X}_{v,k}]\right|\geq \frac{\epsilon}{Ld}\right]\leq \frac{1}{2}.
    \end{align*}
    Finally, by the standard median of the means estimator, repeat the sampling process $\log n$ times and take the median as the output, we can make the final output satisfies $\left|\bar{X}_{v,k}-\mathbb{E}[\bar{X}_{v,k}]\right|\leq \frac{\epsilon}{Ld}$ with probability at least $1-n^{-\Omega(1)}$. Putting these things together, the total sampling times $n_r=\frac{4L^2}{\epsilon^2}r_{max} d \log n=\tilde{O}(\frac{L^2}{\epsilon^2}r_{max} d )$ is enough.\hfill$\blacksquare$\par

Armed with Lemma \ref{lem:rw_concentration}, we are now ready to prove Theorem \ref{thm:bidir_guarantee_specific}. 

\stitle{Proof of Theorem \ref{thm:bidir_guarantee_specific}.}\hypertarget{proof_of_theorem:bidir_guarantee_specific}{} 
From Lemma \ref{lem:rw_concentration}, for each $k\in[L]$, the inequality $\left|\bar{X}_{v,k}-\mathbb{E}[\bar{X}_{v,k}]\right|\leq \frac{\epsilon}{Ld}$ with probability at least $1-n^{-1}$. By Eq. (\ref{equ:rw_repre_expectation}), the error produced by Algorithm \ref{algo:bidir_single_pair} can be bounded by:
\begin{align*}
    \left|\frac{\mathbf{p}_{L,u}(v)}{d_v}-\frac{\hat{\mathbf{p}}_{L,u}(v)}{d_v}\right|\leq \sum_{k=0}^{L}{\left|\mathbb{E}[\bar{X}_{v,k}]-X_{v,k}\right|}\leq \sum_{k=0}^{L}{\frac{\epsilon}{Ld}}\leq \frac{\epsilon}{d}.
\end{align*}
with probability at least $1-Ln^{-1}=1-\tilde{O}(n^{-1})$. Next, by Lemma \ref{lem:guarantee_push}, the time complexity for Phase I is $O(\frac{L^3}{r_{max}})$. For the time complexity of Phase II, since we set the number of random walks $n_r=\tilde{O}(\frac{L^2}{\epsilon^2}r_{max} d )$, for each random walk the length is $L$. So the time complexity of Phase II is $O(n_r L)=\tilde{O}(\frac{L^3r_{max}d}{\epsilon^2})$. Finally, to balance the time complexity for Phase I and Phase II, we set $r_{max}=\frac{\epsilon}{\sqrt{d}}$. So the total time complexity of Algorithm \ref{algo:bidir_single_pair} can be bounded by $\tilde{O}(\frac{L^3\sqrt{d}}{\epsilon})=\tilde{O}(\frac{\sqrt{d}}{\epsilon})$. This finishes the proof of Theorem \ref{thm:bidir_guarantee_specific}. \hfill$\blacksquare$\par

Finally, we use Theorem \ref{thm:bidir_guarantee_specific} to prove Theorem \ref{thm:bidir_complexity}.

\stitle{Proof of Theorem \ref{thm:bidir_complexity}.}\hypertarget{proof_of_theorem:bidir_complexity}{}
By Equ. (\ref{equ:ER_p_expression}) and Theorem \ref{thm:bidir_guarantee_specific}, the approximation output by Algorithm \ref{algo:bidir_single_pair} satisfy: $|\hat{r}_\mathcal{G}(s,t)-r_{\mathcal{G},L}(s,t)|\leq  \frac{4\epsilon}{d}$ w.h.p. By the fact $r_\mathcal{G}(s,t)\geq \frac{1}{2}\left(\frac{1}{d_s}+\frac{1}{d_t}\right)\geq \frac{1}{2d}$, so $|\hat{r}_\mathcal{G}(s,t)-r_{\mathcal{G},L}(s,t)|\leq {8\epsilon} r_\mathcal{G}(s,t)$ w.h.p.. In addition, by Lemma \ref{lem:ER_truncate}, The $L$ step truncation ER satisfy $|r_\mathcal{G}(s,t)-r_{\mathcal{G},L}(s,t)|\leq \epsilon r_\mathcal{G}(s,t)$, so the approximation $\hat{r}_\mathcal{G}(s,t)$ safisfy:
\begin{align*}
    |\hat{r}_\mathcal{G}(s,t)-r_{\mathcal{G}}(s,t)|&\leq |\hat{r}_\mathcal{G}(s,t)-r_{\mathcal{G},L}(s,t)|+|r_\mathcal{G}(s,t)-r_{\mathcal{G},L}(s,t)|\\
    &\leq 8\epsilon r_\mathcal{G}(s,t)+\epsilon r_\mathcal{G}(s,t)=9\epsilon r_\mathcal{G}(s,t).
\end{align*}

This proves that $\hat{r}_\mathcal{G}(s,t)$ is the $9 \epsilon$-approximation of $r_\mathcal{G}(s,t)$ w.h.p. Finally, by Theorem \ref{thm:bidir_guarantee_specific}, the time complexity of Algorithm \ref{algo:bidir_single_pair} is $\tilde{O}(\frac{\sqrt{d}}{\epsilon})$.\hfill$\blacksquare$\par

\section{Theoretical analysis of Algorithm \ref{algo:bidir_sketch}}\label{sec:analysis_sketch}

In this section, we provide the theoretical analysis of Algorithm \ref{algo:bidir_sketch}, and prove Theorem \ref{thm:ER_sketch_specific} and Theorem \ref{thm:ER_sketch}. First, inspired by ~\cite{wei2024approximating} and ~\cite{li2023new}, we define the $\epsilon$-contributing set for $\mathbf{p}_{L,u}$.

\begin{definition}
    for any $u\in \mathcal{V}$, the $\epsilon$-contributing set of $\mathbf{p}_{L,u}$ is defined as $S_{\epsilon,u}=\{w\in \mathcal{V}:\mathbf{p}_{L,u}(w)> \epsilon\}$.
\end{definition}

Next, we observe that the size of $\epsilon$-contributing set is only $\tilde{O}(\frac{1}{\epsilon})$, independent of $n$. This immediately follows by Fact \ref{fact:p_L_norm_1}.

\begin{fact}\label{fact:contributing_set}
    The size of the $\epsilon$-contributing set $|S_{\epsilon,u}|\leq \frac{L}{2\epsilon}=\tilde{O}(\frac{1}{\epsilon})$.
\end{fact}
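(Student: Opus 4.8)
The statement to prove is Fact~\ref{fact:contributing_set}: $|S_{\epsilon,u}| \le \frac{L}{2\epsilon}$. The plan is to use a straightforward counting argument based on the $\ell_1$-norm bound from Fact~\ref{fact:p_L_norm_1}. First I would recall that by Definition of $\mathbf{p}_{L,u}$, all entries of $\mathbf{p}_{L,u}$ are nonnegative, since $\mathbf{p}_{L,u} = \frac{1}{2}\sum_{l=0}^{L}(\frac{1}{2}\mathbf{I}+\frac{1}{2}\mathbf{P})^l \mathbf{e}_u$ is a sum of nonnegative matrices applied to the nonnegative vector $\mathbf{e}_u$. Hence $\Vert \mathbf{p}_{L,u}\Vert_1 = \sum_{w\in\mathcal{V}} \mathbf{p}_{L,u}(w)$.

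Next, by the definition of the $\epsilon$-contributing set, every $w \in S_{\epsilon,u}$ satisfies $\mathbf{p}_{L,u}(w) > \epsilon$. Therefore
\begin{align*}
\frac{L}{2} \;\ge\; \Vert \mathbf{p}_{L,u}\Vert_1 \;=\; \sum_{w\in\mathcal{V}} \mathbf{p}_{L,u}(w) \;\ge\; \sum_{w\in S_{\epsilon,u}} \mathbf{p}_{L,u}(w) \;>\; |S_{\epsilon,u}|\cdot \epsilon,
\end{align*}
where the first inequality is exactly Fact~\ref{fact:p_L_norm_1}. Rearranging gives $|S_{\epsilon,u}| < \frac{L}{2\epsilon}$, which yields the claimed bound. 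Since $L = \tilde{O}(1)$ for expander graphs by Lemma~\ref{lem:ER_truncate}, this is $\tilde{O}(1/\epsilon)$, independent of $n$.

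There is essentially no obstacle here — the only point requiring a sentence of justification is the nonnegativity of $\mathbf{p}_{L,u}$, which follows because $\frac{1}{2}\mathbf{I}+\frac{1}{2}\mathbf{P}$ has nonnegative entries (it is a substochastic/stochastic-type matrix built from the transition matrix $\mathbf{P}=\mathbf{AD}^{-1}$) and $\mathbf{e}_u \ge 0$ entrywise, so every term in the sum defining $\mathbf{p}_{L,u}$ is entrywise nonnegative. Everything else is the elementary "Markov-type" inequality counting how many coordinates of a vector can exceed a threshold given its $\ell_1$ mass.
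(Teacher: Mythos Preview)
Your proof is correct and is exactly the argument the paper intends: the paper simply states that the fact ``immediately follows by Fact~\ref{fact:p_L_norm_1}'' without spelling out details, and your Markov-type counting argument using nonnegativity of $\mathbf{p}_{L,u}$ together with $\Vert \mathbf{p}_{L,u}\Vert_1 \le L/2$ is precisely that.
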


Since in Theorem \ref{thm:ER_sketch_specific} we just need to approximate the vector such that $|\hat{\mathbf{p}}_{L,u}(w)-\mathbf{p}_{L,u}(w)|\leq \epsilon$, so we only focus on the $\epsilon$-contributing set $S_{\epsilon,u}$. This is because if a node $w\notin S_{\epsilon,u}$, by definition $0\leq \mathbf{p}_{L,u}(w)\leq \epsilon$, so we set $\hat{\mathbf{p}}_{L,u}(w)=0$ is already a reasonable approximation. Next, we will prove Stage I of Algorithm \ref{algo:bidir_sketch} outputs the candidate set $S'_{\epsilon,u}$ that contains $S_{\epsilon,u}$ with high probability. 

\begin{lemma}
\label{lem:phase_1}
    $\mathbf{p}'_{L,u}$ output by Stage I satisfy the following error guarantee: 
    \begin{itemize}
    \item (i) For any $\mathbf{p}_{L,u}(w)> \epsilon$, we have 
$|\mathbf{p}'_{L,u}(w)-\mathbf{p}_{L,u}(w)|\leq \frac{1}{4}\mathbf{p}_{L,u}(w)$
with probability at least $1-n^{-\Omega(1)}$.
    
\item (ii) For any $\mathbf{p}_{L,u}(w)\leq \epsilon$, we have 
$|\mathbf{p}'_{L,u}(w)-\mathbf{p}_{L,u}(w)|\leq \frac{1}{4}\epsilon$
with probability at least $1-n^{-\Omega(1)}$.
\end{itemize}
    Furthermore, $S'_{\epsilon,u}\supset S_{\epsilon,u}$ with probability at least $1-n^{-\Omega(1)}$.
\end{lemma}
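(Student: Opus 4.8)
The plan is to analyze the random walk estimator in Stage I as an average of i.i.d. indicator‑type random variables and apply a Chernoff‑type concentration bound, then split into the two regimes depending on whether $\mathbf{p}_{L,u}(w)$ is large or small. First I would fix the source $u$ and a target $w$. Each of the $n_r^{(1)}$ lazy random walks of length $L$ contributes to $\mathbf{p}'_{L,u}(w)$ an amount $\frac{1}{2 n_r^{(1)}}$ times the number of steps $k\le L$ at which the walk sits at $w$; call $Y_j$ the (bounded, nonnegative) contribution of the $j$‑th walk, so $Y_j\in[0,\tfrac{L}{2}]$ and $\mathbf{p}'_{L,u}(w)=\frac{1}{n_r^{(1)}}\sum_j Y_j$. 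By the random walk interpretation of $\mathbf{e}_w^T(\tfrac12\mathbf{I}+\tfrac12\mathbf{P})^k\mathbf{e}_u$ used already in Eq.~(\ref{equ:rw_representation}) together with Definition~\ref{def:p_L}, one has $\mathbb{E}[Y_j]=\mathbf{p}_{L,u}(w)$, so $\mathbf{p}'_{L,u}(w)$ is an unbiased estimator of $\mathbf{p}_{L,u}(w)$, and $\mathbb{E}[\mathbf{p}'_{L,u}(w)]=\mathbf{p}_{L,u}(w)$.

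For part (i), when $\mathbf{p}_{L,u}(w)>\epsilon$, I would use a multiplicative Chernoff bound on $\sum_j Y_j$ after rescaling the $Y_j$ into $[0,1]$ by dividing by $L/2$. The rescaled mean is $\mu=\frac{2 n_r^{(1)}}{L}\mathbf{p}_{L,u}(w)>\frac{2 n_r^{(1)}\epsilon}{L}$; plugging $n_r^{(1)}=\frac{L^2\log n}{\epsilon}$ gives $\mu=\Omega(L\log n)=\Omega(\log n)$ (here I use $L=\tilde\Omega(1)$), which is exactly the scale of additive slack needed for the relative error $\tfrac14$ to hold with probability $1-n^{-\Omega(1)}$. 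For part (ii), when $\mathbf{p}_{L,u}(w)\le\epsilon$, I would instead bound the probability that $\mathbf{p}'_{L,u}(w)$ exceeds $\mathbf{p}_{L,u}(w)+\tfrac14\epsilon$ (the lower tail is automatic since it cannot go below $0$ by more than $\mathbf{p}_{L,u}(w)\le\epsilon$, and in fact we only need the additive $\tfrac14\epsilon$ bound); a Chernoff/Bernstein bound with the same rescaling yields the additive deviation $\tfrac14\epsilon$ with failure probability $\exp(-\Omega(n_r^{(1)}\epsilon^2/L^2))=\exp(-\Omega(\log n))=n^{-\Omega(1)}$, using the variance bound $\mathrm{Var}(Y_j)\le \tfrac{L}{2}\mathbb{E}[Y_j]\le\tfrac{L}{2}\epsilon$ as in the proof of Lemma~\ref{lem:rw_concentration}.

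Finally, to conclude $S'_{\epsilon,u}\supset S_{\epsilon,u}$ with high probability, I would observe that every $w\in S_{\epsilon,u}$ has $\mathbf{p}_{L,u}(w)>\epsilon$, so by part (i), $\mathbf{p}'_{L,u}(w)\ge\tfrac34\mathbf{p}_{L,u}(w)>\tfrac34\epsilon>\tfrac12\epsilon$, hence $w\in S'_{\epsilon,u}$. Taking a union bound over all $w\in S_{\epsilon,u}$ — whose size is at most $\frac{L}{2\epsilon}$ by Fact~\ref{fact:contributing_set} — costs only a $\mathrm{poly}(n)$ factor in the failure probability (after raising the constant in the $\Omega(\cdot)$ of the exponent, or equivalently choosing the $\log n$ factor in $n_r^{(1)}$ with a large enough constant), so the overall failure probability is still $n^{-\Omega(1)}$. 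The main obstacle I anticipate is getting the constants to line up cleanly: one must verify that with $n_r^{(1)}=\frac{L^2\log n}{\epsilon}$ the Chernoff exponent in both regimes is genuinely $\Omega(\log n)$ (not merely $\Omega(1)$), which requires carefully tracking where the extra factor of $L$ in the range of $Y_j$ enters versus the $\epsilon$ in the denominator of $n_r^{(1)}$; using the variance‑sensitive (Bernstein) form of the bound rather than the naive Hoeffding form is what makes the small‑$\mathbf{p}_{L,u}(w)$ case go through.
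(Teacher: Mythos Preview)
Your approach is correct but differs from the paper's in two ways. First, the paper decomposes the estimator per step: it writes $\mathbf{p}'_{L,u}(w)=\tfrac12\sum_{k=0}^{L}\bar X^{(1)}_{u,w,k}$ where each $\bar X^{(1)}_{u,w,k}$ is an average of $\{0,1\}$ indicators, bounds the variance of each piece, and applies Chebyshev to get a constant-probability bound at the per-$k$ level (relative error $\tfrac18$ when $\mathbf{e}_w^T(\tfrac12\mathbf{I}+\tfrac12\mathbf{P})^k\mathbf{e}_u>\epsilon/L$, additive $\epsilon/(8L)$ otherwise). Second, it then boosts to $1-n^{-\Omega(1)}$ via a median-of-means wrapper, spending the $\log n$ factor in $n_r^{(1)}$ on the repetition rather than on the exponent. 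You instead keep the walk-level variables $Y_j\in[0,L/2]$ intact and apply Chernoff/Bernstein directly to their sum; this avoids the per-step split and the median trick, and matches the estimator actually computed in Algorithm~\ref{algo:bidir_sketch} more literally. Both routes are valid.

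One slip to fix: the exponent you display for case~(ii), $n_r^{(1)}\epsilon^2/L^2$, is the Hoeffding exponent, and substituting $n_r^{(1)}=\frac{L^2\log n}{\epsilon}$ gives only $\epsilon\log n$, which is \emph{not} $\Omega(\log n)$ for small $\epsilon$. The Bernstein exponent you actually intend (using $\mathrm{Var}(Y_j)\le \tfrac{L}{2}\epsilon$ and range $\tfrac{L}{2}$) is of order $\frac{n_r^{(1)}(\epsilon/4)^2}{L\epsilon}=\Omega(n_r^{(1)}\epsilon/L)=\Omega(L\log n)$, which does yield $n^{-\Omega(1)}$. You correctly flag in your last sentence that the variance-sensitive bound is essential here; just make the displayed computation consistent with it. Also, the parenthetical claiming the lower tail in case~(ii) is ``automatic'' is not quite right --- the trivial bound $\mathbf{p}'_{L,u}(w)\ge 0$ only gives error $\le\epsilon$, not $\le\epsilon/4$ --- but Bernstein controls both tails at once, so this does not affect the argument.
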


\stitle{Proof of Lemma \ref{lem:phase_1}.}\hypertarget{proof_of_lemma:phase_1}{}
    We define a sequence of random variables output by lazy random walks: $\{X^{(1)}_{u,w,k,j}\}$ with source node $u$ and target node $w$, and $ k\in [L]$ and $j\in[n_r^{(1)}]$. Each $X^{(1)}_{u,w,k,j}$ represents the $j^{th}$ sample of random walk, start from node $u$ with $k$ steps. If this $k$ step random walk reach $w$, then $X^{(1)}_{u,w,k,j}=1$, else $X^{(1)}_{u,w,k,j}=0$. We denote $\bar{X}^{(1)}_{u,w,k}=\frac{1}{n_r}\sum_{j=1}^{n_r^{(1)}}{X^{(1)}_{u,w,k,j}}$ as the average of these random variables. Thus $\mathbb{E}[\bar{X}^{(1)}_{u,w,k}]=\mathbf{e}_w^T\left(\frac{1}{2}\mathbf{I}+\frac{1}{2}\mathbf{P}\right)^k\mathbf{e}_u$ and $Var[\bar{X}^{(1)}_{u,w,k}]\leq \frac{1}{n_r^{(1)}}Var[ X_{u,w,k,j}^{(1)}]\leq \frac{1}{n_r^{(1)}}\mathbb{E}[ (X_{u,w,k,j}^{(1)})^2]= \frac{1}{n_r^{(1)}}\mathbb{E}[ X^{(1)}_{u,w,k,j}]$. We set $n_r^{(1)}= \frac{128L^2}{\epsilon }$. For $\mathbf{e}_w^T\left(\frac{1}{2}\mathbf{I}+\frac{1}{2}\mathbf{P}\right)^k\mathbf{e}_u> \frac{\epsilon}{L}$, $ k\in [L]$, by the Chebyshev's inequality,
    \begin{align*}
        \mathbb{P}\left[\left|\bar{X}^{(1)}_{u,w,k}-\mathbb{E}\left[\bar{X}^{(1)}_{u,w,k}\right]\right|\geq \frac{1}{8}\mathbb{E}\left[\bar{X}^{(1)}_{u,w,k}\right]\right]\leq \frac{1}{2}.
    \end{align*}
For $\mathbf{e}_w^T\left(\frac{1}{2}\mathbf{I}+\frac{1}{2}\mathbf{P}\right)^k\mathbf{e}_u\leq \frac{\epsilon}{L}$, $ k\in [L]$, by the Chebyshev's inequality,
\begin{align*}
        \mathbb{P}\left[\left|\bar{X}^{(1)}_{u,w,k}-\mathbb{E}\left[\bar{X}^{(1)}_{u,w,k}\right]\right|\geq \frac{\epsilon}{8L}\right]\leq \frac{1}{2}.
    \end{align*}
    
    Again, by the standard median of means estimator, repeating the sampling process $O(\log n)$ times and take the median, this probability can be improved to less than $n^{-\Omega(1)}$. Putting it together, the total sampling times $n_r^{(1)}= O(\frac{L^2\log  n}{\epsilon })$ is enough. Finally, since $\mathbf{p}_{L,u}(w)=\frac{1}{2}\sum_{k=0}^{L}{\mathbf{e}_w^T\left(\frac{1}{2}\mathbf{I}+\frac{1}{2}\mathbf{P}\right)^k\mathbf{e}_u}= \frac{1}{2}\sum_{k=0}^{L}{\mathbb{E}\left[\bar{X}^{(1)}_{u,w,k}\right]}$, we define the approximation $\mathbf{p}'_{L,u}(w)= \frac{1}{2}\sum_{k=0}^{L}{\bar{X}^{(1)}_{u,w,k}}$ (This is equivalent to the process of Line 4 in Algorithm \ref{algo:bidir_sketch}). Thus, the approximation $\hat{\mathbf{p}}_{L,u}$ satisfy the following two conditions: 
\begin{itemize}
\item (i) For any $\mathbf{p}_{L,u}(w)> \epsilon$, we have 
\begin{align*}
|\mathbf{p}'_{L,u}(w)-\mathbf{p}_{L,u}(w)|&\leq \sum_{k=0}^{L}{\left|\bar{X}^{(1)}_{u,w,k}-\mathbb{E}\left[\bar{X}^{(1)}_{u,w,k}\right]\right|}
\leq \frac{1}{8}\mathbf{p}_{L,u}(w)+\frac{\epsilon}{8}\leq \frac{1}{4}\mathbf{p}_{L,u}(w)
\end{align*}
with probability at least $1-L n^{-\Omega(1)}=1-n^{-\Omega(1)}$.
    
\item (ii) For any $\mathbf{p}_{L,u}(w)\leq \epsilon$, we have 
\begin{align*}
|\mathbf{p}'_{L,u}(w)-\mathbf{p}_{L,u}(w)|&\leq \sum_{k=0}^{L}{\left|\bar{X}^{(1)}_{u,w,k}-\mathbb{E}\left[\bar{X}^{(1)}_{u,w,k}\right]\right|}
\leq \frac{1}{8}\mathbf{p}_{L,u}(w)+\frac{\epsilon}{8}\leq \frac{1}{4}\epsilon
\end{align*}
with probability at least $1-L n^{-\Omega(1)}=1-n^{-\Omega(1)}$.
\end{itemize}
As a result, $S'_{\epsilon,u}=\{w\in \mathcal{V}:\mathbf{p}'_{L,u}(w)> \epsilon/2\}$ contains $S_{\epsilon,u}$ with probability at least $1-n^{-\Omega(1)}$.\hfill$\blacksquare$\par

Next, we perform \push operations for each node $v\in S'_{\epsilon,u}$. For each $v$, we set the threshold for \push to be $\frac{r_{max}}{2 \mathbf{p}'_{L,u}(v)}$. By Lemma \ref{lem:invariant} and Eq. (\ref{equ:rw_representation}), we have the following equation:
\begin{equation}
\begin{aligned}
        \frac{\mathbf{p}_{L,v}(u)}{d_u}=\frac{\mathbf{q}_{L,v}(u)}{d_u}+\sum_{k=0}^{L}{\sum_{w\in\mathcal{V}}{\left(\sum_{i=0}^{L-k}{\frac{\mathbf{r}_{i,v}(w)}{d_w}}\right)\mathbf{e}_w^T\left(\frac{1}{2}\mathbf{I}+\frac{1}{2}\mathbf{P}\right)^{k}\mathbf{e}_u}}.
\end{aligned}
\end{equation}
    By the fact $\frac{\mathbf{p}_{L,v}(u)}{d_u}=\frac{\mathbf{p}_{L,u}(v)}{d_v}$, this is equivalent to
    \begin{equation}
    \begin{aligned}
        \mathbf{p}_{L,u}(v)=\frac{\mathbf{q}_{L,v}(u)d_v}{d_u}+d_v\sum_{k=0}^{L}{\sum_{w\in\mathcal{V}}{\left(\sum_{i=0}^{L-k}{\frac{\mathbf{r}_{i,v}(w)}{d_w}}\right)\mathbf{e}_w^T\left(\frac{1}{2}\mathbf{I}+\frac{1}{2}\mathbf{P}\right)^{k}\mathbf{e}_u}}.
    \end{aligned}
    \end{equation}
    This enables us to approximate $\mathbf{p}_{L,u}(v)$ for any $v\in S'_{\epsilon,u}$ by random walks from just a source node $u$, see Fig. \ref{fig:our_algo} (b) as an illustration. Similar to the proof for Lemma \ref{lem:rw_concentration} and Lemma \ref{lem:phase_1}, we define a sequence of random variables: $\{X^{(3)}_{u,k,j}\}$. Each $X^{(3)}_{u,k,j}$ represents the $j^{th}$ sample of random walk, start from node $u$ with $k$ steps. If this $k$ step random walk reach $w$, then we set $X^{(3)}_{u,k,j}=\sum_{i=0}^{L-k}{\frac{\mathbf{r}_{i,v}(w)}{d_w}}$. We denote $\bar{X}^{(3)}_{u,k}=\frac{1}{n_r^{(3)}}\sum_{j=1}^{n_r^{(3)}}{X^{(3)}_{u,k,j}}$ as the average of these random variables. For each $v\in S'_{\epsilon,u}$, clearly we have
\begin{equation}
    \begin{aligned}
        \mathbf{p}_{L,u}(v)=\frac{\mathbf{q}_{L,v}(u)d_v}{d_u}+d_v\sum_{k=0}^{L}{\mathbb{E}[\bar{X}^{(3)}_{u,k}]}.
    \end{aligned}
    \end{equation}

We define the approximation as
\begin{equation}
    \begin{aligned}
\hat{\mathbf{p}}_{L,u}(v)= \frac{\mathbf{q}_{L,v}(u)d_v}{d_u}+d_v\sum_{k=0}^{L}{\bar{X}^{(3)}_{u,k}}.
\end{aligned}
    \end{equation}

This is equivalent to the expression of Line 12 in Algorithm \ref{algo:bidir_sketch}. Next, we prove the concentration result of $\hat{\mathbf{p}}_{L,u}$.

\begin{lemma}
\label{lem:rw_sketch_concentration}
    When setting $n_r^{(3)}=\tilde{O}(\frac{L}{\epsilon^2}r_{max}d_u)$, the approximation $\hat{\mathbf{p}}_{L,u}$ satisfy: $|\hat{\mathbf{p}}_{L,u}(v)-\mathbf{p}_{L,u}(v)|\leq \epsilon$ with probability at least $1-n^{-\Omega(1)}$ for any $v\in S'_{\epsilon,u}$ with $d_v\leq d_u$.
\end{lemma}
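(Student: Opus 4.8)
The plan is to follow the structure of the proof of Lemma~\ref{lem:rw_concentration}, the one genuinely new point being that the \push threshold $\frac{r_{max}}{2\mathbf{p}'_{L,u}(v)}$ is scaled by $1/\mathbf{p}'_{L,u}(v)$ precisely so that a factor of $\mathbf{p}_{L,u}(v)$ cancels in the variance bound. By the identity $\mathbf{p}_{L,u}(v)=\frac{\mathbf{q}_{L,v}(u)d_v}{d_u}+d_v\sum_{k=0}^{L}\mathbb{E}[\bar{X}^{(3)}_{u,k}]$ displayed just before the statement and the definition $\hat{\mathbf{p}}_{L,u}(v)=\frac{\mathbf{q}_{L,v}(u)d_v}{d_u}+d_v\sum_{k=0}^{L}\bar{X}^{(3)}_{u,k}$, and since $\frac{\mathbf{q}_{L,v}(u)d_v}{d_u}$ is computed exactly, we have $|\hat{\mathbf{p}}_{L,u}(v)-\mathbf{p}_{L,u}(v)|\le d_v\sum_{k=0}^{L}|\bar{X}^{(3)}_{u,k}-\mathbb{E}[\bar{X}^{(3)}_{u,k}]|$. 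Thus it suffices to prove $|\bar{X}^{(3)}_{u,k}-\mathbb{E}[\bar{X}^{(3)}_{u,k}]|\le\frac{\epsilon}{(L+1)d_v}$ for each $k\in\{0,\dots,L\}$ and then sum.

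The core step is a variance bound for $\bar{X}^{(3)}_{u,k}=\frac{1}{n_r^{(3)}}\sum_j X^{(3)}_{u,k,j}$. First, a \emph{range} bound: \push from $v$ with threshold $\frac{r_{max}}{2\mathbf{p}'_{L,u}(v)}$ produces, by Lemma~\ref{lem:guarantee_push}, residuals with $\mathbf{r}_{i,v}(w)\le \frac{d_w}{L^2}\cdot\frac{r_{max}}{2\mathbf{p}'_{L,u}(v)}$, so $|X^{(3)}_{u,k,j}|=\sum_{i=0}^{L-k}\frac{\mathbf{r}_{i,v}(w)}{d_w}\le\frac{r_{max}}{L\,\mathbf{p}'_{L,u}(v)}$; moreover, since $v\in S'_{\epsilon,u}$, a short case split on whether $\mathbf{p}_{L,u}(v)\ge\epsilon$ combined with Lemma~\ref{lem:phase_1} gives $\mathbf{p}'_{L,u}(v)\ge\frac{1}{2}\mathbf{p}_{L,u}(v)$, hence $|X^{(3)}_{u,k,j}|=O(\frac{r_{max}}{L\,\mathbf{p}_{L,u}(v)})$. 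Second, a \emph{mean} bound: the same displayed identity (equivalently, rearranging the \push invariant of Lemma~\ref{lem:invariant} as in Eq.~(\ref{equ:rw_representation}) with source $v$) together with Fact~\ref{fact:p_L_transform} and $\mathbf{q}_{L,v}(u)\ge 0$ gives $\sum_{k=0}^{L}\mathbb{E}[X^{(3)}_{u,k,j}]=\frac{\mathbf{p}_{L,u}(v)}{d_v}-\frac{\mathbf{q}_{L,v}(u)}{d_u}\le\frac{\mathbf{p}_{L,u}(v)}{d_v}$, so in particular $\mathbb{E}[X^{(3)}_{u,k,j}]\le\frac{\mathbf{p}_{L,u}(v)}{d_v}$. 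Multiplying the two bounds, the factor $\mathbf{p}_{L,u}(v)$ cancels: $Var[\bar{X}^{(3)}_{u,k}]\le\frac{1}{n_r^{(3)}}\max_j|X^{(3)}_{u,k,j}|\cdot\mathbb{E}[X^{(3)}_{u,k,j}]=O(\frac{r_{max}}{L\,n_r^{(3)}\,d_v})$.

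Chebyshev's inequality now gives $\mathbb{P}\left[|\bar{X}^{(3)}_{u,k}-\mathbb{E}[\bar{X}^{(3)}_{u,k}]|\ge\frac{\epsilon}{(L+1)d_v}\right]\le\frac{(L+1)^2 d_v^2}{\epsilon^2}Var[\bar{X}^{(3)}_{u,k}]=O(\frac{L\,r_{max}\,d_v}{n_r^{(3)}\,\epsilon^2})$, which is a small constant once $n_r^{(3)}=\Theta(\frac{L\,r_{max}\,d_u}{\epsilon^2})$ since $d_v\le d_u$; the extra $\log n$ factor needed to drive this probability below $n^{-\Omega(1)}$ via the standard median-of-means estimator is exactly what makes $n_r^{(3)}=\tilde{O}(\frac{L}{\epsilon^2}r_{max}d_u)$. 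Taking a union bound over the $L+1$ indices $k$ and over the $v\in S'_{\epsilon,u}$ with $d_v\le d_u$ — at most $|S'_{\epsilon,u}|=\tilde{O}(1/\epsilon)$ of them, by Lemma~\ref{lem:phase_1} and Fact~\ref{fact:p_L_norm_1} — all these events hold simultaneously w.h.p., and then $|\hat{\mathbf{p}}_{L,u}(v)-\mathbf{p}_{L,u}(v)|\le d_v(L+1)\cdot\frac{\epsilon}{(L+1)d_v}=\epsilon$ (after rescaling $\epsilon$ by a constant).

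I expect the cancellation in the second paragraph to be the crux: one must balance a range bound that is favourable when $\mathbf{p}_{L,u}(v)$ is large against a mean bound that is favourable when $\mathbf{p}_{L,u}(v)$ is small, so that the number of random walks no longer depends on how small the target coordinate is — and this is precisely why \push must be run with a threshold proportional to $1/\mathbf{p}'_{L,u}(v)$, and hence why the coarse Stage-I estimate $\mathbf{p}'_{L,u}$ is computed at all. A minor subtlety to handle is that all $v\in S'_{\epsilon,u}$ reuse the same Stage-III random walks, so the estimates $\{\hat{\mathbf{p}}_{L,u}(v)\}_v$ are not independent; this causes no difficulty because we only ever apply a union bound over the $\tilde{O}(1/\epsilon)$ relevant events and never use joint independence.
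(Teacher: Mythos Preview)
Your proposal is correct and follows essentially the same argument as the paper's proof: bound the range of $X^{(3)}_{u,k,j}$ using the \push threshold, bound its mean by $\mathbf{p}_{L,u}(v)/d_v$ via the displayed identity, multiply to cancel $\mathbf{p}_{L,u}(v)$ against $\mathbf{p}'_{L,u}(v)$, apply Chebyshev and median-of-means, and union bound over $k$. You are slightly more explicit than the paper in two places --- the case split showing $\mathbf{p}'_{L,u}(v)\ge\tfrac{1}{2}\mathbf{p}_{L,u}(v)$ for $v\in S'_{\epsilon,u}$, and the union bound over the $\tilde{O}(1/\epsilon)$ targets $v$ --- but the structure and constants are otherwise identical.
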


\stitle{Proof of Lemma \ref{lem:rw_sketch_concentration}.} \hypertarget{proof_of_lemma:rw_sketch_concentration}{}
we set the threshold for \push $\frac{r_{max}}{2 \mathbf{p}'_{L,u}(v)}$, thus $r_{i,v}(w)\leq \frac{r_{max}d_w}{2L^2\mathbf{p}'_{L,u}(v)}$ for any $w\in \mathcal{V}$ and $i\leq L$. For each random variable $X^{(3)}_{u,k,j}$, we have $|X^{(3)}_{u,k,j}|\leq \max_{w,k} \left\{\sum_{i=0}^{L-k}{\frac{\mathbf{r}_{i,v}(w)}{d_w}}\right\}\leq \frac{r_{max}}{2L\mathbf{p}'_{L,u}(v)}$. Therefore, 
    \begin{align*}
        Var[\bar{X}^{(3)}_{u,k}]&=\frac{1}{n_r}Var[X^{(3)}_{u,k,j}]\leq \frac{1}{n_r}\mathbb{E}[(X^{(3)}_{u,k,j})^2]\\
        &\leq \frac{r_{max}}{2Ln_r\mathbf{p}'_{L,u}(v)}\mathbb{E}[X^{(3)}_{u,k,j}]\leq \frac{r_{max}}{2Ln_r\mathbf{p}'_{L,u}(v)}\frac{\mathbf{p}_{L,u}(v)}{d_v}\\
        &\leq \frac{r_{max}}{Ln_r}\frac{1}{d_v}\leq \frac{\epsilon^2}{4L^2d_ud_v},
    \end{align*}
where the final inequality holds when setting $n_r^{(3)}=\frac{4L}{\epsilon^2}r_{max}d_u$. By $d_v\leq d_u$, we obtain $Var[\bar{X}^{(3)}_{u,k}]\leq \frac{\epsilon^2}{4L^2d_v^2}$, which is equivalent to $Var[d_v\bar{X}^{(3)}_{u,k}]\leq \frac{\epsilon^2}{4L^2}$. By Chebyshev's inequality,
\begin{equation}\label{equ:sketch_var}
\begin{aligned}
    \mathbb{P}[|d_v\bar{X}^{(3)}_{u,k}-\mathbb{E}[d_v\bar{X}^{(3)}_{u,k}]|\geq \frac{\epsilon}{L}]\leq \frac{1}{2}.
\end{aligned}
\end{equation}
Again, by the standard median of means estimator, we repeat the sampling process $O(\log n)$ times to improve this probability to less than $n^{-\Omega(1)}$. The total sampling times is $n_r^{(3)}=O(\frac{L}{\epsilon^2}r_{max}d_u\log  n)=\tilde{O}(\frac{L}{\epsilon^2}r_{max}d_u)$. Since the approximation is defined as $\hat{\mathbf{p}}_{L,u}(v)= \frac{\mathbf{q}_{L,v}(u)d_v}{d_u}+d_v\sum_{k=0}^{L}{\bar{X}^{(3)}_{u,k}}$, by Equ. (\ref{equ:sketch_var}), 
\begin{align*}
|\hat{\mathbf{p}}_{L,u}(v)-\mathbf{p}_{L,u}(v)|&\leq \sum_{k=0}^{L}{|d_v\bar{X}^{(3)}_{u,k}-\mathbb{E}[d_v\bar{X}^{(3)}_{u,k}]|}\leq \sum_{k=0}^{L}{\frac{\epsilon}{L}}\leq \epsilon
\end{align*}
with probability at least $1-L n^{-\Omega(1)}=1-n^{-\Omega(1)}$. This finishes the proof.\hfill$\blacksquare$\par

Finally, we use Lemma \ref{lem:rw_sketch_concentration} to prove Theorem \ref{thm:ER_sketch_specific}.

\stitle{Proof of Theorem \ref{thm:ER_sketch_specific}.}\hypertarget{proof_of_theorem:ER_sketch_specific}{} 
First, the error bound of Algorithm \ref{algo:bidir_sketch} directly follows by Lemma \ref{lem:rw_sketch_concentration}. For the time complexity of Algorithm \ref{algo:bidir_sketch}, for each $u\in \mathcal{V}$, in Phase I we perform $n_r^{(1)}$ random walks, each length $L$, so the time complexity is $O(n_r^{(1)}L)=O(\frac{4L^3\log n}{\epsilon})=\tilde{O}(\frac{L^3}{\epsilon})$. In Phase II we perform \push from the candidate set $S'_{\epsilon,u}$ with threshold $\frac{r_{max}}{2\mathbf{p}'_{L,u}}$, the time complexity is bounded by $O(\sum_{v\in S'_{\epsilon,u}}{\frac{L^3\mathbf{p}'_{L,u}(v)}{r_{max}}})=O(\sum_{v\in S'_{\epsilon,u}}{\frac{L^3\mathbf{p}_{L,u}(v)}{r_{max}}})$. By the fact that $\Vert \mathbf{p}_{L,u}\Vert_1\leq \frac{L}{2}$, this is further bounded by $O(\frac{L^4}{r_{max}})$. In phase III we perform $n_r^{(3)}$ random walks from $u$, each length $L$, so the time complexity is $O(n_r^{(3)}L)=O(\frac{L^2}{\epsilon^2}r_{max}d_u\log  n)=\tilde{O}(\frac{L^2}{\epsilon^2}r_{max}d_u)$. We sum over all $u\in \mathcal{V}$, and setting $r_{max}=L\epsilon\sqrt{\frac{n}{m}}$, the total time complexity is:
\begin{align*}
    \tilde{O}\left(n\frac{L^3}{\epsilon}+n\frac{L^4}{r_{max}}+m\frac{L^2}{\epsilon^2}r_{max}\right)=\tilde{O}\left(n\frac{L^3}{\epsilon}+\sqrt{nm}\frac{L^3}{\epsilon}\right)=\tilde{O}\left(\frac{\sqrt{nm}}{\epsilon}\right).
\end{align*}

For the space complexity, by Fact \ref{fact:contributing_set} and Lemma \ref{lem:phase_1}, the size of the candidate set $S'_{\epsilon,u}$ can be bounded by $|S'_{\epsilon,u}|\leq \tilde{O}(\frac{1}{\epsilon})$, so the total space complexity is $\tilde{O}(\sum_{u\in \mathcal{V}}{|S'_{\epsilon,u}|})=\tilde{O}(\frac{n}{\epsilon})$. This proves Theorem \ref{thm:ER_sketch_specific}.\hfill$\blacksquare$\par

Arming with the query algorithm (Algorithm \ref{algo:sketch_query}) and Theorem \ref{thm:ER_sketch_specific}, we are ready to prove Theorem \ref{thm:ER_sketch}.

\stitle{Proof of Theorem \ref{thm:ER_sketch}.}\hypertarget{proof_of_theorem:ER_sketch}{}
Given any vertex pair $(s,t)$, without loss of generality we assume $d_s\leq d_t$. By Equ. (\ref{equ:ER_p_expression}) and the fact $\frac{\mathbf{p}_{L,s}(t)}{d_t}=\frac{\mathbf{p}_{L,t}(s)}{d_s}$, we can express the ER value by
\begin{align*}
    r_{\mathcal{G},L}(s,t)=\frac{\mathbf{p}_{L,s}(s)}{d_s}-2\frac{\mathbf{p}_{L,t}(s)}{d_s}+\frac{\mathbf{p}_{L,t}(t)}{d_t}.
\end{align*}

By Theorem \ref{thm:ER_sketch_specific}, Algorithm \ref{algo:bidir_sketch} outputs the approximation such that $|\mathbf{p}_{L,s}(s)-\hat{\mathbf{p}}_{L,s}(s)|\leq \epsilon$, $|\mathbf{p}_{L,t}(s)-\hat{\mathbf{p}}_{L,t}(s)|\leq \epsilon$ and $|\mathbf{p}_{L,t}(t) -\hat{\mathbf{p}}_{L,t}(t)|\leq \epsilon$. Thus the approximation $\hat{r}_\mathcal{G}(s,t)$ output by Algorithm \ref{algo:sketch_query} satisfy: $|\hat{r}_\mathcal{G}(s,t)-r_{\mathcal{G},L}(s,t)|\leq \frac{4\epsilon}{d_s}\leq 8\epsilon r_\mathcal{G}(s,t)$. By Lemma \ref{lem:ER_truncate}, $|r_\mathcal{G}(s,t)-r_{\mathcal{G},L}(s,t)|\leq \epsilon r_\mathcal{G}(s,t)$, thus $|\hat{r}_\mathcal{G}(s,t)-r_\mathcal{G}(s,t)|\leq 9\epsilon r_\mathcal{G}(s,t)$. In addition, since the query time of Algorithm \ref{algo:sketch_query} is $O(1)$ for a given vertex pair $(s,t)$, so our algorithm is an $(\tilde{O}(\frac{\sqrt{nm}}{\epsilon}), O(1),\tilde{O}(\frac{n}{\epsilon}))$-ER sketch algorithm. Finally, we combine our results with a previous resistance sparsifier result from ~\cite{chu2020graph} .

\begin{theorem}{~\cite{chu2020graph}}
    Given graph $\mathcal{G}$ with $m$ edges and $n$ vertices, there exist an algorithm that runs in $\tilde{O}(m)$ time and produce a graph $\mathcal{H}$ with $\tilde{O}(\frac{n}{\epsilon})$ edges, such that for $\forall s,t\in \mathcal{V}$, $r_{\mathcal{G}}(s,t)\approx_\epsilon r_\mathcal{H}(s,t)$.
\end{theorem}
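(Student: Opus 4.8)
The plan is to establish this as an instance of the short cycle decomposition framework (Chu, Gao, Peng, Sachdeva, Sawlani, Wang). The central combinatorial engine is that every graph on $n$ vertices with $m$ edges admits, and one can compute in $\tilde O(m)$ time (using a nearly-linear-time short-cycle-decomposition routine; the original bound was $m^{1+o(1)}$), a partition of $\mathcal{E}$ into a leftover set $\mathcal{E}_0$ of at most $\tilde O(n)$ edges together with a collection of \emph{edge-disjoint} cycles $C_1,\dots,C_r$, each of length at most $\hat L=O(\log n)$. The short length $\hat L$ is what matters analytically: along any such cycle the effective-resistance leverage scores of its edges sum to at most $\hat L$, so no single cycle carries too much electrical mass, and this is what keeps the concentration bounds below from losing more than polylogarithmic factors.

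Given the decomposition, the sparsification step is random \emph{alternating cycle elimination}: for each cycle $C_j=(e_1^{(j)},e_2^{(j)},\dots)$ independently, flip a fair coin and either delete the odd-indexed edges and double the weights of the even-indexed ones, or vice versa. Since along a cycle the signed incidence vectors satisfy $\sum_{e\in C}\pm b_e=0$, this operation is \emph{unbiased} at the Laplacian level, $\mathbb{E}[\mathbf{L}_{\mathrm{new}}]=\mathbf{L}_{\mathrm{old}}$ on the cycle part, while it removes exactly half the cycle edges. Re-attaching $\mathcal{E}_0$, recomputing a short cycle decomposition of the new graph, and recursing for $O(\log(m\epsilon/n))$ rounds drives the edge count down to $\tilde O(n/\epsilon)$; because the work per round is linear in the current edge count and these shrink geometrically, the total running time is $\tilde O(m)$, and the output graph $\mathcal{H}$ has $\tilde O(n/\epsilon)$ edges by construction.

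It remains to prove $r_\mathcal{H}(s,t)\approx_\epsilon r_\mathcal{G}(s,t)$ for all pairs simultaneously, which is the heart of the argument. Fix a pair $(s,t)$, write $\chi=\mathbf{e}_s-\mathbf{e}_t$, $f=\mathbf{L}_\mathcal{G}^{\dagger/2}\chi$ (so $\|f\|^2=r_\mathcal{G}(s,t)$), and track the scalar martingale $Z_k=f^T\big(\mathbf{L}_\mathcal{G}^{\dagger/2}\mathbf{L}_{\mathcal{H}_k}\mathbf{L}_\mathcal{G}^{\dagger/2}\big)f$ across rounds, with $Z_0=\|f\|^2$; a Sherman--Morrison-type perturbation argument then converts an additive guarantee on $Z$ into the desired multiplicative guarantee on $r_\mathcal{H}(s,t)$. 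The per-round increments are controlled using the bounded length $\hat L$: eliminating a cycle changes $Z$ by at most a constant times the leverage-score mass of that cycle against $f$, and summing over the edge-disjoint cycles of one round bounds both the maximal increment $b$ and the conditional quadratic variation $V$. The key point — and the reason $\epsilon^{-1}$ rather than $\epsilon^{-2}$ edges suffice — is that we need only control a \emph{single} quadratic form per pair, so a scalar Bernstein/Freedman inequality applies (rather than matrix Chernoff / operator-norm control as in full spectral sparsification), and there the deviation scales like $\sqrt{V\log n}$ with $V\lesssim \epsilon^2 r_\mathcal{G}(s,t)^2$ once the edge count reaches $\tilde O(n/\epsilon)$. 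Each pair then fails with probability $n^{-\Omega(1)}$, and a union bound over the $\binom{n}{2}$ pairs (affordable since the exponent constant can be taken large) gives the all-pairs guarantee; equivalently, what is being built is a \emph{spectral sketch} of quality $\epsilon$ that is simultaneously good for all demand vectors $\mathbf{e}_s-\mathbf{e}_t$.

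The main obstacle is precisely this last error analysis, with two coupled difficulties: (i) making the variance bound tight enough to pay only $\epsilon^{-1}$, which requires exploiting that effective resistance is a one-dimensional functional of $\mathbf{L}^\dagger$ and bounding the cycle-wise increments through the short-cycle leverage-score inequality rather than any spectral-norm estimate; and (ii) handling that the test vector $f$, the leverage scores, and the cycle structure all change from round to round, which forces a martingale formulation with the right filtration, uses that leverage scores are stable/monotone under the deletions, and telescopes over the $O(\log n)$ rounds. A secondary point is ensuring that the re-introduced leftover set stays $\tilde O(n)$ at every level of the recursion so that it terminates at $\tilde O(n/\epsilon)$ edges rather than accumulating; this follows from the guarantee of the short cycle decomposition but must be tracked carefully through the iteration.
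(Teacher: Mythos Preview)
The paper does not prove this statement at all: it is quoted verbatim as a result of \cite{chu2020graph} and invoked as a black box inside the proof of Theorem~\ref{thm:ER_sketch}, solely to pass from the $\tilde O(\sqrt{nm}/\epsilon)$ bound to the alternative $\tilde O(m+n/\epsilon^{1.5})$ bound. So there is nothing to compare your proposal against in this paper.

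That said, your sketch is a faithful outline of the short-cycle-decomposition argument from the cited source: decompose into $\tilde O(n)$ leftover edges plus short edge-disjoint cycles, randomly halve each cycle in an unbiased way, recurse, and control the error on each quadratic form $\chi^T\mathbf{L}^\dagger\chi$ via a scalar (rather than matrix) martingale concentration, which is exactly what buys $n/\epsilon$ instead of $n/\epsilon^2$ edges. Two small caveats worth flagging if you intend to expand this into a self-contained proof: (i) the $\tilde O(m)$ running time you state relies on near-linear-time short-cycle decomposition, which is not in the original \cite{chu2020graph} paper (their bound was $m^{1+o(1)}$) but comes from subsequent work, so the citation as written in the paper is slightly loose; and (ii) the union bound over all $\binom{n}{2}$ pairs $(s,t)$ is not how the all-pairs guarantee is actually obtained in that line of work---one instead shows the sparsifier is a graphical spectral sketch (good for all vectors in a fixed subspace, here the cut/resistance directions), since a naive union bound over $n^2$ events would force the per-pair failure probability down and inflate constants. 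Neither point breaks your high-level plan, but both would need attention in a full write-up.
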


Therefore, our algorithm can be also processed in $\tilde{O}(m+n/\epsilon^{1.5})$ time. Putting it together, our algorithm is an $(\tilde{O}(\min \{m+n/\epsilon^{1.5}, \sqrt{nm}/\epsilon\}),O(1),\tilde{O}(n/\epsilon))$-ER sketch algorithm.\hfill$\blacksquare$\par

\stitle{Remark} We remark that as a corollary, our $\tilde{O}(\frac{\sqrt{nm}}{\epsilon})$ time algorithm provides the first sublinear algorithm for the multiple pair ER computation problem. Besides, we remark that if we simply use \push to approximate $\mathbf{p}_{L,u}$ with threshold $\frac{\epsilon}{d_u}$ for each $u\in \mathcal{V}$, by Lemma \ref{lem:guarantee_push}, this also satisfies the error guarantee of Theorem \ref{thm:ER_sketch_specific}. 
In addition, by Lemma \ref{lem:guarantee_push}, the total time complexity of this procedure is bounded by $O(\sum_{u\in \mathcal{V}}{\frac{L^3d_u}{\epsilon}})=\tilde{O}(\frac{m}{\epsilon})$, which is asymptotically the same processing time as the Algorithm in ~\cite{dwaraknath2024towards} for expander graphs. This is an interesting corollary because \push is a deterministic algorithm that achieves the same complexity as the previous state-of-the-art randomized algorithms.


\section{Other Omitting Proofs}\label{sec:other_proofs}

\stitle{Proof of Lemma \ref{lemma:ER_taylor_expansion}.}\hypertarget{proof_of_lemma:ER_taylor_expansion}{}
    We use the similar argument as Lemma 4.3 in ~\cite{peng2021local}. First, we note that $\mathbf{L}=\mathbf{D}-\mathbf{A}=\mathbf{D}^{1/2}(\mathbf{I}-\mathcal{A})\mathbf{D}^{1/2}$, where $\mathcal{A}$ is the normalized adjacency matrix. Therefore, we have:
 \begin{align*}
    \mathbf{L}^\dagger&=\mathbf{D}^{-1/2}(\mathbf{I}-\mathcal{A})^\dagger\mathbf{D}^{-1/2}
    =\mathbf{D}^{-1/2}\sum_{j=2}^{n}{\frac{1}{1-\lambda_j(\mathcal{A})}\mathbf{u}_j\mathbf{u}_j^T}\mathbf{D}^{-1/2}\\
    &=\frac{1}{2}\mathbf{D}^{-1/2}\sum_{j=2}^{n}{\sum_{l=0}^{\infty}{\left(\frac{1}{2}+\frac{1}{2}\lambda_j(\mathcal{A})\right)^l}\mathbf{u}_j\mathbf{u}_j^T}\mathbf{D}^{-1/2}\\
    &=\frac{1}{2}\mathbf{D}^{-1/2}\sum_{l=0}^{\infty}{\sum_{j=2}^{n}{\left(\frac{1}{2}+\frac{1}{2}\lambda_j(\mathcal{A})\right)^l}\mathbf{u}_j\mathbf{u}_j^T}\mathbf{D}^{-1/2}\\
    &=\frac{1}{2}\mathbf{D}^{-1/2}\sum_{l=0}^{\infty}{\left(\left(\frac{1}{2}\mathbf{I}+\frac{1}{2}\mathcal{A}\right)^l-\mathbf{u}_1\mathbf{u}_1^T\right)}\mathbf{D}^{-1/2}.\\
 \end{align*}
 Where $\mathbf{u}_j$ denotes the eigenvector of $\mathcal{A}$ corresponding to the eigenvalue $\lambda_j(\mathcal{A})$. Next, we notice the fact that $\mathbf{u}_1=\mathbf{D}^{1/2}\mathbf{1}\perp \mathbf{D}^{-1/2}(\mathbf{e}_s-\mathbf{e}_t)$ for any $s,t\in \mathcal{V}$, where $\mathbf{1}$ is the all-one vector. Therefore,
 \begin{align*}
     r_\mathcal{G}(s,t)&=(\mathbf{e}_s-\mathbf{e}_t)^T\mathbf{L}^{\dagger}(\mathbf{e}_s-\mathbf{e}_t)\\
     &=\frac{1}{2}(\mathbf{e}_s-\mathbf{e}_t)^T\mathbf{D}^{-1/2}\sum_{l=0}^{\infty}{\left(\left(\frac{1}{2}\mathbf{I}+\frac{1}{2}\mathcal{A}\right)^l-\mathbf{u}_1\mathbf{u}_1^T\right)}\mathbf{D}^{-1/2}(\mathbf{e}_s-\mathbf{e}_t)\\
     &=\frac{1}{2}(\mathbf{e}_s-\mathbf{e}_t)^T\mathbf{D}^{-1/2}\sum_{l=0}^{+\infty}{\left(\frac{1}{2}\mathbf{I}+\frac{1}{2}\mathcal{A}\right)^l}\mathbf{D}^{-1/2}(\mathbf{e}_s-\mathbf{e}_t)\\
     &=\frac{1}{2}(\mathbf{e}_s-\mathbf{e}_t)^T\mathbf{D}^{-1}\sum_{l=0}^{+\infty}{\left(\frac{1}{2}\mathbf{I}+\frac{1}{2}\mathbf{P}\right)^l}(\mathbf{e}_s-\mathbf{e}_t).
 \end{align*}
 This finishes the proof.\hfill$\blacksquare$\par

\stitle{Proof of Lemma \ref{lem:ER_truncate}.}\hypertarget{proof_of_lemma:ER_truncate}{}
    By the definition of $r_{\mathcal{G},L}(s,t)$, we have that
    \begin{align*}
        &|r_\mathcal{G}(s,t)-r_{\mathcal{G},L}(s,t)|\\
        &= \left|\frac{1}{2}(\mathbf{e}_s-\mathbf{e}_t)^T\mathbf{D}^{-1}\sum_{l=L+1}^{\infty}{\left(\frac{1}{2}\mathbf{I}+\frac{1}{2}\mathbf{P}\right)^l}(\mathbf{e}_s-\mathbf{e}_t)\right|\\
        &=\left|\frac{1}{2}(\mathbf{e}_s-\mathbf{e}_t)^T\mathbf{D}^{-1/2}\sum_{l=L+1}^{\infty}{\left(\frac{1}{2}\mathbf{I}+\frac{1}{2}\mathcal{A}\right)^l}\mathbf{D}^{-1/2}(\mathbf{e}_s-\mathbf{e}_t)\right|\\
        &=\left|\frac{1}{2}(\mathbf{e}_s-\mathbf{e}_t)^T\mathbf{D}^{-1/2}\sum_{l=L+1}^{\infty}{\sum_{j=2}^{n}{\left(\frac{1}{2}+\frac{1}{2}\lambda_j(\mathcal{A})\right)^l}\mathbf{u}_j\mathbf{u}_j^T}\mathbf{D}^{-1/2}(\mathbf{e}_s-\mathbf{e}_t)\right|.
    \end{align*}

    Next, we denote $\alpha_j=\mathbf{u}_j^T\mathbf{D}^{-1/2}(\mathbf{e}_s-\mathbf{e}_t)$, and $\alpha=(\alpha_1,...,\alpha_n)^T=(\mathbf{u}_1,...,\mathbf{u}_n)^T\mathbf{D}^{-1/2}(\mathbf{e}_s-\mathbf{e}_t)$. Therefore, we can bound
    \begin{align*}
    \Vert\alpha\Vert_2&=\Vert(\mathbf{u}_1,...,\mathbf{u}_n)\mathbf{D}^{-1/2}(\mathbf{e}_s-\mathbf{e}_t)\Vert_2
    \leq \Vert\mathbf{D}^{-1/2}(\mathbf{e}_s-\mathbf{e}_t)\Vert_2\leq \sqrt{2}.
    \end{align*}
    And thus $\sum_{j=1}^{n}{\alpha_j^2}\leq 2$. Therefore,
    \begin{align*}
        |r_\mathcal{G}(s,t)-r_{\mathcal{G},L}(s,t)|&=\left|\frac{1}{2}\sum_{l=L+1}^{\infty}{\sum_{j=2}^{n}{\left(\frac{1}{2}+\frac{1}{2}\lambda_j(\mathcal{A})\right)^l}}\alpha_j^2\right|
        \leq \sum_{l=L+1}^{\infty}{\left(\frac{1}{2}+\frac{1}{2}\lambda_2(\mathcal{A})\right)^l},\\
    \end{align*}
    where $\lambda_2(\mathcal{A})$ is the second largest eigenvalue of $\mathcal{A}$. By the property of the eigenvalues of normalized Laplacian matrix, $\lambda_2(\mathcal{L})=1-\lambda_2(\mathcal{A})$. Therefore,
    \begin{align*}
        |r_\mathcal{G}(s,t)-r_{\mathcal{G},L}(s,t)|
        &\leq \sum_{l=L+1}^{\infty}{\left(1-\frac{1}{2}\lambda_2(\mathcal{L})\right)^l}
        \leq \left(1-\frac{1}{2}\lambda_2(\mathcal{L})\right)^L\frac{2}{\lambda_2(\mathcal{L})} \\
        &\leq \left(1-\frac{1}{2}\lambda_2(\mathcal{L})\right)^L\kappa(\mathcal{L}) \leq \frac{\epsilon}{n}.
    \end{align*}
    The final inequality holds when setting $L=\log \frac{\epsilon}{n\kappa(\mathcal{L})}/ \log \left(1-\frac{1}{2}\lambda_2(\mathcal{L})\right)$. We notice the fact $\log \left(1-\frac{1}{2}\lambda_2(\mathcal{L})\right)\leq  -\frac{1}{2}\lambda_2(\mathcal{L})$ and $\kappa(\mathcal{L})\leq \frac{2}{\lambda_2(\mathcal{L})}$, so we set $L\geq 2\kappa(\mathcal{L}) \log \frac{n}{\epsilon}$ is enough. Finally, by the fact $r_\mathcal{G}(s,t)\geq \frac{1}{d_s}+\frac{1}{d_t}\geq \frac{1}{n}$, we have $ |r_\mathcal{G}(s,t)-r_{\mathcal{G},L}(s,t)|\leq \frac{\epsilon}{n}\leq\epsilon r_\mathcal{G}(s,t)$, thus $r_{\mathcal{G},L}(s,t)$ is the $\epsilon$-approximation of $r_\mathcal{G}(s,t)$.\hfill$\blacksquare$\par

   \stitle{Proof of Fact \ref{fact:p_L_norm_1}.}\hypertarget{proof_of_fact:p_L_norm_1}{}
   By Definition \ref{def:p_L}, 
    \begin{align*}
        \Vert \mathbf{p}_{L,u}\Vert_1&=\left\Vert \frac{1}{2}\sum_{l=0}^{L}{\left(\frac{1}{2}\mathbf{I}+\frac{1}{2}\mathbf{P}\right)^l}\mathbf{e}_u\right\Vert_1
        \leq  \frac{1}{2}\sum_{l=0}^{L}{\left\Vert\left(\frac{1}{2}\mathbf{I}+\frac{1}{2}\mathbf{P}\right)^l\mathbf{e}_u\right\Vert_1}\leq  \frac{L}{2}.
    \end{align*}
    The final inequality holds because $\left(\frac{1}{2}\mathbf{I}+\frac{1}{2}\mathbf{P}\right)^l$ is a column stochastic matrix.\hfill$\blacksquare$\par

\stitle{Proof of Fact \ref{fact:p_L_transform}.}
    By Definition \ref{def:p_L}, 
    \begin{align*}
        \frac{\mathbf{p}_{L,u}(v)}{d_v}&=\frac{1}{2}\sum_{l=0}^{L}{\frac{1}{d_v}\left(\left(\frac{1}{2}\mathbf{I}+\frac{1}{2}\mathbf{P}\right)^l\mathbf{e}_u\right)(v)}
        =\frac{1}{2}\sum_{l=0}^{L}{\frac{1}{d_v}\mathbf{e}_v^T\left(\frac{1}{2}\mathbf{I}+\frac{1}{2}\mathbf{P}\right)^l\mathbf{e}_u}\\
        &=\frac{1}{2}\sum_{l=0}^{L}{\frac{1}{d_u}\mathbf{e}_u^T\left(\frac{1}{2}\mathbf{I}+\frac{1}{2}\mathbf{P}\right)^l\mathbf{e}_v}=\frac{\mathbf{p}_{L,v}(u)}{d_u}.
    \end{align*}
\hfill$\blacksquare$\par

\stitle{Proof of Fact \ref{fact:ER_lower_bound}.}\hypertarget{proof_of_fact:ER_lower_bound}{}
    We notice the fact that $\mathbf{L}=\mathbf{D}-\mathbf{A}\preceq 2\mathbf{D}$. Thus, for any $\mathbf{x}\perp \mathbf{1}$, $\mathbf{x}^T\mathbf{L}^\dagger \mathbf{x}\geq \mathbf{x}^T\frac{1}{2}\mathbf{D}^{-1} \mathbf{x}$. Therefore, 
    \begin{align*}
        r_\mathcal{G}(s,t)&=(\mathbf{e}_s-\mathbf{e}_t)^T\mathbf{L}^{\dagger}(\mathbf{e}_s-\mathbf{e}_t)\\
        &\geq (\mathbf{e}_s-\mathbf{e}_t)^T\frac{1}{2}\mathbf{D}^{-1}(\mathbf{e}_s-\mathbf{e}_t)=\frac{1}{2}\left(\frac{1}{d_s}+\frac{1}{d_t}\right).
    \end{align*}
\hfill$\blacksquare$\par






\stitle{Proof of Lemma \ref{lem:parallel_resistance}.}\hypertarget{proof_of_lemma:parallel_resistance}{}
To prove Lemma \ref{lem:parallel_resistance}, we need the following two Propositions. The first Proposition is the submatrix representation of ER value, and the second Proposition is the classic block matrix inverse formula.

\begin{proposition}{(Corollary 1 in ~\cite{liao2023resistance})}\label{prop:landmark}
    $r_\mathcal{G}(s,t)=(\mathbf{e}_s-\mathbf{e}_t)^T\mathbf{L}^\dagger(\mathbf{e}_s-\mathbf{e}_t)=(\mathbf{L}_s^{-1})_{tt}=( \mathbf{L}_t^{-1})_{ss}$ for any graph $\mathcal{G}$. Where $\mathbf{L}$ denotes the Laplacian matrix of $\mathcal{G}$, and $\mathbf{L}_s$ (resp., $\mathbf{L}_t$) denotes the submatrix of $\mathbf{L}$, deleting the row and column indexed by node $s$ (resp., $t$).
\end{proposition}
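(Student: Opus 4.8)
The first equality is the definition of effective resistance, so the only content is the identity $(\mathbf{e}_s-\mathbf{e}_t)^T\mathbf{L}^\dagger(\mathbf{e}_s-\mathbf{e}_t)=(\mathbf{L}_t^{-1})_{ss}$; the $(\mathbf{L}_s^{-1})_{tt}$ form then follows from the symmetry $r_\mathcal{G}(s,t)=r_\mathcal{G}(t,s)$. The plan is the classical ``ground the sink $t$'' argument. Since $\mathcal{G}$ is connected, $\mathbf{L}\succeq 0$ has kernel exactly $\mathrm{span}(\mathbf{1})$ with $\mathbf{1}$ of full support, so the principal submatrix $\mathbf{L}_t$ (delete row/column $t$) is positive definite, and in particular invertible.

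I would then exhibit an explicit solution of the electrical-flow equation $\mathbf{L}\mathbf{z}=\mathbf{e}_s-\mathbf{e}_t$, namely the vector $\mathbf{z}\in\mathbb{R}^n$ with $\mathbf{z}(t)=0$ and $\mathbf{z}|_{\mathcal{V}\setminus\{t\}}=\mathbf{L}_t^{-1}\mathbf{e}_s$ (legitimate because $s\neq t$, so $\mathbf{e}_s$ restricts to a genuine vector on $\mathcal{V}\setminus\{t\}$); this $\mathbf{z}$ is nothing but the vertex potentials when a unit current is pushed from $s$ to $t$ with $t$ held at $0$. To check $\mathbf{L}\mathbf{z}=\mathbf{e}_s-\mathbf{e}_t$: for each row $v\neq t$ the column-$t$ entry of $\mathbf{L}$ is killed by $\mathbf{z}(t)=0$, leaving $(\mathbf{L}\mathbf{z})_v=(\mathbf{L}_t\mathbf{L}_t^{-1}\mathbf{e}_s)_v=(\mathbf{e}_s)_v$; and because $\mathbf{1}^T\mathbf{L}=\mathbf{0}$ we get $(\mathbf{L}\mathbf{z})_t=-\sum_{v\neq t}(\mathbf{L}\mathbf{z})_v=-\sum_{v\neq t}(\mathbf{e}_s)_v=-1=(\mathbf{e}_s-\mathbf{e}_t)_t$ (using $s\neq t$). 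Hence $\mathbf{L}\mathbf{z}=\mathbf{e}_s-\mathbf{e}_t$, and since $\mathbf{L}^\dagger(\mathbf{e}_s-\mathbf{e}_t)$ is also a solution, the two differ by an element of $\ker\mathbf{L}$: $\mathbf{z}=\mathbf{L}^\dagger(\mathbf{e}_s-\mathbf{e}_t)+c\mathbf{1}$ for some scalar $c$.

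Finally, using $(\mathbf{e}_s-\mathbf{e}_t)^T\mathbf{1}=0$,
\begin{align*}
r_\mathcal{G}(s,t)&=(\mathbf{e}_s-\mathbf{e}_t)^T\mathbf{L}^\dagger(\mathbf{e}_s-\mathbf{e}_t)=(\mathbf{e}_s-\mathbf{e}_t)^T(\mathbf{z}-c\mathbf{1})=(\mathbf{e}_s-\mathbf{e}_t)^T\mathbf{z}\\
&=\mathbf{z}(s)-\mathbf{z}(t)=\mathbf{z}(s)=(\mathbf{L}_t^{-1}\mathbf{e}_s)_s=(\mathbf{L}_t^{-1})_{ss},
\end{align*}
and interchanging the roles of $s$ and $t$ gives $(\mathbf{L}_s^{-1})_{tt}$. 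There is essentially no hard step: the only points needing care are the invertibility of $\mathbf{L}_t$ (which is where connectedness of $\mathcal{G}$ enters, the standing assumption) and the bookkeeping of the removed coordinate $t$ when verifying $\mathbf{L}\mathbf{z}=\mathbf{e}_s-\mathbf{e}_t$. The identity is classical and is quoted here as Corollary~1 of~\cite{liao2023resistance}; an alternative route is to expand $\mathbf{L}^\dagger$ through the $2\times 2$ block decomposition of $\mathbf{L}$ with $t$ as one block and simplify with the matrix-inversion lemma, but the potential-vector argument above is the most direct.
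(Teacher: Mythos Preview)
Your argument is correct and is the standard ``ground the sink'' proof of this classical identity. Note that the paper does not supply its own proof of this proposition: it is quoted verbatim as Corollary~1 of~\cite{liao2023resistance} and used as a black box inside the proof of Lemma~\ref{lem:parallel_resistance}, so there is no paper proof to compare against. Your derivation---solve $\mathbf{L}\mathbf{z}=\mathbf{e}_s-\mathbf{e}_t$ explicitly with $\mathbf{z}(t)=0$, verify via $\mathbf{1}^T\mathbf{L}=0$, and read off $\mathbf{z}(s)$---is the textbook route and is fully self-contained; the only implicit hypothesis you invoke (connectedness of $\mathcal{G}$, needed for $\mathbf{L}_t\succ 0$) is indeed the standing assumption throughout the paper.
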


\begin{proposition}\label{prop:block_inverse}
      Let $S=D-CA^{-1}B$, we have:
      \begin{align*}
      \begin{bmatrix}
       \begin{array}{cc}
       A & B\\
       C & D
        \end{array}
        \end{bmatrix}^{-1}=
        \begin{bmatrix}
       \begin{array}{cc}
       A^{-1}+A^{-1}BS^{-1}CA^{-1} & -A^{-1}BS^{-1}\\
       -S^{-1}CA^{-1} & S^{-1}
        \end{array}
        \end{bmatrix}
        \end{align*}
\end{proposition}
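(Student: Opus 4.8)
The plan is to prove this as the classical Schur‑complement (block) inversion formula, by reducing it to solving a $2\times 2$ block linear system via block Gaussian elimination; the claimed matrix then appears directly as the matrix of the solution map. First I would record the standing hypotheses that are implicit in the statement, namely that $A$ is invertible and that the Schur complement $S=D-CA^{-1}B$ is invertible, which is exactly what is needed for $A^{-1}$ and $S^{-1}$ to make sense and for the big block matrix to be invertible.

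Concretely, I would partition an arbitrary right‑hand side conformally and consider the system $\begin{bmatrix} A & B \\ C & D \end{bmatrix}\begin{bmatrix} x \\ y \end{bmatrix}=\begin{bmatrix} u \\ v \end{bmatrix}$, which unpacks to $Ax+By=u$ and $Cx+Dy=v$. From the first block row, invertibility of $A$ gives $x=A^{-1}u-A^{-1}By$. Substituting into the second block row and collecting the terms in $y$ yields $(D-CA^{-1}B)y=v-CA^{-1}u$, i.e. $Sy=v-CA^{-1}u$, so $y=S^{-1}v-S^{-1}CA^{-1}u$. Back‑substituting then gives $x=(A^{-1}+A^{-1}BS^{-1}CA^{-1})u-A^{-1}BS^{-1}v$. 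Reading off the coefficients of $u$ and $v$ shows that $\begin{bmatrix} x \\ y\end{bmatrix}$ equals the claimed block matrix applied to $\begin{bmatrix} u \\ v\end{bmatrix}$; since $(u,v)$ is arbitrary, that block matrix is the inverse of $\begin{bmatrix} A & B \\ C & D\end{bmatrix}$.

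As a self‑contained alternative (and a sanity check I would include in one line), one can instead multiply the big block matrix on the right by the claimed inverse and verify that the four resulting blocks collapse to $I$, $0$, $0$, $I$ respectively, using only $AA^{-1}=I$, $SS^{-1}=I$, and the identity $D-CA^{-1}B=S$; for instance the bottom‑right block is $-CA^{-1}BS^{-1}+DS^{-1}=(D-CA^{-1}B)S^{-1}=SS^{-1}=I$, and the bottom‑left block is $CA^{-1}-(D-CA^{-1}B)S^{-1}CA^{-1}=CA^{-1}-CA^{-1}=0$. There is no real obstacle here — the result is standard and the proof is a routine two‑step elimination; the only points requiring care are stating the invertibility assumptions on $A$ and $S$ explicitly and keeping all block products conformal.
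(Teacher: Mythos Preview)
Your argument is correct: the block Gaussian elimination you describe (solve for $x$ from the first block row, substitute into the second to isolate $Sy$, then back-substitute) is the standard derivation, and your one-line verification by direct multiplication is also valid. The paper does not actually prove this proposition; it is quoted as the ``classic block matrix inverse formula'' and used without justification, so your proposal supplies strictly more than the paper does.
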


Now we use Proposition \ref{prop:landmark} and Proposition \ref{prop:block_inverse} to prove Lemma \ref{lem:parallel_resistance}. We observe that the resistance $r_{S_2}(s,t)$ is simply the the parallel connection of $(d_s-x)$ resistors with resistance value $2$. Thus, by the parallel resistance formula, $r_{S_2}(s,t)=\frac{2}{d_s-x}$. For $r_{S_1}(s,t)$, we define $\mathbf{L}_{S_1\cup\{s,t\}}\in \mathbb{R}^{(n_1+2)\times(n_1+2)}$ as the Laplacian matrix constraint on the edges between $S_1\cup \{s,t\}$, $\mathbf{L}_{S_1}\in \mathbb{R}^{n_1\times n_1}$ as the Laplacian matrix constraint on the edges between $S_1$. Let $\mathbf{e}_{\mathcal{N}_1}$ denotes the indicating vector that only takes value $1$ in $\mathcal{N}_1$ and $0$ elsewhere, $\mathbf{I}_{\mathcal{N}_1}$ denotes the diagonal matrix that only takes value $1$ at diagonal entries indexed by $\mathcal{N}_1$ and $0$ elsewhere, where $ \mathcal{N}_1=\{v_1,...,v_x\}$ is the set of the neighbors of $s$ in $S_1$. By the definition of ER and the submatrix representation (Proposition \ref{prop:landmark}), we have:
\begin{align*}
    r_{S_1}(s,t)&=(\mathbf{e}_s-\mathbf{e}_t)^T\mathbf{L}_{S_1\cup\{s,t\}}^\dagger(\mathbf{e}_s-\mathbf{e}_t)
    =\begin{bmatrix}
       \begin{array}{cc}
       x & -\mathbf{e}_{\mathcal{N}_1}^T\\
       -\mathbf{e}_{\mathcal{N}_1} & \mathbf{L}_{S_1}+\mathbf{I}+\mathbf{I}_{\mathcal{N}_1}
        \end{array}
        \end{bmatrix}_{s,s}^{-1}
\end{align*}

Next, we use the block matrix inverse formula (Proposition \ref{prop:block_inverse}), we have:
\begin{equation}\label{equ:compute_lower_bound}
\begin{aligned}
    & \begin{bmatrix}
       \begin{array}{cc}
       x & -\mathbf{e}_{\mathcal{N}_1}^T\\
       -\mathbf{e}_{\mathcal{N}_1} & \mathbf{L}_{S_1}+\mathbf{I}+\mathbf{I}_{\mathcal{N}_1}
        \end{array}
        \end{bmatrix}_{s,s}^{-1}
        =\frac{1}{x}+\frac{1}{x^2}\mathbf{e}_{\mathcal{N}_1}^T\left[\mathbf{L}_{S_1}+\mathbf{I}+\mathbf{I}_{\mathcal{N}_1}-\frac{1}{x}\mathbf{e}_{\mathcal{N}_1}\mathbf{e}_{\mathcal{N}_1}^T\right]^{-1}\mathbf{e}_{\mathcal{N}_1}
\end{aligned}
\end{equation}

Now we prove the second term of the right hand side of Eq. (\ref{equ:compute_lower_bound}) is at most $O(\frac{\log n_1}{dx})$. To this end, we notice that $|\mathcal{N}_1|=x$ and $\mathbf{I}_{\mathcal{N}_1}-\frac{1}{x}\mathbf{e}_{\mathcal{N}_1}\mathbf{e}_{\mathcal{N}_1}^T\succeq 0$, thus
\begin{align*}
    \mathbf{L}_{S_1}+\mathbf{I}+\mathbf{I}_{\mathcal{N}_1}-\frac{1}{x}\mathbf{e}_{\mathcal{N}_1}\mathbf{e}_{\mathcal{N}_1}^T\succeq \mathbf{L}_{S_1}+\mathbf{I}.
\end{align*}

Next, we define $\mathbf{A}_{S_1},\mathbf{P}_{S_1}\in \mathbb{R}^{n_1\times n_1}$ to be the adjacency matrix and probability transition matrix constraint on the edges between $S_1$, respectively. By our construction, we have
\begin{align*}
    \mathbf{L}_{S_1}+\mathbf{I}&=(d+1)\mathbf{I}-\mathbf{A}_{S_1}=(d+1)\mathbf{I}-d\mathbf{P}_{S_1}\\
    &=(d+1)\left[\mathbf{I}-\left(1-\frac{1}{d+1}\right)\mathbf{P}_{S_1}\right].
\end{align*}
Therefore, we can bound:
\begin{equation}\label{equ:compute_lower_bound_2}
\begin{aligned}
    0 \leq \mathbf{e}_{\mathcal{N}_1}^T\left[\mathbf{L}_{S_1}+\mathbf{I}+\mathbf{I}_{\mathcal{N}_1}-\frac{1}{x}\mathbf{e}_{\mathcal{N}_1}\mathbf{e}_{\mathcal{N}_1}^T\right]^{-1}\mathbf{e}_{\mathcal{N}_1} 
    &\leq 
    \frac{1}{d+1}\mathbf{e}_{\mathcal{N}_1}^T\left[\mathbf{I}-\left(1-\frac{1}{d+1}\right)\mathbf{P}_{S_1}\right]^{-1}\mathbf{e}_{\mathcal{N}_1}\\
    &=\frac{1}{d+1}\sum_{k=0}^{\infty}{\left(1-\frac{1}{d+1}\right)^k\mathbf{e}_{\mathcal{N}_1}^T\mathbf{P}_{S_1}^{k}\mathbf{e}_{\mathcal{N}_1}}.
\end{aligned}
\end{equation}

Next we use the classical random walk mixing result. For expander graph $S_1$, if we start a random walk with initial probability $\mathbf{\pi}_{\mathcal{N}_1}=\frac{1}{|\mathcal{N}_1|} \mathbf{e}_{\mathcal{N}_1}$, after $K=O(\log n_1)$ steps, $\left\Vert \mathbf{P}_{S_1}^K\mathbf{\pi}_{\mathcal{N}_1}-\frac{\mathbf{1}}{n}\right\Vert_2\leq \frac{1}{n^2}$. Combing this result with Eq. (\ref{equ:compute_lower_bound_2}), we have:
\begin{align*}
&\mathbf{e}_{\mathcal{N}_1}^T\left[\mathbf{L}_{S_1}+\mathbf{I}+\mathbf{I}_{\mathcal{N}_1}-\frac{1}{x}\mathbf{e}_{\mathcal{N}_1}\mathbf{e}_{\mathcal{N}_1}^T\right]^{-1}\mathbf{e}_{\mathcal{N}_1}\\
&\leq 
\frac{1}{d+1}\left[\sum_{k=0}^{K}{\left(1-\frac{1}{d+1}\right)^k\mathbf{e}_{\mathcal{N}_1}^T\mathbf{P}_{S_1}^{k}\mathbf{e}_{\mathcal{N}_1}} + \sum_{k=K}^{\infty}{\left(1-\frac{1}{d+1}\right)^k\mathbf{e}_{\mathcal{N}_1}^T\mathbf{P}_{S_1}^{k}\mathbf{e}_{\mathcal{N}_1}}\right]\\
&\leq 
\frac{1}{d+1}\left[K|\mathcal{N}_1|+\frac{\left(1-\frac{1}{d+1}\right)^{K+1}}{1-\left(1-\frac{1}{d+1}\right)}|\mathcal{N}_1|^2\frac{2}{n_1}\right]\\
&\leq \frac{x}{d+1}O(\log n_1) + \frac{2x^2}{n_1}.
\end{align*}

Finally, we set $n_1\geq 2d^3\geq 2d x^2$, the above inequality is bounded by $O(\frac{x\log n_1}{d})$. Combining this result with Eq. (\ref{equ:compute_lower_bound}), we obtain $r_{S_1}(s,t)=\frac{1}{x}+O(\frac{1}{dx}\log n_1 )$. This finishes the proof.\hfill$\blacksquare$\par

\stitle{Proof of Lemma \ref{lem:G1G2_differ}.}\hypertarget{proof_of_lemma:G1G2_differ}{}
    Recall that by our construction, for $\mathcal{G}_1$ all neighbors of $s$ is in $S_1$. Thus, $r_{\mathcal{G}_1}(s,t)=r_{S_1}(s,t)=\frac{1}{d_s}+O(\frac{\log n_1}{d d_s})=\frac{1}{d_s}+O(\frac{1}{d_s^2})$. This is equivalently saying $d_s r_{\mathcal{G}_1}(s,t) = 1+ O(\frac{1}{d_s})$. On the other hand, for $\mathcal{G}_2$, the neighbors of $s$ are $x=(1-\epsilon)d_s$ in $S_1$ and $\epsilon d_s$ in $S_2$. By the parallel resistance formula and Lemma \ref{lem:parallel_resistance},
    \begin{align*}
        r_{\mathcal{G}_2}(s,t) &= \frac{r_{S_1}(s,t)r_{S_2}(s,t)}{r_{S_1}(s,t)+r_{S_2}(s,t)} = \frac{\frac{2}{d_s-x}\left(\frac{1}{x}+O(\frac{1}{d_s x})\right)}{\frac{2}{d_s-x}+\frac{1}{x}+O(\frac{1}{d_s x})}\\
        &=\frac{2+O(\frac{1}{d_s})}{d+x+O(\frac{d_s-x}{d_s})}=\frac{2+O(\frac{1}{d_s})}{(2-\epsilon)d_s+O(\epsilon)}.
    \end{align*}

Thus, we have
\begin{align*}
   d_s r_{\mathcal{G}_2}(s,t) &= \frac{2d_s\left(1+O(\frac{1}{d_s})\right)}{(2-\epsilon)d_s+O(\epsilon)}= \frac{2}{2-\epsilon}\frac{1+O(\frac{1}{d_s})}{1+O(\frac{\epsilon}{d_s})}\\
   &=\frac{2}{2-\epsilon}\left(1+O(\frac{1}{d_s})\right).
\end{align*}

Finally, we set $d_s\gg \frac{1}{\epsilon}$, thus $d_s r_{\mathcal{G}_2}(s,t)=\frac{2}{2-\epsilon}\left(1+o(\epsilon)\right)$. Therefore, 
\begin{align*}
   | r_{\mathcal{G}_1}(s,t)-r_{\mathcal{G}_2}(s,t)|&=\frac{1}{d_s}\left[\frac{2}{2-\epsilon}-1+o(\epsilon)\right]=\frac{1}{d_s}\left[\frac{\epsilon}{2-\epsilon}+o(\epsilon)\right]\\
   &\geq \frac{\epsilon}{2}\frac{1}{d_s}\geq \frac{\epsilon}{4}r_{\mathcal{G}_1}(s,t).
\end{align*}

    The final inequality holds because $r_{\mathcal{G}_1}(s,t)=\frac{1}{d_s}+O(\frac{1}{d_s^2})\leq \frac{2}{d_s}$. This finishes the proof.\hfill$\blacksquare$\par

\stitle{Proof of Lemma \ref{lem:G1G2_distinguish}.}\hypertarget{proof_of_lemma:G1G2_distinguish}{}
Through our construction, we observe that the only way to distinguish between $\mathcal{G}_1$ and $\mathcal{G}_2$ is to perform a neighbor query on node $s$. For $\mathcal{G}_1$, we have all the neighbors of $s$, $\mathcal{N}_{\mathcal{G}_1}(s)\subset S_1$. For $\mathcal{G}_2$, we have $\mathcal{N}_1\subset S_1$ with $|\mathcal{N}_1|=(1-\epsilon)d_s$, $\mathcal{N}_2\subset S_2$ with $|\mathcal{N}_2|=\epsilon d_s$ and $\mathcal{N}_1\cup \mathcal{N}_2=\mathcal{N}_{\mathcal{G}_2}(s)$ is the neighbor set of $s$. Thus, distinguishing $\mathcal{G}_1$ and $\mathcal{G}_2$ via a neighbor query on $s$ reduces to determining whether the elements of $\mathcal{N}_{\mathcal{G}_1}(s), \mathcal{N}_{\mathcal{G}_2}(s)$ follow the same distribution. We prove this result using the classical Yao's Minimax Principle.

\begin{theorem}{(Yao's Minimax Principle, see e.g. Lemma D.1 from ~\cite{cai2023effective})}\label{thm:minimax_principle}
    Let $\mathcal{X}$ denotes the set of inputs of a problem, $\Delta(\mathcal{X})$ denotes the set of all distributions over $\mathcal{X}$. Let $\mathcal{A}_\mu$ denotes minimum cost among all deterministic algorithms that solves the problem with probability $\geq 2/3$, with respect to the distribution $\mu\in \Delta(\mathcal{X})$. Let $\mathcal{R}$ denotes the minimum cost among all randomized algorithms that solves the problem for all $x\in \mathcal{X}$. Then, $\mathcal{R}\geq \max_{\mu\in \mathcal{X}}{\mathcal{A}_\mu}$.
\end{theorem}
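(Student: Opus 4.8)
The plan is to prove the (easy) inequality direction of the minimax principle by the standard argument that a randomized algorithm is a distribution over deterministic algorithms, which one then averages against the input distribution. I would start by fixing an optimal randomized algorithm $R$ witnessing the value $\mathcal{R}$: on every input $x\in\mathcal{X}$, $R$ is correct with probability $\geq 2/3$ over its internal coins, and on every input it makes at most $\mathcal{R}$ queries (a worst-case resource bound, which is the natural notion for query complexity). Write $\rho$ for the random string consumed by $R$, with distribution $\mathcal{D}$, and let $A_\rho$ be the deterministic algorithm obtained by hard-wiring $\rho$ into $R$. Then running $R$ is exactly: sample $\rho\sim\mathcal{D}$, run $A_\rho$; and each $A_\rho$ still makes at most $\mathcal{R}$ queries on every input, since that bound held for $R$ regardless of its coins.

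Next I would fix an arbitrary input distribution $\mu\in\Delta(\mathcal{X})$ and bound the probability that a random $A_\rho$ is correct on a $\mu$-random input:
\begin{align*}
\mathbb{E}_{\rho\sim\mathcal{D}}\Big[\Pr_{x\sim\mu}\big[A_\rho \text{ correct on } x\big]\Big]
&= \Pr_{\rho\sim\mathcal{D},\,x\sim\mu}\big[A_\rho \text{ correct on } x\big]\\
&= \mathbb{E}_{x\sim\mu}\Big[\Pr_{\rho\sim\mathcal{D}}\big[A_\rho \text{ correct on } x\big]\Big] \;\geq\; \frac{2}{3},
\end{align*}
where the two equalities are Fubini and the inequality uses that $R$ is correct with probability $\geq 2/3$ on \emph{every fixed} $x$. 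By averaging, some string $\rho^\star$ satisfies $\Pr_{x\sim\mu}[A_{\rho^\star}\text{ correct}]\geq 2/3$. Since $A_{\rho^\star}$ is deterministic, makes at most $\mathcal{R}$ queries on every input, and solves the problem with probability $\geq 2/3$ under $\mu$, it is a feasible competitor in the definition of $\mathcal{A}_\mu$, so $\mathcal{A}_\mu\leq\mathcal{R}$. As $\mu$ was arbitrary, $\mathcal{R}\geq\max_{\mu\in\Delta(\mathcal{X})}\mathcal{A}_\mu$, as claimed.

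I expect the only delicate point to be the treatment of the cost measure. The argument above is clean precisely because the query budget is a hard worst-case bound, so every $A_\rho$ inherits it and the $\rho^\star$ extracted from the success-probability averaging automatically has cost $\leq\mathcal{R}$. If one instead formulates cost in expectation, I would additionally apply Markov's inequality, $\Pr_\rho[\,\mathbb{E}_{x\sim\mu}[\mathrm{cost}(A_\rho,x)]>3\mathcal{R}\,]<1/3$, and first boost $R$'s success probability to $1-o(1)$ by repetition-and-majority (a constant-factor cost blow-up), so that the failure event $\Pr_\rho[\Pr_{x\sim\mu}[A_\rho\text{ correct}]<2/3]$ is below $1/3$ as well; the two bad events then cannot cover all of $\rho$-space and a single good $\rho^\star$ survives, affecting only constants. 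A minor technical point is that ``randomized algorithm $=$ distribution over deterministic algorithms'' presumes the random tape lives on a well-defined probability space, which is routine since on a fixed input only finitely many bits are read within $\mathcal{R}$ queries; I would dispatch this in a sentence. I would also note explicitly that the statement asks only for the inequality $\mathcal{R}\geq\max_\mu\mathcal{A}_\mu$, not equality, so no appeal to von Neumann's minimax theorem, LP duality, or compactness of $\Delta(\mathcal{X})$ is needed.
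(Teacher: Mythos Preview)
Your argument is correct and is the standard proof of the easy direction of Yao's principle: view a randomized algorithm as a distribution over deterministic ones, swap expectations via Fubini, and extract by averaging a deterministic algorithm that is correct with probability $\geq 2/3$ under the given input distribution $\mu$ while inheriting the cost bound $\mathcal{R}$. Your remarks on the worst-case versus expected cost models and on the sufficiency of the inequality (no need for LP duality or von Neumann's theorem) are all to the point.

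However, there is nothing to compare against: the paper does \emph{not} prove this theorem. It merely states Yao's Minimax Principle with a citation (``see e.g.\ Lemma D.1 from~\cite{cai2023effective}'') and then invokes it as a black box in the proof of Lemma~\ref{lem:G1G2_distinguish}. So your proposal supplies a proof where the paper provides none; it is correct, self-contained, and exactly the textbook argument one would expect.
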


In our case, we choose the problem inputs $\mathcal{X}=\{\mathcal{N}(s)\}$, and we consider the following two events $C_1,C_2\subset \mathcal{X}$:
\begin{align*}
C_1=\{&\mathcal{N}(s):\forall v\in \mathcal{N}(s), v\in S_1\}\\
C_2=\{&\mathcal{N}(s):\mathcal{N}(s)=\mathcal{N}_1\cup \mathcal{N}_2, \\
&\mathcal{N}_1\subset S_1,
\mathcal{N}_2\subset S_2,\ and\  |\mathcal{N}_1|=(1-\epsilon)|\mathcal{N}(s)| \}
\end{align*}

Next, we consider the distribution $\mu$ over $\mathcal{X}$, such that: (i) $C_1$ happens with probability $\frac{1}{2}$ and $\forall x \in C_1\subset \mathcal{X}$ happens with equal probability; (ii) $C_2$ happens with probability $\frac{1}{2}$ and $\forall x \in C_2\subset \mathcal{X}$ happens with equal probability. We consider any property testing deterministic algorithm $A$, such that $A$ should answer YES if there are $\geq \epsilon|\mathcal{N}(s)|$ number of nodes $v\in \mathcal{N}(s)$ in $S_2$; $A$ should answer NO if there are $< \epsilon|\mathcal{N}(s)|$ number of nodes $v\in \mathcal{N}(s)$ in $S_2$. We denote $A(C_1)$, $A(C_2)$ as the random variable, which represents the answer of $A$ on some $x\in  C_1$ uniformly randomly (resp., $x\in  C_2$ uniformly randomly). We consider when $A$ succeeds with probability $\geq 2/3$:
\begin{align*}
    \frac{2}{3}\leq \mathbb{P}_{\mu}[A\ \ succeeds]=\frac{1}{2} \mathbb{P}[A(C_1)=NO]+\frac{1}{2} \mathbb{P}[A(C_2)=YES].
\end{align*}

Since $C_1$ is fixed, we only need to consider the second probability $\mathbb{P}[A(C_2)=YES]$. We assume $A$ makes $q$ queries on $C_2$. Then,
\begin{equation}\label{equ:AC2}
\begin{aligned}
    \mathbb{P}[A(C_2)=YES]&=\mathbb{P}[A(C_2)=YES | A(C_1)=A(C_2)]\mathbb{P}[A(C_1)=A(C_2)]\\
    &+\mathbb{P}[A(C_2)=YES | A(C_1)\neq A(C_2)]\mathbb{P}[A(C_1)\neq A(C_2)]\\
    &\leq \mathbb{P}[A(C_1)=YES]+\mathbb{P}[A(C_1)\neq A(C_2)]\\
    &= \mathbb{P}[A(C_1)=YES] + \mathbb{P}[one\ query\ returns\ v\in S_2 ]\\
    &\leq 1-\mathbb{P}[A(C_1)=NO] + \epsilon q.
\end{aligned}
\end{equation}

The final inequality holds because there are only $\epsilon|\mathcal{N}(s)|$ randomly chosen elements in $S_2$. For each fixed query, the probability that we find some $v\in S_2$ is at most $\epsilon$. Arming with Eq. (\ref{equ:AC2}), we have:
\begin{align*}
    \frac{2}{3}&\leq \frac{1}{2} \mathbb{P}[A(C_1)=NO]+\frac{1}{2} \mathbb{P}[A(C_2)=YES]\leq \frac{1}{2}+\frac{1}{2}\epsilon q.
\end{align*}

As a result, $q\geq \frac{1}{3\epsilon}$. This implies $A$ makes at least $\Omega(\frac{1}{\epsilon})$ queries on $C_2$. By Yao's Minimax Principle, any randomized algorithm with success probability $\geq 2/3$ that distinguish $\mathcal{N}_{\mathcal{G}_1}(s)$ and $\mathcal{N}_{\mathcal{G}_2}(s)$ requires $\Omega(\frac{1}{\epsilon})$ queries. This finishes the proof of Theorem \ref{thm:lower_bound_specific}.\hfill$\blacksquare$\par

\end{document}